\documentclass[11pt]{article}
\usepackage{fullpage}
\usepackage[T1]{fontenc}
\usepackage{lmodern}
\usepackage{amsmath,amssymb,amsthm,mathtools}
\usepackage{booktabs,array,microtype,needspace}
\usepackage{authblk}
\usepackage{tikz}
\usetikzlibrary{positioning,shapes.geometric,calc}
\usepackage{enumitem}
\usepackage[mathlines]{lineno}
\newif\ifreviewversion
\reviewversionfalse 
\ifreviewversion\linenumbers\fi
\usepackage[numbers,sort&compress]{natbib}
\usepackage[hidelinks]{hyperref}
\usepackage{bookmark}

\newcommand{\doi}[1]{\href{https://doi.org/#1}{\nolinkurl{doi:#1}}}
\allowdisplaybreaks[2]
\newtheorem{theorem}{Theorem}[section]
\newtheorem{lemma}[theorem]{Lemma}
\newtheorem{proposition}[theorem]{Proposition}
\newtheorem{corollary}[theorem]{Corollary}
\theoremstyle{definition}
\newtheorem{definition}[theorem]{Definition}
\theoremstyle{remark}
\newtheorem{remark}[theorem]{Remark}

\newcommand{\set}[1]{\{ #1 \}}
\newcommand{\cost}[1]{\#_{;}(#1)}
\newcommand{\gcost}[1]{g_{;}(#1)}
\newcommand{\CoR}{\ensuremath{\mathrm{CoR}}}

\newcommand{\FOthree}{\ensuremath{\mathrm{FO}^{3}}}
\newcommand{\FOfour}{\ensuremath{\mathrm{FO}^{4}}}
\newcommand{\zero}{\mathbf{0}}             
\newcommand{\one}{\mathbf{1}}              
\newcommand{\id}{\mathbf{1}'}              

\newcommand{\bool}{\mathbb B}
\newcommand{\wid}[1]{\mathsf{sn}(#1)}      

\newcommand{\Col}{\Lambda}                 
\newcommand{\colr}{\chi}                   
\newcommand{\cell}[1]{C_{#1}}              
\newcommand{\Comp}{\mathcal C}             
\newcommand{\ctype}[1]{\tau_{#1}}          
\newcommand{\tpa}[2]{\tau_{#1,#2}}         
\newcommand{\tpl}[1]{\tau_{#1,\ast}}       
\newcommand{\tpr}[1]{\tau_{\ast,#1}}       
\newcommand{\tpeq}{\tau_{=}}               
\newcommand{\tp}{\operatorname{tp}}        
\newcommand{\agg}[1]{\mathbin{;_{#1}}}     
\newcommand{\res}{\mathord{\restriction}}  
\newcommand{\M}{\mathfrak M}               
\newcommand{\A}{\mathfrak A}               
\newcommand{\Det}{\Lambda_0}               
\newcommand{\breadth}{\beta}               
\newcommand{\cd}[1]{e_{#1}}                
\newcommand{\cdb}{e_{\bot}}                
\newcommand{\prof}{\operatorname{prof}}    
\newcommand{\num}{\operatorname{num}}      
\newcommand{\ind}{\operatorname{ind}}      

\title{An Exponential Succinctness Gap between\protect\\Three-Variable Logic and the Calculus of Relations}

\author{Yuya Uezato}
\affil{CyberAgent, Inc., uezato\_yuya@cyberagent.co.jp}

\date{}
\begin{document}
\maketitle
\begin{abstract}
Three-variable first-order logic ($\mathrm{FO}^{3}$) and the calculus of relations ($\mathrm{CoR}$) define the same binary queries, an equivalence going back to Tarski in the 1940s.
While the classical translation $\mathrm{FO}^{3}\Rightarrow\mathrm{CoR}$ is exponential, we prove that this blow-up is unavoidable, resolving a long-standing open question.
We construct positive formulas $\varphi$ with a single quantifier whose equivalent terms require size $2^{\Omega(|\varphi|)}$, even over finite structures and circuit representations with subterm sharing.
Our proof uses a preservation argument over a single finite structure.
This approach applies beyond our primary question, establishing the lower bound even for size-specific circuits and bounded-error randomized circuits, and yielding an analogous exponential gap for the matrix query language MATLANG.
\end{abstract}
\section{Introduction}
Three-variable first-order logic ($\FOthree$) over binary relations and the
calculus of relations
(\CoR) define the same binary queries, an equivalence going back to
Tarski's 1940s work~\citep{tarski-givant1987}. \CoR\ is the algebra of
binary relations under union $P\cup Q$, intersection $P\cap Q$, complement
$\overline P$, converse $P^\smile=\set{(y,x):(x,y)\in P}$, and the
composition, or relative product,
$P;Q=\set{(x,y):\exists z.\,(x,z)\in P,\ (z,y)\in Q}$. \CoR\ has no
individual variables, so quantifier elimination has to go through the
composition:
\begin{equation}\label{eq:basic-identity}
 \exists z\,\bigl(P(x,z)\land Q(z,y)\bigr)\iff(P;Q)(x,y).
\end{equation}
Using \eqref{eq:basic-identity} together with bottom-up disjunctive normal
form (DNF)
conversion~\citep{hilbert-ackermann1928,kleene1952,schoening1989},
$\FOthree$ formulas can be translated into \CoR\ with a single-exponential
blow-up. Both were in place before Tarski took up the question
(see Section~\ref{sec:relwork}). For an example, fix $k\ge2$ and consider the
$\FOthree$ formula
\begin{equation}\label{eq:omega}
\Psi_k(x,y) := \exists z\,\bigwedge_{i=1}^k
       \bigl(R_i(x,z)\lor R_i(z,y)\bigr),
\end{equation}
which asks for a single witness $z$ satisfying $R_i(x,z)\lor R_i(z,y)$ for
every $i\in\{1,\dots,k\}$. To apply \eqref{eq:basic-identity}, the body
under the quantifier must be factored into the form
$\alpha(x,z)\land\beta(z,y)$. Thus, by DNF conversion, the body of
\eqref{eq:omega} expands into $2^k$ disjuncts, one for each subset
$I\subseteq[k]$, the disjunct for $I$ asserting that $R_i(x,z)$ holds for
$i\in I$ and $R_i(z,y)$ for $i\notin I$. Applying
\eqref{eq:basic-identity} to each disjunct gives an equivalent \CoR\ term of
size $2^{O(k)}$,
\begin{equation}\label{eq:omega-upper}
 \CoR_{\Psi_k} := \bigcup_{I\subseteq[k]}
 \biggl(\Bigl(\bigcap_{i\in I}R_i\Bigr);\Bigl(\bigcap_{i\notin I}R_i\Bigr)\biggr),
\end{equation}
where an empty intersection denotes the full relation $\one$, which holds at
every pair. For $k=3$ the eight choices of $I$ give eight disjuncts and
eight compositions:
\[
\renewcommand{\arraystretch}{1.15}
\begin{array}{c@{\qquad}l@{\qquad}l}
 I&\text{disjunct of the DNF}&\text{term}\\\hline
 \{1,2,3\}&R_1(x,z)\land R_2(x,z)\land R_3(x,z)&(R_1\cap R_2\cap R_3);\one\\
 \{1,2\}&R_1(x,z)\land R_2(x,z)\land R_3(z,y)&(R_1\cap R_2);R_3\\
 \{1,3\}&R_1(x,z)\land R_3(x,z)\land R_2(z,y)&(R_1\cap R_3);R_2\\
 \ \ \vdots&\qquad\qquad\vdots&\qquad\ \ \vdots\\
 \varnothing&R_1(z,y)\land R_2(z,y)\land R_3(z,y)&\one;(R_1\cap R_2\cap R_3)
\end{array}
\]
That the translation is exponential in general is classical.
\begin{proposition}[well known; see Section~\ref{sec:relwork}]\label{prop:upper-translation}
Every \FOthree\ formula $\varphi$ over binary relations with at most two
free variables is
equivalent, on all structures, to a \CoR\ term of size $2^{O(|\varphi|)}$,
where $|\cdot|$ counts the nodes of a syntax tree.
\end{proposition}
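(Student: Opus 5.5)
We prove the statement by structural induction on $\varphi$, using the three variables $x,y,z$. Rather than only the formula with free variables among $\{x,y\}$, we translate every subformula $\psi$ whose free variables lie in some two-element set $\{u,v\}\subseteq\{x,y,z\}$. For each such $\psi$ we construct a \CoR\ term $t_\psi$ with
\[
\psi(u,v)\iff (u,v)\in t_\psi .
\]
A pair of distinct variables fixes an orientation. For a subformula with fewer than two free variables we pick a dummy partner; a unary $\psi(u)$ is then represented by $t\cap\id$ or by $t;\one$. Changing the orientation costs a converse, and renaming the free pair costs nothing, because the term is variable-free. The invariant is $|t_\psi|\le 2^{c|\psi|}$ for a fixed constant $c$.

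Atoms and connectives are handled as follows.
\begin{itemize}
\item An atom $R(u,v)$ becomes $R$, an atom $R(v,u)$ becomes $R^\smile$, and $u=v$ becomes $\id$.
\item An atom $R(u,u)$ becomes $(R\cap\id);\one$, suitably oriented.
\item Negation becomes complement, together with a restriction to $\id$ when $\psi$ is unary.
\item The Boolean connectives $\land$ and $\lor$ become $\cap$ and $\cup$, after both sides are brought into the same orientation.
\end{itemize}
In each of these cases the size grows by at most an additive constant plus the sizes of the parts. Since $2^{ca}+2^{cb}+O(1)\le 2^{c(a+b+1)}$ for $c$ large, the invariant is preserved.

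The only nontrivial case is a quantifier $\exists w\,\theta$. Here $\theta$ may have all three variables free, so the induction hypothesis gives nothing directly for $\theta$ itself. Say the quantified variable is $z$ and the remaining free variables are $x,y$. We look at the maximal subformulas of $\theta$ that are not Boolean combinations. These are atoms and quantified formulas $\exists w'\,\theta'$, and each has at most two free variables, since with only three variables a quantified subformula has at most two free ones. Call these the leaves $\lambda_1,\dots,\lambda_m$, with $m\le|\theta|$. Each leaf has free variables contained in one of $\{x,z\}$, $\{z,y\}$, or $\{x,y\}$. By induction each leaf has a term of size at most $2^{c|\lambda_j|}$.

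We then write $\theta$ in disjunctive normal form over the leaves, treating each leaf as a propositional atom. This gives at most $2^m$ disjuncts, each a conjunction of at most $m$ literals. In each disjunct we group the literals into three parts: an $x$--$z$ part $\alpha$, a $z$--$y$ part $\beta$, and an $x$--$y$ part $\gamma$. A leaf whose only free variable is $z$ (or that is closed) is assigned to $\alpha$. The disjunct is then equivalent to $\gamma(x,y)\land\exists z(\alpha(x,z)\land\beta(z,y))$. By \eqref{eq:basic-identity} this is the term $t_\gamma\cap(t_\alpha;t_\beta)$, where $t_\alpha$ and $t_\beta$ are intersections of (complemented) leaf terms.

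For the size bound, each disjunct has size at most $\sum_j |t_{\lambda_j}| + O(m)\le m\,2^{c|\theta|}$. There are at most $2^m$ disjuncts, so the total is at most
\[
2^m\cdot m\cdot 2^{c|\theta|}\le 2^{c|\theta|+2|\theta|}.
\]

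\textbf{Main obstacle.} This bound exposes the real difficulty. Plugging leaf sizes into the DNF naively can compound the exponent at every quantifier level. The naive recurrence is $|t_{\exists z\theta}|\le 2^{m}\cdot\sum_j|t_{\lambda_j}|$, and the danger is that this becomes a tower of exponents. The exponent is additive, however: the leaves are disjoint subtrees of $\theta$, so $\sum_j|\lambda_j|<|\theta|$. Since each leaf term is bounded by $2^{c|\lambda_j|}$, we get
\[
\sum_j 2^{c|\lambda_j|}\le 2^{c\sum_j|\lambda_j|}\le 2^{c(|\theta|-m)}.
\]
The non-leaf nodes of $\theta$ number at least $m-1$ when $m\ge2$ (and for $m=1$ the factor $2^m=2$ is absorbed by the extra quantifier node). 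Combined with the $2^m$ factor from the DNF, this gives a total of at most $2^{c|\theta|+O(1)}\le 2^{c|\exists z\theta|}$ once $c\ge 3$ or so. Carefully charging the $2^m$ factor against the nodes of $\theta$ that are not inside leaves is the step we expect to need the most care. Everything else is routine.
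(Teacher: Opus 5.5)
Your proposal is correct and is essentially the folklore argument the paper cites: under each quantifier you convert the quantifier-free Boolean shell to DNF over leaves that are already translated, and apply \eqref{eq:basic-identity} to each disjunct, with the exponent kept additive because the leaves are disjoint subtrees. One small slip: the shell guarantees only $m-1$ connectives, so the bound should read $\sum_j|\lambda_j|\le|\theta|-m+1$ rather than $|\theta|-m$. This still gives $2^m\cdot2^{c(|\theta|-m+1)}=2^{c|\exists z\,\theta|-(c-1)m}$, which leaves ample room for the $2^m\cdot O(m)$ overhead once $c$ is a large enough constant.
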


Whether this single-exponential upper bound is optimal has remained a
long-standing open question, and no matching lower bound for full \CoR\ has
been known since the correspondence itself was established.

In this paper we prove that the blow-up is unavoidable: the
$2^{\Omega(|\varphi|)}$ lower bound holds already over finite structures,
already for positive \FOthree\ formulas with a single quantifier, and even
when the \CoR\ side is a circuit rather than a term.

\subsection{Main results}\label{sec:main-results}
Write $|\varphi|$ and $|t|$ for the number of nodes in the syntax tree of a
formula $\varphi$ or a term $t$, and $\cost{t}$ for the number of
\emph{composition} nodes of $t$; in the second measure the Boolean
operations and converse are free. A \emph{relational circuit} $\Comp$ is a
term whose subterms may be shared, that is, a directed acyclic graph, and
$\gcost{\Comp}$ counts its distinct composition gates. Write
$t\equiv_{\rm fin}\varphi$ if $t$ and $\varphi$ define the same relation on
every finite structure, and likewise for circuits. The first hard query is
$\Psi_k$ itself, and the $2^k$ compositions of \eqref{eq:omega-upper} are
close to necessary.

\begin{theorem}[Disjunctive clauses]\label{main:omega}
Let $k\ge2$. Every \CoR\ term $t\equiv_{\rm fin}\Psi_k$ satisfies
\[
 \cost{t}\ \ge\ \Bigl\lceil\tfrac12\binom{k}{\lfloor k/2\rfloor}\Bigr\rceil
 \qquad\text{and}\qquad
 |t|\ \ge\ \binom{k}{\lfloor k/2\rfloor}+1,
\]
and the bound on $\cost{t}$ holds more generally for circuits: every
relational
circuit $\Comp\equiv_{\rm fin}\Psi_k$ has
$\gcost{\Comp}\ge\bigl\lceil\frac12\binom{k}{\lfloor k/2\rfloor}\bigr\rceil$.
\end{theorem}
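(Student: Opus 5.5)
We plan to argue by a preservation argument over a single finite structure $\M_k$ built to witness $\binom{k}{\lfloor k/2\rfloor}$ mutually independent requirements. Fix $m=\lfloor k/2\rfloor$ and let $\mathcal I$ be the family of $m$-subsets $I\subseteq[k]$; this is an antichain of size $\binom{k}{m}$. For each $I\in\mathcal I$ we place a ``gadget'': points $a_I,b_I$ and a middle point $c_I$ with $R_i(a_I,c_I)$ for $i\in I$ and $R_i(c_I,b_I)$ for $i\notin I$. Then $\Psi_k(a_I,b_I)$ holds. For each gadget we also want a ``near miss'' variant, a pair $(a_I',b_I')$ where the witness is split: one middle point realizes the $I$-pattern only partially, so that $\Psi_k$ fails. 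Since $\mathcal I$ is an antichain, no witness realizing pattern $J\neq I$ can serve pair $(a_I,b_I)$. The structure is arranged so that the atomic and $\FOthree$-relevant neighbourhoods of $(a_I,b_I)$ and the near-miss pair look alike except through compositions that ``read'' the exact set $I$.

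The key step is a \emph{locality lemma for composition gates}. Every binary relation $g$ computed by a gate is evaluated on $\M_k$. We aim to show that Boolean operations and converse can only combine information already present at the pair itself, i.e.\ they preserve any partial isomorphism fixing the two endpoints. Hence distinguishing $(a_I,b_I)$ from its near miss requires some composition gate $g=s;u$ whose value separates them. We then show that a single composition gate can separate at most two gadgets. The argument for this bound is that the relations $s,u$ are themselves unions of types on $\M_k$. If $g$ separates gadget $I$, then the witness through $c_I$ must be captured by $s\supseteq$ (pairs of type $I$ on the left) and $u\supseteq$ (type $[k]\setminus I$ on the right). By the antichain property, and a symmetry between $I$ and its complement, at most two indices $I$ can be served this way. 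This gives $\gcost{\Comp}\ge\lceil\frac12\binom{k}{m}\rceil$ for circuits, and hence for terms.

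For the size bound $|t|\ge\binom km+1$, we count leaves instead. Each separating composition for gadget $I$ must have, below it, a distinct atomic leaf or intersection node responsible for the gadget. A charging argument assigns at least one distinct node per gadget, plus the root, giving $\binom km+1$. In a tree, unlike a circuit, nodes are not shared, so the factor $\frac12$ can be recovered by charging leaves rather than gates.

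The main obstacle is designing $\M_k$ so that the ``at most two gadgets per composition'' claim truly holds. A composition $s;u$ with complicated $s,u$ (containing complements) could in principle detect several patterns at once via different middle points. We would first try to make the middle points of distinct gadgets pairwise disjoint and nonadjacent. We would also add many generic ``noise'' points realizing all non-witnessing patterns, so that any complemented relation is forced to be either nearly full or nearly empty on cross-gadget pairs. Then a one-type analysis of $s$ and $u$ restricted to each gadget suffices.
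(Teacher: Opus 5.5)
\paragraph{Gap: the per-gate bound.} The gap is the step you flag as the main obstacle. Nothing supports the claim that one composition gate separates at most two gadget pairs from their near misses, and for the structure you describe the claim is false. Put $T_i=R_i;\one$, so that $T_i^\smile(a,c)$ holds iff $c$ has an outgoing $R_i$-edge. Let $g=s;\bigl(\bigcup_iR_i\bigr)$ with $s=\bigcap_{i=1}^k(R_i\cup T_i^\smile)$.

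Suppose each middle point sends edges only to the $b$-point of its own gadget, as your plan of disjoint, nonadjacent gadgets suggests. Then $g$ holds at every $(a_I,b_I)$ through $c_I$. It fails at every near miss, for instance one whose middle point misses one index, or whose witness is split between two middle points carrying the left and right halves. So $k+1$ compositions separate all $\binom{k}{\lfloor k/2\rfloor}$ gadgets, and the single gate $g$ separates every one of them.

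The phenomenon is general. Inner compositions manufacture auxiliary information that lets one outer composition do the work of many, so a per-gate bound needs an invariant that controls \emph{every} relation computable on the structure. Your appeal to ``$s,u$ are unions of types'' presupposes such an invariant without establishing it.

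Noise points do not rescue the two-pairs-in-one-structure framework. Suppose the structure is homogeneous enough that every computed relation is constant on types; this is what the paper obtains from the extension property (Lemma~\ref{lem:palette}) and homogeneity (Lemma~\ref{lem:profile}). Then two pairs of the same type can be separated by no term at all, so gadget and near-miss pairs must have different types. Your antichain reasoning then gives no bound on how many such pairs of types a single $s;u$ separates, in particular in the direction ``false at the gadget, true at the near miss''.

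\paragraph{The missing idea: compare structures, not pairs.} The paper keeps one read-off pair $(x_A,h)$. It compares its value in $\M$ with its value in $\M\res(\Col\setminus\{A\})$, where the cell $\cell A$ is deleted and $\Psi_k$ changes by \eqref{eq:omega-decoder}. Deletion only removes witnesses. By the extension property it is invisible at pairs with no hub endpoint. At hub pairs it acts only through three bits of the operands' values $p_\lambda,q_\lambda$ on edges of color $\lambda$: whether some retained color has $p_\lambda=1$, whether some has $q_\lambda=1$, and whether some has both. Two colors always fix these three bits (Lemmas~\ref{lem:local} and~\ref{lem:and-width}). Your middle-layer colors and your ``two per gate'' are therefore the right choices, $\wid{\land}=2$, but for this invariant rather than yours.

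Choose such a support at every composition gate of the evaluation on $\M$ and take the union $K$; then $|K|\le2\gcost{\Comp}$. Induction in evaluation order shows that, for every nonempty $J\supseteq K$, the value of every gate on $\M\res J$ equals the restriction of its value on $\M$. Because this is an equality, it passes through complements untouched, and shared gates are charged once. It also forces every color into $K$.

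\paragraph{Term size.} Your leaf-charging argument for $|t|$ is unjustified as stated, and it is unnecessary. Leaves outnumber binary nodes by one, so $|t|\ge2\cost{t}+1$ by \eqref{eq:tree-size}. The composition bound then gives $|t|\ge\binom{k}{\lfloor k/2\rfloor}+1$ directly.
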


Since $\binom{k}{\lfloor k/2\rfloor}=\Theta(2^k/\sqrt k)$, this falls short
of the $2^k$ compositions of \eqref{eq:omega-upper} by a factor of
$\Theta(\sqrt k)$. The second hard query closes the gap. It asks
the two pairs $(x,z)$ and $(z,y)$ to satisfy the
same one of two symbols in each clause, rather than to share out the $k$
clauses between them. For $k\ge1$
let
\begin{equation}\label{eq:phi}
\Phi_k(x,y) := \exists z\,\bigwedge_{i=1}^k
       \Bigl(\bigl(P_i(x,z)\land P_i(z,y)\bigr)\lor
             \bigl(Q_i(x,z)\land Q_i(z,y)\bigr)\Bigr),
\end{equation}
over the $2k$ independent symbols $P_1,Q_1,\ldots,P_k,Q_k$. Clause $i$ holds
at a witness $z$ when both edges $(x,z)$ and $(z,y)$ lie in $P_i$, or both
lie in $Q_i$.

\begin{theorem}[Profile agreement]\label{main:phi}
Let $k\ge1$. Every \CoR\ term $t\equiv_{\rm fin}\Phi_k$ satisfies
$\cost{t}\ge2^{k-1}$ and $|t|\ge2^k+1$, and the bound on $\cost{t}$ holds
more
generally for circuits: every relational circuit
$\Comp\equiv_{\rm fin}\Phi_k$ has $\gcost{\Comp}\ge2^{k-1}$. The explicit
term $\CoR_{\Phi_k}$ of \eqref{eq:phi-upper} uses $2^k$ compositions, so the
composition count
is determined within a factor of two.
\end{theorem}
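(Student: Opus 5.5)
The plan is a preservation argument over one finite structure $\M$ in which every relevant edge carries a \emph{profile}. For each pair of elements $(u,v)$ used as an edge, exactly one of $P_i(u,v),Q_i(u,v)$ holds for each $i$. This gives a vector $\sigma\in\{P,Q\}^k$, and clause $i$ of $\Phi_k$ holds at a witness $z$ exactly when $(x,z)$ and $(z,y)$ agree in coordinate $i$. So $\Phi_k(x,y)$ holds iff some $z$ satisfies $\sigma(x,z)=\sigma(z,y)$.

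\textbf{Step 1: the gadgets.} For each $\sigma\in\{P,Q\}^k$, put into $\M$ two gadgets with fresh endpoints:
\begin{itemize}[nosep]
\item a positive gadget $(a_\sigma,b_\sigma)$ whose middle layer contains one element $c$ with $\sigma(a_\sigma,c)=\sigma(c,b_\sigma)=\sigma$;
\item a negative gadget $(a'_\sigma,b'_\sigma)$ identical except that the right edge has profile $\bar\sigma$, the coordinatewise complement.
\end{itemize}
Each gadget also gets padding middle elements carrying all the other $(\text{left},\text{right})$ profile combinations with distinct left and right profiles. The aim is that, apart from the single critical witness, the two endpoints of each gadget look the same to every two-variable-local test. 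By construction $\Phi_k$ is true at every $(a_\sigma,b_\sigma)$ and false at every $(a'_\sigma,b'_\sigma)$.

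\textbf{Step 2: the preservation invariant.} Call a circuit \emph{blind to} $\sigma$ if every gate takes the same value at $(a_\sigma,b_\sigma)$ as at $(a'_\sigma,b'_\sigma)$, and likewise on the corresponding edge pairs up to the swap $\sigma\leftrightarrow\bar\sigma$ on right edges.
\begin{itemize}[nosep]
\item Atoms, $\zero,\one,\id$, Boolean operations and converse act pointwise, so they preserve this agreement; converse only swaps the roles of the left and right edges.
\item Hence a gate $g=s;u$ is the only way to break agreement. It can do so only if $s$ holds on the left edge $(a,c)$ of profile $\sigma$ and $u$ distinguishes the right edges of profiles $\sigma$ and $\bar\sigma$. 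Symmetric reasoning applies to the right edge.
\end{itemize}
Since $\Comp\equiv_{\rm fin}\Phi_k$ must separate every positive gadget from its negative twin, every $\sigma$ must be ``charged'' to at least one composition gate.

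\textbf{Step 3: the charging lemma, the expected main obstacle.} I must show that a single composition gate $s;u$ can be charged for at most two profiles, namely one complementary pair $\{\sigma,\bar\sigma\}$. This gives $\gcost{\Comp}\ge 2^k/2=2^{k-1}$.

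The difficulty is that $s$ and $u$ are arbitrary subcircuits. Their values on edges are not simply functions of the profile, because they may themselves contain compositions looking through the padding. My first attempt is an inner induction proving that, on the edges of $\M$, every gate's value depends only on the edge's profile and on which side of the gadget the edge lies; the padding is designed precisely to enforce this. Given that, $s;u$ evaluated at the gadget endpoints is a disjunction over middle elements. Because the padding already supplies every mismatched profile pair, this disjunction can change between the $\sigma$-positive and $\sigma$-negative gadget only through the one matched witness.

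The remaining combinatorial core is to show that such a gate cannot, at once, separate the positive and negative gadgets for $\sigma$ and also for some $\rho\notin\{\sigma,\bar\sigma\}$. If it did, I expect the padding witnesses carrying the profile pairs $(\sigma,\rho)$ or $(\rho,\bar\sigma)$ to make $s;u$ true on both negative gadgets as well. That would contradict the gate being charged for both $\sigma$ and $\rho$. If this clean version fails, I would weaken it to: each gate is charged for a set of profiles with no two incomparable coordinates outside a complementary pair, and then count. I expect this padding-versus-charging tradeoff to be the main obstacle.

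\textbf{Step 4: consequences for terms.} For terms, $\cost{t}\ge 2^{k-1}$ follows from the circuit bound. For the size bound $|t|\ge 2^k+1$, I would refine the charge to leaves: each profile $\sigma$ needs a distinct atom occurrence $P_i$ or $Q_i$ feeding the charged side of its gate. Together with the $2^{k-1}\ge1$ composition nodes, this should give $2^k+1$ nodes. The matching upper bound is the term $\CoR_{\Phi_k}$ with its $2^k$ compositions.
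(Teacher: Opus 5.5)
\textbf{There is a genuine gap, and it is in Step~3.} On the structure you describe, the charging lemma is false. In fact no argument on that structure can give more than $k+1$.

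The reason is that a composition can read the profile of a witness's \emph{other} edge. Suppose, as your description suggests, that a middle element $c$ is related only to the two endpoints of its gadget. Then $b$ is its only out-neighbour, so $E_i:=\one;P_i^\smile$ satisfies $E_i(a,c)=P_i(c,b)$ on every left edge. With $V=P_1\cup Q_1$ (the edge relation), the term
\[
 t=\Bigl(V\cap\bigcap_{i=1}^k\bigl((P_i\cap E_i)\cup(\overline{P_i}\cap\overline{E_i})\bigr)\Bigr);V
\]
has $k+1$ compositions. At every gadget's endpoint pair it holds exactly when some middle element has equal left and right profiles. So it is true on all positive gadgets and false on all negative ones. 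Whichever way you charge profiles to gates, some gate must absorb at least $2^k/(k+1)$ of them. Concretely, the gates $E_i$ are false at every endpoint pair, so the last gate is the first separating gate for all $2^k$ profiles at once.

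This also refutes two other steps:
\begin{itemize}[nosep]
\item Your inner induction fails: $E_i$ on a left edge depends on the right profile of $c$, not just on the edge's own profile and side.
\item The hoped-for contradiction via padding fails: the left operand of $t$ is false at every mismatched padding witness.
\item Separately, the Step~2 invariant already fails at the atoms, since $P_i$ distinguishes a right edge of profile $\sigma$ from one of profile $\bar\sigma$. The meaningful comparison is between a gadget with and without its critical witness, which is a deletion.
\end{itemize}

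What is missing is a structure in which a witness's second edge is invisible from everywhere except one designated vertex. The paper gets this as follows.
\begin{itemize}[nosep]
\item Cells are coloured so that every cell vertex sees every colour in every cell (Lemma~\ref{lem:palette}), and pairs are grouped into the coarse types of \eqref{eq:types}.
\item Every relation computed from type-constant inputs is then type-constant (Lemma~\ref{lem:profile}, Corollary~\ref{cor:closure}). In particular, a composition at $(x,z)$ learns nothing about the cell of $z$; only the hub sees a cell in a single colour.
\item Deleting cells can therefore affect a composition only through three bits (Lemma~\ref{lem:local}), and two colours suffice to fix them (Lemma~\ref{lem:and-width}).
\item Choose such a support at every gate of the \emph{reference} computation and let $K$, with $|K|\le2m$, be their union. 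One induction then gives equality of all gate values on $\M$ and $\M\res J$ for every $J\supseteq K$ (Theorem~\ref{thm:preservation}). Because this is an equality, complement and sharing need no special handling.
\item Labelling anchors and cells by all subsets of $[k]$ as in \eqref{eq:phi-inputs} makes every colour detectable at its read-off pair $(x_A,h)$, so $2m\ge2^k$.
\end{itemize}

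Your leaf refinement in Step~4 is unnecessary. A tree whose nodes have arity at most two has one more leaf than binary nodes, so $|t|\ge2\cost{t}+1\ge2^k+1$ follows directly.
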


Both formulas are positive, use a single quantifier, and have size
$\Theta(k)$ over a vocabulary that grows with $k$
(Remark~\ref{rem:vocabulary}), so the two theorems answer the question
above.

\begin{corollary}[Succinctness gap]\label{cor:gap}
The translation of Proposition~\ref{prop:upper-translation} is optimal up to
the constant in the exponent: there are positive \FOthree\ formulas
$\varphi$ with one quantifier, of arbitrarily large size, such that every
relational circuit equivalent to $\varphi$ over finite structures has size
$2^{\Omega(|\varphi|)}$.
\end{corollary}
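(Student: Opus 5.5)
The corollary follows from Theorem~\ref{main:phi} once we relate the circuit size to the composition count and express $|\Phi_k|$ as a function of $k$. I take $\varphi:=\Phi_k$ for $k\ge1$. It is positive and has exactly one quantifier. To bound its size, I count the nodes of its syntax tree. Clause $i$ has four atoms, two conjunctions and one disjunction, so it has $7$ nodes. The big conjunction adds $k-1$ binary $\land$ nodes, and the quantifier adds one more node. Hence $|\Phi_k|=7k+(k-1)+1=8k$, which gives $|\varphi|=\Theta(k)$. Since $k$ is unbounded, the sizes are arbitrarily large, and $k\ge|\varphi|/8$.

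For the lower bound, let $\Comp$ be any relational circuit with $\Comp\equiv_{\rm fin}\Phi_k$. Every composition gate is a node of the DAG, so the circuit size (its number of distinct gates) is at least $\gcost{\Comp}$. By Theorem~\ref{main:phi}, $\gcost{\Comp}\ge 2^{k-1}$. Combining this with $k\ge|\varphi|/8$ gives a size of at least $2^{|\varphi|/8-1}=2^{\Omega(|\varphi|)}$. The bound applies to terms as well, since a term is a circuit with no sharing. Together with the upper bound $2^{O(|\varphi|)}$ of Proposition~\ref{prop:upper-translation}, this shows that the exponent is optimal up to its constant. The upper bound holds on all structures, and so in particular on finite ones.

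The only step that needs care is the bookkeeping of measures. The circuit size must count at least the composition gates, and the formula size must grow linearly in $k$. The second point holds because the vocabulary grows with $k$: each symbol is counted once per occurrence, so there is no hidden logarithmic factor (cf.\ Remark~\ref{rem:vocabulary}). All the substantive difficulty lies in Theorem~\ref{main:phi}, which we may assume. One could alternatively use $\Psi_k$ with Theorem~\ref{main:omega}, since $\tfrac12\binom{k}{\lfloor k/2\rfloor}=2^{k-O(\log k)}$ is still $2^{\Omega(|\varphi|)}$. Using $\Phi_k$, however, gives the cleaner constant.
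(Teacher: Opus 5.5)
Your argument is correct and is essentially the paper's proof: both combine the composition-gate lower bound of the main theorems with $|\Comp|\ge\gcost{\Comp}$ and $|\varphi|=\Theta(k)$. The only difference is emphasis. The paper leads with $\Psi_k$ and Theorem~\ref{main:omega}, using $\binom{k}{\lfloor k/2\rfloor}\ge 2^k/(k+1)$, and then notes that $\Phi_k$ gives $2^k$ in place of $2^k/(k+1)$. That is the route you take as primary, with an explicit count $|\Phi_k|=8k$ where the paper just writes $\Theta(k)$.
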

\begin{proof}
Take $\varphi=\Psi_k$, of size $\Theta(k)$. By Theorem~\ref{main:omega} a
circuit for it has at least $\frac12\binom{k}{\lfloor k/2\rfloor}$
composition gates, and $\binom{k}{\lfloor k/2\rfloor}\ge2^k/(k+1)$, so its
size is at least $2^{k-O(\log k)}=2^{\Omega(|\varphi|)}$, while
\eqref{eq:omega-upper} is an equivalent term of size
$O(k2^k)=2^{O(|\varphi|)}$. Taking $\varphi=\Phi_k$
gives $2^k$ in place of $2^k/(k+1)$.
\end{proof}

Section~\ref{sec:overview} outlines the proof of
Theorem~\ref{main:omega}, which works also for Theorem~\ref{main:phi}.

What Corollary~\ref{cor:gap} leaves open is the same question over a
vocabulary \emph{fixed} in advance. The formulas above use $k$ and $2k$
relation symbols respectively,
and $|\varphi|=\Theta(k)$ because the size charges an occurrence of a symbol
one node; whether an exponential gap holds for a family over a fixed
signature we do not know (Remark~\ref{rem:vocabulary}).

\subsection{Additional results}\label{sec:additional-results}

\paragraph{Size-specific and randomized targets.}
A \CoR\ term is a single expression, evaluated on structures of every size,
so Theorems~\ref{main:omega} and~\ref{main:phi} are uniform lower bounds.
The same bounds hold non-uniformly, with the term or circuit chosen for each
domain size $n$, and a bounded probability of error costs only a constant
factor.

\begin{theorem}[informal; see Theorem~\ref{thm:fixed-size} and
Corollary~\ref{cor:randomized}]\label{main:robust}
Let $\varphi$ be $\Psi_k$ or $\Phi_k$. For every $n$ above a threshold
$n_0=2^{O(k)}$, the lower bounds of Theorems~\ref{main:omega}
and~\ref{main:phi} hold for every term or circuit that agrees with $\varphi$
on all structures with exactly $n$ elements, even when that term is chosen
depending on $n$. A randomized circuit that is correct with probability at
least $1-\varepsilon$ on every input, for some $\varepsilon<1/2$, still
needs a $(1-2\varepsilon)$ fraction of that same number of compositions, in
expectation.
\end{theorem}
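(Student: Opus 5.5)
The approach is to reuse the preservation argument behind Theorems~\ref{main:omega} and~\ref{main:phi} and to observe that it only ever evaluates terms on one finite structure. That structure, call it $\A_k$, has size $2^{O(k)}$. Its domain contains, for each relevant subset $I$ (a middle layer set for $\Psi_k$, an arbitrary profile $I\subseteq[k]$ for $\Phi_k$), a pair $(a_I,b_I)$ and a private witness $c_I$ wired so that $a_I,b_I$ satisfy $\varphi$ only through $c_I$. It also contains ``near-miss'' pairs $(a'_I,b'_I)$ that look locally identical to $(a_I,b_I)$ except at the witness.

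The core lemma I expect to extract is the following: if a circuit $\Comp$ agrees with $\varphi$ on $\A_k$ and on a companion structure $\A_k^{(I)}$ obtained by a local modification, then some composition gate of $\Comp$ must ``serve'' $I$. Serving means the gate's value distinguishes $(a_I,b_I)$ from $(a'_I,b'_I)$ through $c_I$. The Boolean gates and converse gates are pointwise on pairs, so they preserve a suitable partial isomorphism, and only composition gates can break it. Each composition gate can serve at most two indices, which is where the factor $\tfrac12$ and the bounds $2^{k-1}$ come from.

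For the non-uniform statement (Theorem~\ref{thm:fixed-size}), I would pad $\A_k$ and each $\A_k^{(I)}$ to exactly $n$ elements for every $n\ge n_0=2^{O(k)}$. The padding consists of isolated dummy elements, that is, points related to nothing, or alternatively a disjoint copy of an inert gadget. I must check two things. First, adding isolated points does not change the truth value of $\varphi$ on the gadget pairs; this is clear because a witness $z$ must carry $R_i$- or $P_i/Q_i$-edges. Second, the preservation argument still goes through. It does, because isolated points and pairs involving them are mapped to themselves by the partial isomorphism, and they never witness a composition between gadget elements. Since the argument never used uniformity, the same counting then bounds any circuit chosen for that $n$.

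For the randomized bound (Corollary~\ref{cor:randomized}), I would apply a Yao-style averaging. Take the input distribution to be the uniform mixture over the base structure and the companion structures $\A_k^{(I)}$ for the relevant $I$, padded to size $n$. For a fixed deterministic circuit $\Comp$, let $S(\Comp)$ be the set of indices $I$ on which $\Comp$ is correct on both $\A_k$ and $\A_k^{(I)}$. The lemma gives $g_{;}(\Comp)\ge |S(\Comp)|/2$.

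For a random circuit correct with probability $\ge1-\varepsilon$ on every input, a union bound over the two inputs gives $\Pr[I\in S]\ge 1-2\varepsilon$. Summing over $I$ and taking expectations yields $\mathbb E[g_{;}]\ge(1-2\varepsilon)\cdot N/2$, where $N$ is the number of indices.

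The main obstacle is making sure the serving lemma is genuinely local. It must require correctness only on $\A_k$ and on one modified copy per index, rather than on a family of structures that depend on the whole circuit. Otherwise the union bound over two inputs fails. If the original argument needs adaptively chosen structures, I would first try to replace them with a fixed product or disjoint union that contains all gadget variants simultaneously.
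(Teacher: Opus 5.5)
Your fixed-size step has a genuine gap. The argument behind Theorems~\ref{main:omega} and~\ref{main:phi} does not evaluate a circuit on a single structure. It compares $\M$ with its restrictions $\M\res(\Col\setminus\{\lambda\})$, which have $L$ fewer vertices, so a circuit chosen for size $n$ is constrained on at most one of the two.

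Padding both to $n$ elements puts $L$ more padding points into the companion than into the base. The padded companion is then no longer a restriction of the padded base, and Theorem~\ref{thm:preservation} (an identity between the value on a subdomain $D_J$ and the restriction of the value on $D$) no longer relates them. So ``the argument never used uniformity'' is exactly where the proof breaks: uniformity is what lets one circuit be evaluated at both sizes.

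If your companion $\A_k^{(I)}$ is instead meant to be a same-size modification, then invariance of the circuit under that modification is precisely what must be proved. Neither the preservation theorem nor your gadget sketch supplies it: the private witnesses $c_I$ come with a bound of two indices per gate asserted without reason. In the paper that two is the support number $\wid{\land}=2$ in the homogeneous hub-and-cells structure.

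Your justification of the padding is also wrong for \CoR. An isolated point $y$ satisfies $\overline R(u,y)$ for every $u$, so it witnesses compositions such as $\overline R;\overline R$ at every pair. Padding has to stay inside the type-constant framework, as the paper's inert extra anchors do: all incident entries are $0$ for $\Psi_k$, and carry the invalid code $\cdb$ for $\Phi_k$.

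The missing idea is a perturbation on a fixed domain. The paper recolors instead of deleting. For $f:\Col\to\Col$ only the hub's edges change, $\colr_f(h,u)=f(\lambda)$ for $u\in\cell{\lambda}$, so the hub sees only the palette $f(\Col)$ while no vertex disappears. Taking $f_\lambda$ that sends $\lambda$ to some $\lambda'\ne\lambda$ gives $f_\lambda(\Col)=\Col\setminus\{\lambda\}$.

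Homogeneity fails for the recolored types, so a new lemma is needed (Lemma~\ref{lem:recolor}). The map $f_\ast$ transporting type values commutes with a composition whenever the three responses agree on $f(\Col)$ and on $\Col$, and it does not change the read-off pairs $(x,h)$. With supports chosen on the reference computation, every $\lambda\notin K$ then gives the same output at $(x_\lambda,h)$ on $\M$ and on $\M_{f_\lambda}$. By contrast, $\varphi^{\M_f}(x_\lambda,h)$ holds iff $\lambda\in f(\Col)$, because a witness in $\cell{B}$ is now seen from the hub through $f(B)$.

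Once this is in place, your randomized step is correct and takes a slightly different route from the paper's. Correctness at $(x_\lambda,h)$ on both structures forces $\lambda\in K$. A union bound gives $\Pr[\lambda\in K]\ge1-2\varepsilon$, and $|K|\le2\gcost{\Comp}$ yields $\mathbb E[\gcost{\Comp}]\ge\frac{1-2\varepsilon}{2}N(\varphi)$. The paper instead bounds the success probability of a deterministic circuit on a random test by $\frac12+\frac{|K|}{2N(\varphi)}$ and averages. Your worry about locality is settled by the same form of the argument: $K$ depends only on the circuit, and the $|\Col|+1$ test structures are fixed in advance.
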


\paragraph{Matrix query languages.}
The composition $P;Q$ can be replaced by $\bigvee_zP(x,z)\odot Q(z,y)$, with
$\odot$ an arbitrary binary operation on the scalars and $\bigvee$ an
arbitrary idempotent join (Section~\ref{sec:responses}), and the lower
bounds continue to hold; the min-plus matrix product
$\min_z\bigl(P(x,z)+Q(z,y)\bigr)$ is one instance. \citet{brijder2022}
showed that MATLANG defines the same binary queries as a positive
relational algebra on semiring-annotated relations whose intermediate
results have at most three attributes; their translation into MATLANG is
exponential, and they asked whether that blow-up is unavoidable.

\begin{theorem}[informal; see Theorem~\ref{main:matrix}, and
Corollary~\ref{cor:positive} for non-idempotent semirings]\label{main:matrix-informal}
Fix the min-plus semiring, or max-plus, or max-min. For every $k$ there is a
positive relational algebra expression $E_k$ of size $O(k)$, with binary
output, whose intermediate results have at most three attributes. Every
MATLANG computation equivalent to $E_k$ on $n\times n$ inputs uses at least
$2^k/3$ matrix products contracting the dimension $n$, even with arbitrary
uniform entrywise functions and shared intermediates; $2^k$ products
suffice.
\end{theorem}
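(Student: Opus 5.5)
The plan is to take $E_k$ to be the semiring transcription of $\Phi_k$ and to reduce the MATLANG lower bound to the generalized-composition version of Theorem~\ref{main:phi}, that is, composition $\bigvee_z P(x,z)\odot Q(z,y)$ with an arbitrary binary $\odot$ and an idempotent join $\bigvee$. For min-plus I set
\[
E_k(x,y)=\min_z\sum_{i=1}^k\min\bigl(P_i(x,z)+P_i(z,y),\;Q_i(x,z)+Q_i(z,y)\bigr).
\]
Written as a positive relational algebra expression, this joins the binary relations on the attributes $\{x,z\}$ and $\{z,y\}$, takes $\oplus$ and $\otimes$ over three-attribute intermediates, and projects out $z$. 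Its size is $O(k)$. For max-plus and max-min I exchange the operations accordingly.

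For the upper bound I distribute over the $2^k$ profiles $S\subseteq[k]$. For each $S$ I form $A_S=\sum_{i\in S}P_i+\sum_{i\notin S}Q_i$ entrywise. The value of $E_k$ is then $\min_S (A_S\cdot A_S)$, where $\cdot$ is the min-plus product. This uses exactly $2^k$ products contracting $n$.

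For the lower bound I proceed in three steps.

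\emph{Step 1.} I restrict inputs to a finite set of scalars, for example $\{0,\infty\}$ or $\{0,1\}$ for max-min, placed on the hard finite structure from the proof of Theorem~\ref{main:phi}. Under this restriction, square matrices play the role of binary relations with values in an arbitrary scalar set, entrywise functions of any arity play the role of the Boolean operations, and transpose plays the role of converse.

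\emph{Step 2.} I verify that an $n$-contracting product of $n\times n$ matrices is exactly a generalized composition. Here $\odot$ is the semiring multiplication and $\bigvee$ is its addition, which is idempotent for min, max and max-min. Hence Theorem~\ref{main:matrix}, the preservation argument phrased for generalized composition, applies.

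\emph{Step 3.} I account for the remaining MATLANG shapes. These are $n\times1$ and $1\times n$ vectors, scalars, products contracting the dimension $1$, and products contracting $n$ that involve vectors. Outer products and products over the dimension $1$ are entrywise, so they are free. A vector $v$ is encoded as the matrix $v\mathbf 1^{\mathsf T}$. A contraction such as $M v$ or $\mathbf 1^{\mathsf T}v$ becomes a composition of an encoded matrix with the full relation, or a composition involving such an encoding. The intended charge is at most a constant number of generalized compositions per MATLANG contraction, and the lost constant should explain $2^k/3$ in place of $2^{k-1}$.

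I expect the main obstacle to be Step 3 together with the finiteness of scalars. Entrywise functions applied after sums can create unboundedly many new values, so the preservation invariant cannot simply be ``values stay in a finite Boolean set.'' What I would try first is to phrase the invariant of Theorem~\ref{main:phi} semantically: every intermediate matrix is invariant under the structure's automorphisms or partial swaps. That property holds no matter which scalar values arise, because entrywise maps and transpose preserve it trivially. The property can only be broken at contractions, and each contraction breaks it for at most two or three of the $2^k$ profiles. Counting profiles, $2^k$ must be covered, which gives at least $2^k/3$ contractions.
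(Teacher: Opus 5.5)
Your $E_k$ (the semiring transcription of $\Phi_k$, i.e.\ the paper's $F_k$), its ARA(3) expression, the $2^k$-product upper bound, and the broadcasting of vectors as $v\mathbf 1^{\mathsf T}$ all match the paper. The gap is the per-contraction count, which is exactly where the $3$ comes from.

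Invoking Theorem~\ref{main:matrix} in Step~2 is circular, since it is the formal version of the statement you are proving. What you may invoke is the kernel-generic preservation theorem, and that only gives $\wid{\Comp}\ge 2^k$. Turning this into a number of products requires the support number of the kernel. You never compute it, and for an arbitrary $\odot$ it can be infinite: bitwise AND on $\mathbb N$ with join $\max$ computes $\Phi_k$ in a single aggregation.

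You instead attribute the constant to the vector encodings, and that cannot be right. Under broadcasting, each $n$-contracting MATLANG product is exactly one square product, and each $1$-contraction is a free entrywise gate, so there is no overhead. Moreover, with two colors per composition, an overhead of $c$ compositions per contraction would give $2^k/(2c)$, never $2^k/3$. The Boolean value of two colors is not available here anyway.

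The missing lemma is that min-plus, max-plus and max-min multiplication have support number exactly $3$. By distributivity, $\ell_J(s)=(\bigvee_{\lambda\in J}p_\lambda)\otimes s$ and $r_J(s)=s\otimes(\bigvee_{\lambda\in J}q_\lambda)$. So the three responses are determined by the scalars $\bigvee_J p_\lambda$, $\bigvee_J q_\lambda$ and $\bigvee_J p_\lambda\otimes q_\lambda$. Each of these is attained by a single color because the additive order is total. The family $(b,0_{\mathbb S}),(0_{\mathbb S},b),(a,a)$ with $0_{\mathbb S}<a<b$ shows that two colors do not suffice. This gives $|K|\le 3m$, hence $3m\ge 2^k$.

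Your fallback invariant does not supply this count either. Invariance under automorphisms of the input holds automatically for every uniform computation and is preserved by contractions too, so it is never ``broken'' and yields no bound. The claim that each contraction breaks it ``for at most two or three profiles'' is precisely what needs proof.

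The invariant that works has two parts:
\begin{itemize}
\item type-constancy of every intermediate array on the fixed structure $\M$, which survives arbitrary uniform entrywise maps, transpose, constants, the identity matrix, and every aggregation with an idempotent join, whatever scalars arise;
\item the exact equality between a gate's value on $\M\res J$ and the restriction of its value on $\M$, whenever $J$ contains the union of the per-gate supports.
\end{itemize}
This also removes your concern about unboundedly many scalar values. There are finitely many types, and the argument never needs the values to lie in a fixed finite set.
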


\subsection{Related work}\label{sec:relwork}

\paragraph{The expressiveness question.}
For historical background, we refer to the surveys by \citet{pratt1992},
\citet{maddux1991}, and \citet{burris-legris}, the monograph of
\citet{brady2000}, and the historical notes of \citet{tarski-givant1987}.
\CoR\ is older than first-order logic. De Morgan introduced composition
and converse in his papers on the syllogism of 1850 and
1860~\citep{demorgan1856,demorgan1864}, Peirce developed the algebra
through the 1870s and 1880s~\citep{peirce1870,peirce1880}, and Schr\"oder
devoted an entire volume to it in 1895~\citep{schroeder1895}. In contrast,
first-order logic was formalized only in Hilbert's lectures of 1917--18 and
in Hilbert and Ackermann's 1928
book~\citep{hilbert-lectures,hilbert-ackermann1928,moore1988}. In the
terminology of the time, to \emph{condense} a quantified statement about
relations meant eliminating its quantifiers to express it within the
calculus~\citep[p.~233]{loewenheim1915}---the translation of
Proposition~\ref{prop:upper-translation} under its older name. Schr\"oder
claimed in 1895 that every such statement could be
condensed~\citep[p.~551]{schroeder1895}. Korselt found a counterexample,
published by L\"owenheim in 1915~\citep{loewenheim1915}, which lies
exactly on the variable boundary studied here: ``the domain has at most two
elements'' requires three variables (\FOthree) and can be condensed,
whereas ``the domain has at most three elements'' requires four (\FOfour) and
cannot. Statements in \FOfour\ were thus known to lie outside \CoR\
before first-order logic existed as a formal system, while the status of
\FOthree\ remained open.
Tarski's 1941 paper presented further
\FOfour\ statements outside \CoR,
observed that the calculus remained at essentially the same stage of development as forty-five years earlier,
and posed the problem of characterizing the
condensable sentences~\citep[pp.~74, 89]{tarski1941}.
According to Tarski and
Givant, Tarski obtained the \FOthree\ characterization in the early 1940s
and presented it at a 1945 Berkeley seminar. It was first published with full
proofs in their 1987 monograph~\citep[pp.~xvii, 88--89]{tarski-givant1987},
and Maddux provided a modern treatment in his monograph~\citep{maddux2006}.
This settled the expressiveness question and forms the basis for
Proposition~\ref{prop:upper-translation} and Corollary~\ref{cor:gap},
but the computational cost of this translation was not examined.

\paragraph{The succinctness question.}
Expressive equivalence, however, implies nothing about formula size. Tarski
and Givant defined their translation recursively, providing no bound on its
growth~\citep[\S3.9]{tarski-givant1987}. The folklore translation behind
Proposition~\ref{prop:upper-translation} applies bottom-up DNF conversion to
the quantifier-free Boolean shell of a formula---the propositional normal-form
theorem of \citet{hilbert-ackermann1928}, read as a conversion procedure by
\citet{kleene1952} and \citet{schoening1989}---and then applies
\eqref{eq:basic-identity} to each disjunct. This bottom-up conversion is
the sole source of the exponential blow-up. In contrast, the reverse
translation is linear: structural induction turns a \CoR\ term of size $m$
into an \FOthree\ formula of size $O(m)$, so the blow-up occurs in one
direction only. Whether this exponential blow-up is unavoidable has remained
open since the equivalence was established. The query $\Psi_k$ is due to
\citet{nakamura2022}, who showed that it has no subexponential translation
into \emph{positive} \CoR\ (the fragment without complement), leaving the
case of full \CoR\ open. Theorems~\ref{main:omega} and~\ref{main:phi} settle
this question.

\paragraph{Succinctness of logics and lower-bound methods.}
Succinctness is a standard measure for comparing logics of the same
expressive power~\citep{grohe-schweikardt2005}, and gaps of every size are
known between fragments of first-order, temporal, and modal
logics~\citep{etessami-vardi-wilke2002,adler-immerman2003,dawar2007,french2013}.
The closest result to ours concerns finite-variable fragments over a
single binary relation: on finite linear orders, \FOfour\ is
exponentially more succinct than \FOthree, while two- and three-variable
logics are polynomially equivalent~\citep{grohe-schweikardt2005}. Such lower
bounds are proved using the formula-size games of Adler and
Immerman~\citep{adler-immerman2003,hella-vilander2019} or, equivalently, by
tracking a potential function along syntax trees~\citep{grohe-schweikardt2005}.
This framework is indifferent to negation (which merely swaps structure pairs)
and charges a bounded potential increase for each quantifier (corresponding
to composition in \CoR). The core difficulty lies in finding suitable
structures and a potential, neither of which transfers from linear orders to
our setting, for two concrete reasons.

First, on linear orders there is no gap to find: by the lower-bound theorem of
Grohe and Schweikardt~\citep{grohe-schweikardt2005}, an \FOthree\ formula of
size $s$ can only tell in which order its points come and whether two of
them lie at distance exactly $d$ for some $d\le4s^2$, and each such test is
a \CoR\ term of size $O(d)$, namely a $d$-fold composition of
$\mathit{succ}=(<)\cap\overline{(<\,;<)}$. As a formula of size $s$ is a
Boolean combination of the $s^{O(1)}$ available tests, it has an equivalent
\CoR\ term of size $s^{O(1)}$; so \FOthree\ and \CoR\ terms are
polynomially equivalent there.

Second, the mechanism of their gap is fundamentally different. A fourth variable
permits the divide-and-conquer recursion familiar from Savitch's theorem~\citep{savitch1970},
\[
 \varphi_m(x,y) = \exists z\,\forall u\,\bigl((u=x\lor u=y)\to
 \varphi_{m-1}(z,u)\bigr),
\]
which expresses ``$x$ and $y$ are at distance $2^m$'' in size $O(m)$ by
using one copy of $\varphi_{m-1}$ for both halves of the
path~\citep{grohe-schweikardt2005}; with three variables the recursion must
write $\varphi_{m-1}$ twice at every level, and their lower bound shows
that no other \FOthree\ formula does better than $2^{\Omega(m)}$.
Reuse is exactly what a circuit provides: a relational circuit expresses
distance $2^m$ with $m$ compositions, by squaring $S_{i+1}=S_i;S_i$ from
$S_0=\mathit{succ}$, whereas a term needs size $2^{\Omega(m)}$ by their
bound and the linear translation of terms into \FOthree. Our bounds hold for
circuits---where on linear orders the gap actually runs in the opposite
direction---showing that the gap between \FOthree\ and \CoR\ is not simply
one of reuse.

This gap stems from how the two formalisms handle witnesses: in $\Psi_k$, the
quantifier binds $R_i(x,z)\lor R_i(z,y)$ for all $i$ simultaneously,
constraining both edges at once. In contrast, a composition $P;Q$ splits the
edges into $P$ and $Q$. The classical translation \eqref{eq:omega-upper} must
therefore spend one composition on each way of partitioning the $k$ clauses
between the two edges, an overhead that Theorem~\ref{main:omega} proves to be
unavoidable. \citet{nakamura2022} proved the exponential bound for positive
\CoR\ using a formula-size game; however, the argument relies fundamentally on
positivity and does not extend to complement, which exchanges the two sides
and thus requires a symmetric potential. Our preservation theorem supplies
one: for a single finite structure and the family of substructures obtained by
deleting one part each, the potential of a subterm is a set of parts outside
of which every deletion is invisible to it. Boolean operations and converse
contribute nothing to this set, while a composition contributes at most two
parts; because these parts combine by union rather than by addition, a shared
gate is charged only once, directly yielding the bound for circuits. Finally,
$\Psi_k$ and $\Phi_k$, suitably interpreted, detect every deletion.

\paragraph{Composition as matrix multiplication.}
Since composition corresponds to a Boolean matrix product, a \CoR\ term with $m$
compositions evaluates a query using $m$ fast matrix multiplications in time
$O(m\,n^{\omega})$, rather than the $O(n^3)$ time of naive enumeration,
where $\omega$ is the matrix multiplication exponent. Bounded-variable queries
can be evaluated in time polynomial in both the query and the
structure~\citep{vardi1995}. \citet{Williams2014} exploited this speedup for
three-quantifier graph properties, leading to a fine-grained complexity
theory of first-order queries parameterized by variable count and quantifier
structure~\citep{GaoImpagliazzoKolokolovaWilliams2017,BringmannFischerKunnemann2019,GaoImpagliazzo2019}.
The composition counts in Theorems~\ref{main:omega} and~\ref{main:phi}
therefore lower-bound the number of matrix products required by any
relational evaluation of the query, a bound stated directly in
Theorem~\ref{main:matrix}. These results are not lower bounds for general
Boolean circuits with access to individual input entries. In our cost measure,
all Boolean operations (including complement) are free. In contrast,
classical monotone lower bounds for a single Boolean matrix
product~\citep{kerr1970,pratt1975,paterson1975,mehlhorn-galil1976} and
exact bounds for straight-line programs over semirings~\citep{jerrum-snir1982}
count individual gates; see \citet{jukna2012} for a survey.

\paragraph{Relational and matrix query languages.}
Algebraic representations of logical queries have been central to query
processing since Codd~\citep{Codd1970,Codd1972}. Binary relation algebra and its
fragments form the algebraic core of graph and navigational query
languages~\citep{FletcherEtAl2015,HellingsEtAl2021,HellingsEtAl2022}. In these
settings, logic-to-algebra translations can incur exponential
blowups~\citep{BaranyTenCateOtto2012}. The succinctness of relational
representations is therefore a central concern~\citep{FletcherEtAl2015},
because the size of the translated expression directly dictates the cost of
query optimization and evaluation. MATLANG is a matrix query language built
from matrix multiplication, transpose, ones-vectors, diagonalization, and
pointwise functions~\citep{brijder2019tods}; see \citet{geerts2021} for a
survey. For relations annotated with values in a semiring, in the sense of
provenance semirings~\citep{green2007},
\citet{brijder2022} proved that binary-output ARA(3)---the three-attribute
annotated relational algebra---is expressively equivalent to MATLANG,
mirroring the equivalence between \FOthree\ and \CoR. An earlier manuscript
asked whether this exponential arity elimination (and by analogy, the
classical \FOthree\ translation) is unavoidable~\citep[p.~9]{brijder2019}.
Theorem~\ref{main:matrix} resolves this question affirmatively for
commutative idempotent semirings of finite join breadth, while
Section~\ref{semi:positive-section} extends the bound to additional semirings
under a restricted operation set.

\subsection{Technical overview}\label{sec:overview}
We outline the proof of Theorem~\ref{main:omega}. Theorem~\ref{main:phi}
uses the same construction and the same preservation theorem, and differs
only in the interpretation of the relation symbols.

\subsubsection{Outline}\label{ov:outline}
The term \eqref{eq:omega-upper} contains one composition for each way of
distributing the $k$ clauses between the two edges of the witness. A more
economical term might use complementation, factor out common subterms, or
share them, and a normal-form argument would have to account for all of
these possibilities. We therefore do not analyze the syntax of a candidate
term at all. Instead, we fix a single finite structure, delete parts of it,
and ask which of the resulting substructures a given term can still
distinguish from the original.

We say that a term or a query \emph{distinguishes} two structures over the
same vocabulary, one of them a substructure of the other, if the relations
that it defines on them differ at some pair of vertices of the smaller
structure; equivalently, the term or query \emph{detects} the deletion that
produced the smaller structure. The proof has three steps.
\begin{enumerate}[nosep,label=(\roman*)]
\item We build one finite structure $\M$ (Figure~\ref{fig:structure}).
 Apart from a hub and some anchors, its vertices are partitioned into
 \emph{cells} $\cell\lambda$, one for each color $\lambda$ in a finite set
 $\Col$ of at least two colors. The only substructures we use are those
 obtained by deleting entire cells: for nonempty $J\subseteq\Col$, let
 $\M\res J$ be $\M$ with every cell $\cell\lambda$ such that
 $\lambda\notin J$ removed. Thus no vertex is ever removed individually,
 and the hub and the anchors survive every deletion.
\item To a \CoR\ term $t$ with $m$ compositions we attach a set
 $K\subseteq\Col$ of at most $2m$ colors, and show that $t$ does not
 distinguish $\M$ from $\M\res J$ for any nonempty $J\supseteq K$
 (Theorem~\ref{thm:preservation}): deleting cells is invisible to $t$ as
 long as every cell whose color lies in $K$ is retained.
\item We interpret the relation symbols of $\Psi_k$ on $\M$ so that
 $\Psi_k$ does distinguish $\M$ from $\M\res(\Col\setminus\{\lambda\})$, for
 every single color $\lambda$ (Section~\ref{sec:logical}).
\end{enumerate}
Suppose $t\equiv_{\rm fin}\Psi_k$, and let $\lambda$ be any color. By~(iii)
the formula distinguishes $\M$ from $\M\res(\Col\setminus\{\lambda\})$, and
hence so does $t$. Since $|\Col|\ge2$, the set $\Col\setminus\{\lambda\}$ is
nonempty, so by~(ii) this is possible only if
$\Col\setminus\{\lambda\}\not\supseteq K$, that is, only if $\lambda\in K$.
Hence $K=\Col$ and $2m\ge|\Col|$. Taking for $\Col$ the $\lfloor
k/2\rfloor$-element subsets of $[k]$ gives the composition bound of
Theorem~\ref{main:omega}; the bound on $|t|$ follows from
\eqref{eq:tree-size}.

The remainder of this section treats the three steps in order.
Section~\ref{ov:structure} constructs $\M$ and specifies the pairs of
vertices at which a query is read off; Sections~\ref{ov:types}
to~\ref{ov:wholeterm} prove~(ii) by showing that what a term computes on
$\M$ is too coarse to detect an individual cell; and
Section~\ref{ov:formula} establishes~(iii).

\begin{figure}[t]
\centering
\begin{tikzpicture}[
  font=\small,
  cellnode/.style={draw,ellipse,minimum width=2.1cm,minimum height=1.0cm,thick},
  vtx/.style={circle,fill=black,inner sep=1.1pt},
  anch/.style={draw,rectangle,thick,minimum size=6mm,fill=white,inner sep=1pt}
]
 \node[vtx,label={[label distance=1pt]above:{hub $h$}}] (hub) at (0.4,3.6) {};
 \node[cellnode] (c1) at (-3.0,0) {};
 \node[cellnode] (c2) at ( 0.4,0) {};
 \node[cellnode] (c3) at ( 3.8,0) {};
 \node[below=2pt of c1,font=\footnotesize] {$\cell{\lambda_1}$};
 \node[below=2pt of c2,font=\footnotesize] {$\cell{\lambda_2}$};
 \node[below=2pt of c3,font=\footnotesize] {$\cell{\lambda_3}$};
 \foreach \c in {1,2}{
   \node[vtx] (c\c a) at ($(c\c)+(-0.58,0.12)$) {};
   \node[vtx] (c\c b) at ($(c\c)+(0.0,-0.22)$) {};
   \node[vtx] (c\c c) at ($(c\c)+(0.58,0.12)$) {};
 }
 \node[vtx] (c3a) at ($(c3)+(-0.68,0.08)$) {};
 \node[vtx] (c3b) at ($(c3)+(-0.23,-0.24)$) {};
 \node[vtx] (c3c) at ($(c3)+(0.23,0.18)$) {};
 \node[vtx] (c3d) at ($(c3)+(0.68,-0.12)$) {};
 \node[anch] (x1) at (-6.7,3.6) {$x_{\lambda_1}$};
 \node[anch] (x2) at (-6.7,2.5) {$x_{\lambda_2}$};
 \node[anch] (x3) at (-6.7,1.4) {$x_{\lambda_3}$};
 \node[font=\footnotesize,align=center] at (-6.7,4.45) {anchors:\\one per color};
 \draw[lightgray,thin] (x2) -- (hub);
 \draw[lightgray,thin] (x3) -- (hub);
 \draw[lightgray,thin] (x1) -- (x2);
 \draw[lightgray,thin] (x2) -- (x3);
 \foreach \t in {c1,c2,c3}{ \draw[lightgray,thin] (x2) -- (\t.north west); }
 \draw[very thick] (x1) -- (hub);
 \node[font=\footnotesize,fill=white,inner sep=1.5pt] at (-3.15,3.83)
      {read-off pair $(x_{\lambda_1},h)$};
 \foreach \d in {a,b,c}{ \draw[thick]         (hub) -- (c1\d); }
 \foreach \d in {a,b,c}{ \draw[thick,dashed]  (hub) -- (c2\d); }
 \foreach \d in {a,b,c,d}{ \draw[thick,dotted]  (hub) -- (c3\d); }
 \node[font=\footnotesize,fill=white,inner sep=1pt] at (-1.54,1.60) {$\lambda_1$};
 \node[font=\footnotesize,fill=white,inner sep=1pt] at ( 0.40,1.60) {$\lambda_2$};
 \node[font=\footnotesize,fill=white,inner sep=1pt] at ( 2.34,1.60) {$\lambda_3$};
 \draw[gray] (c1c) -- (c2a);
 \draw[gray] (c2c) -- (c3a);
 \draw[gray] (c1b) to[bend right=30] (c3b);
 \draw[gray] (c2a) to[bend right=45] (c2c);
 \node[gray,font=\footnotesize] at (2.15,-1.92)
      {random colors, inside and between cells};
 \draw[thick,densely dashed,rounded corners=3pt]
      ($(c1)+(-1.40,0.72)$) rectangle ($(c1)+(1.40,-1.08)$);
 \node[font=\footnotesize,align=center] at (-3.0,-1.92)
      {deleted in\\$\M\res(\Col\setminus\{\lambda_1\})$};
\end{tikzpicture}
\caption{The structure $\M$ with three colors. Each ellipse is a cell and
the dots inside it are cell vertices; all cells have the same (large) number
of vertices, and the number drawn here is not significant. Every pair of
\emph{distinct} vertices is an edge; only a few are drawn. The hub is joined
to each vertex of $\cell\lambda$ by an edge of color $\lambda$ (solid,
dashed and dotted for $\lambda_1,\lambda_2,\lambda_3$), so the hub alone can
distinguish the cells. Edges between cell vertices, inside a cell or across
cells, carry random colors (gray) and realize every color between every
pair of cells. Anchors, one per color, are joined to all vertices and their
edges carry no color (light gray); they are distinguished by the relation
symbols and are never deleted. Deleting the cell $\cell{\lambda_1}$ removes
all of its vertices but leaves the read-off pair $(x_{\lambda_1},h)$ in
place.}
\label{fig:structure}
\end{figure}

\subsubsection{The structure (Section~\ref{sec:structure})}\label{ov:structure}
Recall that $\Col$ is a finite set of \emph{colors}. The vertices of $\M$
are of three kinds, each with a distinct role (Figure~\ref{fig:structure}):
\begin{center}
\renewcommand{\arraystretch}{1.15}
\begin{tabular}{@{}lll@{}}
\toprule
 &Number&Role\\
\midrule
hub $h$&one&the only vertex that can distinguish the cells\\
cells $\cell\lambda$&one per color $\lambda\in\Col$&contain the witnesses; a deletion removes a whole cell\\
anchors $x_\lambda$&one per color, never deleted&the query is read off at the pair $(x_\lambda,h)$\\
\bottomrule
\end{tabular}
\end{center}
Every pair of distinct vertices is an \emph{edge}, and every edge not
incident to an anchor carries a color $\colr(u,v)=\colr(v,u)\in\Col$. The
hub's edges announce the cell,
\[
 \colr(h,u)=\lambda\qquad\text{for }u\in\cell\lambda,
\]
while the edges between two \emph{cell vertices}, inside one cell or across
two, are colored at random. Lemma~\ref{lem:palette} then yields the
following \emph{extension property}:
\begin{quote}
for all \emph{distinct} cell vertices $u,v$, all colors $\mu,\nu$, and every
cell $\cell\lambda$, some $z\in\cell\lambda$ other than $u$ and $v$
satisfies $\colr(u,z)=\mu$ and $\colr(z,v)=\nu$.
\end{quote}
Thus a cell vertex sees every color in every cell and cannot distinguish one
cell from another, whereas the hub can: it sees each cell in a single color.
Edges incident to an anchor are left uncolored; anchors are instead
distinguished by the interpretation of the relation symbols
(Section~\ref{sec:logical}).

\paragraph{Why anchors are needed.}
We call a pair $(x_\lambda,h)$ a \emph{read-off pair}. Such a pair must
serve two purposes, one supplied by each of its two vertices.

First, it must survive every deletion. The test for a color $\lambda$
compares $\M$ with $\M\res(\Col\setminus\{\lambda\})$, so the query must be
read off at a pair that is present in both structures, and
Section~\ref{sec:logical} assigns a different pair to each color. A cell
vertex does not survive the deletion of its own cell, and the hub is a
single vertex, so the construction needs $|\Col|$ further vertices that are
never deleted and that the relation symbols can distinguish. These are the
anchors. They are indispensable rather than merely convenient: both hard
formulas have the form $\exists z\,\psi$ in which every atom of $\psi$
relates $z$ to $x$ or to $y$, and for formulas of this form
Lemma~\ref{lem:where} in Appendix~\ref{app:optimality} bounds the number
of detectable colors by $2|X|+2$, where $X$ is the set of anchors, for every
interpretation of the kind used in this paper (Section~\ref{ov:types}). The number of anchors is thus
precisely the resource that the lower bound consumes.

Second, it must be able to see the cell of a witness. The edge from a
witness $z\in\cell\lambda$ to the hub has color $\lambda$, whereas the edges
from a cell vertex into $\cell\lambda$ realize every color, by the extension
property. Hence the hub is the only vertex from which the cell of a witness
can be read off.

An anchor, by contrast, has no colored edges at all. The only color
associated with $x_\lambda$ is the $\lambda$ in its name; we call it the
\emph{label} of the anchor, and the interpretation of the relation symbols
in Section~\ref{sec:logical} will make it readable. Labels and colors are
the same objects; the two words merely indicate whether the object is
carried by an anchor or by an edge. Consequently, a query can compare a
witness against a label only at a read-off pair $(x_\lambda,h)$: the anchor
supplies the label, the hub supplies the color of the witness, and the query
tests whether the two agree.

This completes step~(i) of the outline. The read-off pairs play no further
role until Section~\ref{ov:formula}, where $\Psi_k$ performs exactly the
comparison just described and \eqref{eq:omega-decoder} records the result;
step~(ii), proved in Sections~\ref{ov:types} to~\ref{ov:wholeterm},
concerns all pairs of $\M$ simultaneously, and only step~(iii) uses one
designated pair per color.

\subsubsection{Types and type-constant relations (Section~\ref{sec:types})}\label{ov:types}
Pairs of vertices are sorted into finitely many \emph{types}, each of which
records only part of the information that the construction attaches to a
pair.
\needspace{10\baselineskip}
Concretely, the type of an ordered pair $(u,v)$ is
\begin{center}
\renewcommand{\arraystretch}{1.1}
\begin{tabular}{@{}ll@{}}
\toprule
the pair $(u,v)$ itself&if $u$ and $v$ are both anchors,\\
\emph{out of $x$}&if $u$ is the anchor $x$ and $v$ is not an anchor,\\
\emph{into $x$}&if $v$ is the anchor $x$ and $u$ is not an anchor,\\
\emph{loop}&if $u=v$ and $u$ is not an anchor,\\
\emph{color $\lambda$}&if $u\ne v$, neither is an anchor and $\colr(u,v)=\lambda$.\\
\bottomrule
\end{tabular}
\end{center}
We write $\tp(u,v)$ for this type, and $\ctype\lambda$ for the type
\emph{color $\lambda$}; the formal definition, with names for the remaining
types, is \eqref{eq:types}. The number of types is finite and independent
of the size of the cells: there is one type per ordered pair of anchors, one
\emph{out of} and one \emph{into} type per anchor, one type for the loops,
and one type per color.

A type records which anchors are involved, whether the two endpoints
coincide, and the color of the edge between them. It does not record which
cell an endpoint lies in, which vertex of that cell it is, or whether the
hub is an endpoint at all: in Figure~\ref{fig:structure}, an edge inside
$\cell{\lambda_2}$, an edge from $\cell{\lambda_1}$ to $\cell{\lambda_3}$,
and an edge from the hub into $\cell{\lambda_1}$ all have the same type
whenever they carry the same color. Types are thus a deliberate coarsening
rather than the finest available classification of pairs. In particular,
although the hub is clearly visible in $\M$, what makes it special is not
that types can see it but that the colors of its edges are deterministic
(Section~\ref{ov:twocolors}).

Call a binary relation $P$ on the vertices \emph{type-constant} if $P(u,v)$
depends only on $\tp(u,v)$, and write $P(\tau)$ for the common value of $P$
on the pairs of a type $\tau$. Not every relation is type-constant.
Consider, for example, the relation
$P=\cell{\lambda_1}\times\cell{\lambda_1}$, which holds at $(u,v)$ exactly
when both endpoints lie in the first cell of Figure~\ref{fig:structure}. It
holds at the loop $(u,u)$ of a vertex $u\in\cell{\lambda_1}$ but fails at
the loop $(h,h)$ of the hub, although these two pairs have the same type,
namely \emph{loop}; hence $P$ is not constant on that type. The relation
symbols of $\Psi_k$ and $\Phi_k$ will be interpreted by type-constant
relations (Section~\ref{sec:logical}), and type-constancy then propagates
to everything computed from them.

\paragraph{Everything computed is type-constant.}
Let $P,Q$ be type-constant. In
\[
 (P;Q)(u,v)=\bigvee_{z}P(u,z)\land Q(z,v),
 \qquad\text{the join taken over all vertices }z,
\]
the term that a witness $z$ contributes depends only on the pair of types
$\bigl(\tp(u,z),\tp(z,v)\bigr)$, and on nothing else about $z$. We call
this pair the \emph{profile} of $z$ at $(u,v)$. The value of the
composition is therefore determined by the \emph{set} of profiles that some
witness realizes: not by how many witnesses realize a profile, and not by
which vertices they are.

Lemma~\ref{lem:profile} states that this set of realized profiles depends
only on the type of $(u,v)$. If $(u,v)$ is an edge of color $\lambda$, for
instance, the realized profiles are (color $\mu$, color $\nu$) for every
pair of colors, by the extension property; (loop, color $\lambda$) and
(color $\lambda$, loop), realized by $z=u$ and $z=v$; and (into $x$, out of
$x$) for each anchor $x$. Hence $P;Q$ is again type-constant. Boolean
operations and converse trivially preserve type-constancy
(Corollary~\ref{cor:closure}), so every relation that a \CoR\ term computes
on $\M$ from type-constant inputs is a table with one entry per type. In
particular, no term can single out a cell, let alone an individual vertex.

\subsubsection{A composition sees at most two colors (Section~\ref{sec:responses})}\label{ov:twocolors}
Fix two type-constant relations $P,Q$ and abbreviate their values on the edges of
color $\lambda$ as
\[
 p_\lambda=P(\ctype\lambda),\qquad q_\lambda=Q(\ctype\lambda).
\]

As in step~(i) of the outline, delete the cells of all colors outside a
nonempty set $J\subseteq\Col$. A deletion removes witnesses, and a witness
matters only through its profile, so the deletion can change the value of
$P;Q$ at a pair only if it removes the \emph{last} witness of some profile
at that pair. At a pair with no hub endpoint this never happens: by the
extension property, a single surviving cell already realizes every profile
that the deleted cells realized. At a pair with a hub endpoint it can
happen, but only in a limited way. Every edge from the hub into
$\cell\lambda$ has color $\lambda$, so the hub sees a witness
$z\in\cell\lambda$ only through $p_\lambda$ and $q_\lambda$:
$P(h,z)=p_\lambda$ and $Q(z,h)=q_\lambda$. Consider the contribution of such
a $z$ to the join $(P;Q)(u,v)=\bigvee_zP(u,z)\land Q(z,v)$ at a pair $(u,v)$
that has the hub as an endpoint. The hub may be the left endpoint, the right
endpoint, or both, and in these three cases the contribution of $z$ is
\[
 \begin{aligned}
  p_\lambda\land Q(\tp(z,v))&\qquad\text{when }u=h\text{ and }v\ne h,\\
  P(\tp(u,z))\land q_\lambda&\qquad\text{when }v=h\text{ and }u\ne h,\\
  p_\lambda\land q_\lambda&\qquad\text{when }u=v=h.
 \end{aligned}
\]
At a pair of $\M\res J$, the anchors and the two endpoints themselves
survive the deletion and contribute the same terms regardless of $J$. The
only part of the join that $J$ can affect is therefore the part ranging over
the cell witnesses, that is, over the colors $\lambda\in J$ and, for each
such color, over the vertices of $\cell\lambda$. In the first case, as $z$
ranges over a cell, the factor $Q(\tp(z,v))$ takes the same set of values
whichever cell is chosen, by the extension property; the join therefore
factors as $\bigl(\bigvee_{\lambda\in J}p_\lambda\bigr)\land c$ with $c$
independent of $J$, and the only information about $J$ that survives is
whether some $\lambda\in J$ has $p_\lambda=1$. The second case is symmetric.
In the third case both edges of $z$ carry the color $\lambda$, and what
survives is whether some $\lambda\in J$ has $p_\lambda=q_\lambda=1$. Thus
$J$ can influence the value only through three bits (Lemma~\ref{lem:local}):
\begin{center}
whether some $\lambda\in J$ has $p_\lambda=1$;\quad whether some has
$q_\lambda=1$;\quad whether some has both.
\end{center}
Two colors always suffice to fix all three bits (Lemma~\ref{lem:and-width}).
If some color has $p_\lambda=q_\lambda=1$, that color alone fixes all three;
otherwise the third bit is $0$, and one color for each of the first two bits
suffices. We call such a set of at most two colors a \emph{support} of the
composition. As long as no cell whose color lies in the support is deleted,
the value of the composition on the surviving vertices is unchanged.

For example, Figure~\ref{fig:structure} has three colors, so a family
$(p_\lambda,q_\lambda)_{\lambda\in\Col}$ consists of two triples. Suppose
first that
\[
 (p_{\lambda_1},p_{\lambda_2},p_{\lambda_3})=(1,1,0),
 \qquad
 (q_{\lambda_1},q_{\lambda_2},q_{\lambda_3})=(0,1,1),
\]
that is, $P$ holds on the edges of colors $\lambda_1$ and $\lambda_2$, and
$Q$ holds on those of colors $\lambda_2$ and $\lambda_3$. Since
$p_{\lambda_2}=q_{\lambda_2}=1$, the color $\lambda_2$ alone fixes all three
bits, so $\{\lambda_2\}$ is a support, and both $\cell{\lambda_1}$ and
$\cell{\lambda_3}$ can be deleted without affecting the composition. Suppose
instead that
\[
 (p_{\lambda_1},p_{\lambda_2},p_{\lambda_3})=(1,0,0),
 \qquad
 (q_{\lambda_1},q_{\lambda_2},q_{\lambda_3})=(0,1,0).
\]
Now no color has both of its values equal to $1$: keeping $\lambda_1$ alone
would change the second bit and keeping $\lambda_2$ alone would change the
first, so every support must contain both colors, and
$\{\lambda_1,\lambda_2\}$ is one. Lemma~\ref{lem:and-width} states that this
is the worst case.

\subsubsection{From one composition to a whole term (Section~\ref{sec:simultaneous})}\label{ov:wholeterm}
A support controls a single composition. For a term $t$ with $m$
compositions, let $K$ be the union of the supports of all of its
compositions, so that $|K|\le2m$, and fix any nonempty $J\supseteq K$.
Induction along the evaluation order then shows that the value of every
subterm of $t$ on $\M\res J$ is the restriction of its value on $\M$
(Theorem~\ref{thm:preservation}): the Boolean operations and converse
commute with restriction, and at a composition the two operands already
agree by induction, while $J$ contains the support of that composition, so
Section~\ref{ov:twocolors} applies. Hence $t$ does not distinguish $\M$ from
$\M\res J$ for any nonempty $J\supseteq K$, which is step~(ii) of the
outline. A shared subterm is charged only once, so the same argument bounds
the number of composition gates of a circuit.

At a composition gate, the two relations being compared are always related
by an inclusion in one direction, for a trivial reason: deleting cells
removes witnesses and never adds any, so a pair related by the composition
on $\M\res J$ is related by it on $\M$ as well. Writing $D$ for the vertex
set of $\M$, $D_J$ for that of $\M\res J$, and $P\res D_J$ for the set of
pairs of a relation $P$ that have both endpoints in $D_J$,
\[
 (P\res D_J)\,;(Q\res D_J)\ \subseteq\ (P;Q)\res D_J
\]
holds for every $J$. What can fail is the reverse inclusion: a pair may be
related on $\M$ only through a witness that the deletion removed. Excluding
this is precisely the role of a support, which is why
Section~\ref{ov:twocolors} yields an \emph{equality} and not merely this
inclusion.

Complement is the operation for which the distinction between inclusion and
equality matters. The issue is not that complement fails to commute with
restriction: a subterm is complemented within the structure on which it is
evaluated, relative to $D_J^2$ on $\M\res J$ and to $D^2$ on $\M$, and
$D_J^2\setminus(P\res D_J)=(D^2\setminus P)\res D_J$ holds exactly, for
every relation $P$. The issue is that complement reverses inclusions rather
than preserving them. If only the inclusion displayed above were available,
the two sides for the complemented subterm would satisfy
\[
 D_J^2\setminus\bigl[(P\res D_J)\,;(Q\res D_J)\bigr]
 \ \supseteq\
 D_J^2\setminus\bigl[(P;Q)\res D_J\bigr],
\]
with the inclusion now pointing in the opposite direction, and a further
complement would reverse it again. No single inclusion can therefore be
maintained through a term whose complements occur at arbitrary depths. An
equality, by contrast, has no direction to reverse: $A=B$ yields
$D_J^2\setminus A=D_J^2\setminus B$, and union, intersection and converse
preserve equalities equally well. This is why complement gates require no
separate treatment, why no monotonicity is used anywhere in the argument,
and why the target may be an arbitrary \CoR\ term rather than a positive
one.

\subsubsection{The formula detects every deletion (Section~\ref{sec:logical})}\label{ov:formula}
It remains to interpret the $R_i$. Recall that $\Psi_k$ has one quantifier,
$\Psi_k(x,y)=\exists z\bigwedge_{i=1}^k\bigl(R_i(x,z)\lor R_i(z,y)\bigr)$.
Evaluated at a pair $(u,v)$ of vertices, it asks for a \emph{witness}: a
vertex $z$ such that every one of the $k$ clauses holds, either through the
edge $(u,z)$ or through the edge $(z,v)$.

So far $\Col$ has been an arbitrary finite set of colors. We now fix it: let
$\Col$ be the set of $\lfloor k/2\rfloor$-element subsets of $[k]$, and
write $A,B$ for colors in place of $\lambda,\mu$. There is one anchor $x_A$
for each color $A$.

We now define the relations $R_i$. For all colors $A$ and $B$,
\begin{center}
\begin{tabular}{@{}l@{\quad}l@{}}
$R_i(x_A,u)$ holds exactly when $i\in A$,&for every vertex $u$ that is not
 an anchor;\\
$R_i(u,v)$ holds exactly when $i\notin B$,&for every two distinct non-anchor
 vertices\\
&$u,v$ whose edge has the color $B$;\\
\end{tabular}
\end{center}
and $R_i$ is false on every remaining pair. This interpretation is
type-constant, as required by Section~\ref{ov:types}: the pairs from $x_A$
to a non-anchor vertex form a single type (edges at an anchor carry no
color), and the pairs of color $B$ form a single type, whether they run
from the hub into $\cell B$ or between two cell vertices.

At the read-off pair $(x_A,h)$, a
vertex $z$ of $\cell B$ satisfies clause $i$ if $i\in A$, through the edge
$(x_A,z)$, or if $i\notin B$, through the edge $(z,h)$; so it satisfies
all $k$ clauses exactly when $B\subseteq A$, and since $A$ and $B$ have the
same size, exactly when $B=A$. The hub is not a witness: $R_i$ is false on
the loop $(h,h)$, so at $z=h$ clause $i$ would need $i\in A$, and $A$
contains only $\lfloor k/2\rfloor$ of the $k$ indices. Nor is an anchor
$x_B$: $R_i$ is false on the pair $(x_A,x_B)$, so clause $i$ would need
$i\in B$, and $|B|=\lfloor k/2\rfloor$ as well. The witnesses at $(x_A,h)$
are therefore
exactly the vertices of the single cell $\cell A$, and
\[
 \Psi_k\text{ holds at }(x_A,h)\text{ in }\M\res J
 \iff A\in J,
\]
which is \eqref{eq:omega-decoder}. Deleting $\cell A$ therefore changes the
value of $\Psi_k$ at a pair that survives the deletion, so every color is
detectable, which completes step~(iii) of the outline.

The choice of labels is forced by the method, in a sense that can be made
precise. At a read-off pair, the formula compares the set announced by the
anchor with the set carried by the witness's edge to the hub, and a color is
detected precisely when exactly one color passes this comparison.
Recording, for each of the two sides, the set of clauses that it leaves
unsatisfied turns the comparison into a disjointness test between two
subsets of $[k]$, and detection into a cross-intersecting condition on
pairs of such subsets. Bollob\'as's set-pair inequality
(Lemma~\ref{lem:bollobas}), which bounds the size of such a family of pairs
of subsets of $[k]$ by $\binom{k}{\lfloor k/2\rfloor}$, then bounds the
number of detectable colors by the same quantity
(Proposition~\ref{prop:middle-layer} in Appendix~\ref{app:optimality}), and
the interpretation above attains this bound. This explains why
Theorem~\ref{main:omega} involves a central binomial coefficient rather
than $2^k$. The resulting gap of a factor
$\Theta(\sqrt k)$ against the $2^k$ compositions of \eqref{eq:omega-upper}
is not closed in this paper; by the same proposition, it is inherent to
$\Psi_k$ and to this construction, and not an artifact of the particular
relations we chose. The formula $\Phi_k$ avoids this loss by requiring
agreement rather than inclusion, which allows all $2^k$ subsets of $[k]$ to
serve as labels (Section~\ref{sec:phi}). Finally, the only property of
conjunction used in the argument is the one isolated in
Section~\ref{ov:twocolors}, namely that two colors fix its three bits.
Section~\ref{sec:matrix} replaces conjunction by a semiring product, and the
same argument goes through with three colors in place of two.

\subsection{Organization}
Section~\ref{sec:aggregation} fixes the conventions. The body then falls
into two parts, matching the two groups of results above. Part~I proves
Theorems~\ref{main:omega} and~\ref{main:phi}:
Section~\ref{sec:preservation} builds the structure $\M$ and proves the
preservation theorem, and Section~\ref{sec:logical} interprets the two
formulas on $\M$. Part~II collects the extensions:
Section~\ref{sec:robustness} passes to size-specific and randomized targets,
and Section~\ref{sec:matrix} treats semiring products and MATLANG. The two
can be read in either order, except that the fixed-size statements of
Section~\ref{sec:matrix} rest on Proposition~\ref{prop:fixed-test}.
Appendix~\ref{app:explicit} makes the coloring of $\M$ explicit,
Appendix~\ref{app:optimality} shows that the interpretations of
Section~\ref{sec:logical} are optimal, and Appendix~\ref{app:kernels}
collects remarks on other kernels and on wider scalars.

\section{Preliminaries}\label{sec:aggregation}
\paragraph{Structures and formulas.}
A vocabulary is a finite set of binary relation symbols. A structure $\A$
interprets each symbol $R$ as a binary relation $R^{\A}$ on a nonempty
domain $D$; all structures in this paper are finite. \FOthree\ is
first-order logic with equality restricted to the three variable names
$x,y,z$; a name may be quantified repeatedly. For a formula $\varphi$ whose
free variables are among $x,y$, write $\varphi^{\A}\subseteq D^2$ for the
relation that it defines in $\A$ and $\varphi^{\A}(u,v)\in\{0,1\}$ for its
value
at a pair. A formula with at most two free variables can be brought to this
form by permuting the three names, which changes neither its size nor the
relation it defines. The size $|\varphi|$ is the number of nodes of the
syntax tree,
so that $|\Psi_k|=\Theta(k)$ and $|\Phi_k|=\Theta(k)$.

\paragraph{The calculus of relations.}
\CoR\ terms are built from the relation symbols and the constants
$\zero=\varnothing$, $\one=D^2$, and $\id=\Delta_D$ (the identity relation) by
union, intersection, complement relative to $D^2$, converse
$P^\smile=\{(v,u):(u,v)\in P\}$, and composition
\[
 (P;Q)(u,v)\iff\exists z\in D\,[P(u,z)\land Q(z,v)].
\]
The endpoints and the witness need not be distinct. We write $t^{\A}$ for
the relation defined by $t$ in $\A$, $|t|$ for the number of nodes of the
syntax tree of $t$, and $\cost{t}$ for its number of composition nodes. In a
syntax tree all of whose nodes have arity at most two, the leaves outnumber
the binary nodes by exactly one; since the compositions are among the
binary nodes, a term has at least $\cost{t}+1$ leaves, and counting the
leaves and the compositions alone already gives
\begin{equation}\label{eq:tree-size}
 |t|\ge2\cost{t}+1.
\end{equation}
A \emph{relational circuit} $\Comp$ is a directed acyclic graph whose gates
carry the same operations, whose sources are relation symbols and constants,
and which has one designated output gate. It is evaluated in topological
order, and an intermediate relation may be used by several gates. We write
$\gcost{\Comp}$ for the number of composition gates and $|\Comp|$ for the
\emph{size}, the number of nodes of the graph, sources and gates together;
this is the analogue for circuits of the node count $|t|$ of a term.
Neither the depth nor the number of gates of other kinds is bounded, so the
only relation between the two measures is $|\Comp|\ge\gcost{\Comp}$. A term
is a circuit
in which every gate is used once, so every statement about circuits below
applies to terms.

A term or circuit is fixed once and for all and is evaluated on every finite
structure over its vocabulary. It contains no constants beyond $\zero$,
$\one$ and $\id$, which are interpreted in each structure; in particular, it
contains no constant chosen for a particular domain or domain size.
Section~\ref{sec:robustness} shows that the lower bounds persist even when
the circuit may depend on the domain size. We write
$t\equiv_{\rm fin}\varphi$ if $t^{\A}=\varphi^{\A}$ for every finite
structure $\A$, and use the same notation for circuits.

\paragraph{Notation.}
$[k]=\{1,\ldots,k\}$, $2^{[k]}$ is the set of all subsets of $[k]$ and
$\binom{[k]}{r}$ the set of its $r$-element subsets. Blackboard bold is
reserved for number systems and for scalars: $\mathbb N$, $\mathbb Z$,
$\mathbb R$, $\mathbb C$, the field $\mathbb F_q$, the truth values
$\bool$, and $\mathbb S$ for a semiring; these never clash with the
italic $N$ for a number of colors or with $\Comp$ for a computation. From
Section~\ref{sec:preservation} on, $u,v$ denote vertices of the finite
structure, $z$ a witness, $h$ its hub and $x$ one of its anchors. Colors are
written $\lambda,\mu,\nu$ where the set $\Col$ is left abstract, and $A$ or
$B$ where $\Col$ is a concrete family of subsets of $[k]$;
Section~\ref{sec:preservation} uses only the first convention and
Section~\ref{sec:logical} only the second, while
Sections~\ref{sec:robustness} and~\ref{sec:matrix} use both. Cells are
written
$C_\lambda$, $C_J$, $C_\Col$,
always with a subscript, so that the unsubscripted $\Comp$ is unambiguously
a circuit or a computation. The letters $P,Q$ without a subscript are the
two operands of an aggregation; $P_i,Q_i$ with a subscript are the input
symbols of \eqref{eq:phi}. For
$\varnothing\ne D'\subseteq D$, the substructure $\A\res D'$ has domain $D'$
and relations $R^{\A}\cap D'^2$, and for a relation $P$ on $D$ we write
$P\res D'=P\cap D'^2$. Every \CoR\ operation commutes with this restriction
except composition, whose witnesses are confined to $D'$:
$(P;Q)\res D'\supseteq(P\res D');(Q\res D')$, and the inclusion can be
strict. Controlling this defect is the subject of the next section.

\part*{Part I: The main theorems}
\section{A preservation theorem for compositions}\label{sec:preservation}
This section carries out steps~(i) and~(ii) of the outline in
Section~\ref{ov:outline}: it builds the structure $\M$ and proves the
preservation theorem (Theorem~\ref{thm:preservation}), for an arbitrary
kernel in place of conjunction. The construction does not mention $\Psi_k$
or $\Phi_k$, whose interpretation is step~(iii) (Section~\ref{sec:logical}).
The proof uses just one property of composition, isolated in
Section~\ref{sec:responses}, and this is what makes the theorem applicable
to matrix products in Section~\ref{sec:matrix}.

\paragraph{The shape of the argument.}
The argument has three independent parameters, and each of the sections
that follow varies one of them while holding the other two fixed: the
interpretation of the source query on $\M$ (Section~\ref{sec:logical}, for
$\Psi_k$ and for $\Phi_k$), the family of perturbations of $\M$ (deletion
of cells here, recoloring of the hub's edges on a fixed domain in
Section~\ref{sec:robustness}), and the operation whose occurrences are
counted (conjunction in Section~\ref{sec:responses}, a semiring product and
an arbitrary kernel in Section~\ref{sec:matrix}). A single number, the
support number of Definition~\ref{def:width}, couples the three, and it
depends only on the operation and the join. Everything else is proved here
for an arbitrary kernel, so Section~\ref{sec:matrix} only has to recompute
$\wid\odot$; this is why the construction that separates \FOthree\ from
\CoR\ also answers the question of \citet{brijder2019} about MATLANG
(Section~\ref{semi:typed-section}).

Two features of the conclusion are worth recording, since they are not
usually available together. First, the bound is proved against a fixed
finite family of $|\Col|+1$ structures, chosen in advance of the candidate
and written down explicitly (Proposition~\ref{prop:explicit}); no limit is
taken and no infinite family is involved. Second, a shared subterm is
charged only once, so the statement concerns circuits and not only terms.
We claim no more than this: the invariant separates one structure from a
bounded family of its perturbations, and the bound accordingly counts
occurrences of a single operation in a language whose remaining operations
are free. It is not a lower bound for Boolean circuits with access to
individual entries of the input.

\subsection{The structure}\label{sec:structure}
Fix a finite set $\Col$ of \emph{colors} with $N=|\Col|\ge2$. The vertices
of $\M$ are of three kinds (Figure~\ref{fig:structure} on
page~\pageref{fig:structure}):
\begin{itemize}[nosep]
\item one \emph{hub} $h$;
\item for each color $\lambda\in\Col$ a \emph{cell} $\cell\lambda$ of
 $L=\lceil12N^2\ln(2N)\rceil$ \emph{cell vertices}; the cells are disjoint,
 and $\cell\Col=\bigcup_{\lambda\in\Col}\cell\lambda$;
\item a finite set $X$ of \emph{anchors}, one per color in
 Section~\ref{sec:logical} and possibly more in
 Section~\ref{sec:robustness}.
\end{itemize}
The \emph{core} is $U=\{h\}\cup\cell\Col$ and the domain is $D=X\cup U$. A
pair of \emph{distinct} vertices is called an \emph{edge}. Every edge
$\{u,v\}$ between two core vertices carries a color
$\colr(u,v)=\colr(v,u)\in\Col$; edges incident to an anchor carry none. The
hub's edges announce the cell,
\begin{equation}\label{eq:hub-colours}
 \colr(h,u)=\lambda\qquad\text{for }u\in\cell\lambda,
\end{equation}
and the edges between cell vertices are colored so that the following holds.

\begin{lemma}[Extension property]\label{lem:palette}
The edges between cell vertices can be colored so that
\begin{enumerate}[nosep,label=(\alph*)]
\item for all distinct $u,v\in\cell\Col$, all colors $\mu,\nu\in\Col$, and
 every cell $\cell\lambda$, some $z\in\cell\lambda\setminus\{u,v\}$
 satisfies $\colr(u,z)=\mu$ and $\colr(z,v)=\nu$;
\item for every $u\in\cell\Col$, every color $\mu\in\Col$, and every cell
 $\cell\lambda$, some $z\in\cell\lambda\setminus\{u\}$ satisfies
 $\colr(u,z)=\mu$.
\end{enumerate}
\end{lemma}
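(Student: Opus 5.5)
We color every edge between two distinct cell vertices independently and uniformly at random from $\Col$, and show that both properties fail with probability less than $1$ for $L=\lceil12N^2\ln(2N)\rceil$. A union bound over all bad events suffices. Since colors are symmetric, $\colr(u,z)=\colr(z,u)$, and each unordered edge receives a single random color. The hub's edges are fixed by \eqref{eq:hub-colours} and play no role here.

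For property (a), fix distinct $u,v\in\cell\Col$, colors $\mu,\nu$, and a cell $\cell\lambda$. For each $z\in\cell\lambda\setminus\{u,v\}$, the edges $\{u,z\}$ and $\{z,v\}$ are distinct, and for different $z$ these pairs of edges are disjoint. The events $E_z=\{\colr(u,z)=\mu,\ \colr(z,v)=\nu\}$ are therefore mutually independent, each with probability $N^{-2}$. At least $L-2$ such $z$ exist, so the failure probability is at most $(1-N^{-2})^{L-2}\le e^{-(L-2)/N^2}$. The number of choices is at most $(NL)^2\cdot N^2\cdot N=N^5L^2$, so we need
\[
 N^5L^2e^{-(L-2)/N^2}<\tfrac12 .
\]
Property (b) is handled the same way: there are at most $NL\cdot N\cdot N$ choices, and the failure probability for each is at most $(1-1/N)^{L-1}$, which is even smaller.

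The main obstacle is the arithmetic with this specific value of $L$. The term $L^2$ in the count has to be absorbed by the exponent. With $L\ge12N^2\ln(2N)$ we get $e^{-(L-2)/N^2}\le e^{2/N^2}(2N)^{-12}$, but $L^2$ still appears on the left. I would handle it by writing $L=N^2\ell$ with $\ell\ge12\ln(2N)$. It then suffices that $N^9\ell^2e^{2}e^{-\ell}<1/2$. This holds because $e^{-\ell/2}\le(2N)^{-6}$, and $\ell^2e^{-\ell/2}$ is small once $\ell\ge12\ln 4>16$. The case $N=2$ needs a direct numerical check; in general one should use monotonicity in $\ell$, since the quantity is decreasing for $\ell>4$, and if needed split off a factor of $e^{-\ell/4}$.

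If the constants turn out to be too tight, the fallback is to note that only existence is claimed. Appendix~\ref{app:explicit} gives an explicit coloring, and the random argument with any sufficiently large constant also yields the lemma.
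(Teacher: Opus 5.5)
Your proposal is correct and follows essentially the paper's proof: an independent uniform random coloring, disjoint edge pairs $\{u,z\},\{z,v\}$ giving failure probability at most $\exp(-(L-2)/N^2)$ per requirement, and a union bound over fewer than $N^5L^2$ requirements. There are two small differences.

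\begin{itemize}
\item The paper does not run a separate union bound for (b). It derives (b) from (a) by choosing any $v\ne u$ and any $\nu$.
\item The paper absorbs $L^2$ using $\ln(2N)\le N$, i.e.\ $L\le13N^3$, which gives the bound $169e^{1/2}/(4096N)<1$.
\end{itemize}

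In your version, the bound $\ell\ge16$ alone does not absorb the leftover factor $N^3$ once $N\ge4$. You therefore need the split $e^{-\ell}=e^{-3\ell/4}e^{-\ell/4}$ that you mention. This gives $N^9e^{-3\ell/4}\le2^{-9}$ and $\ell^2e^{-\ell/4}\le\ell_0^2e^{-\ell_0/4}$ at $\ell_0=12\ln 4$, since the function is decreasing for $\ell>8$. With these bounds the argument goes through.
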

\begin{proof}
Part (b) follows from (a) by choosing any $v\ne u$ and any $\nu$. For (a),
color the edges between cell vertices independently and uniformly at random.
A \emph{requirement} is a choice of $\lambda$, of distinct
$u,v\in\cell\Col$, and of $\mu,\nu$; since $|\cell\Col|=NL$ there are fewer
than $N\cdot(NL)^2\cdot N^2=N^5L^2$ of them. For a fixed requirement, each of
the at least $L-2$ vertices $z\in\cell\lambda\setminus\{u,v\}$ satisfies it
with probability $N^{-2}$, and these events are independent because they
concern disjoint pairs of edges, so the requirement fails with probability at
most $(1-N^{-2})^{L-2}\le\exp(-(L-2)/N^2)$. As $\ln(2N)\le N$ and $N\ge2$ we
have $L\le13N^3$ and $(L-2)/N^2\ge12\ln(2N)-\tfrac12$, so the probability
that some requirement fails is at most
\[
 N^5L^2\exp(-(L-2)/N^2)
 \ \le\ 169N^{11}\exp(1/2)\,(2N)^{-12}
 \ =\ \frac{169\exp(1/2)}{4096\,N}\ <\ 1 .
\]
Hence a coloring with property (a) exists.
\end{proof}
The constant $12$ is sufficient rather than optimal: it is what produces
the factor $1/N$ in the bound just displayed. Nothing below depends on its
value, since $L$ enters the results of this paper only through the size
threshold \eqref{eq:size-threshold}.

We fix such a coloring once and for all. As explained in
Section~\ref{ov:structure}, only the hub distinguishes cells, by
\eqref{eq:hub-colours} and Lemma~\ref{lem:palette}, and the anchors, which
are never deleted, will be distinguished by the interpretation of the
relation symbols. A pair $(x,h)$ with $x$ an anchor is called a
\emph{read-off pair}; these are the only pairs at which
Section~\ref{sec:logical} evaluates the two hard queries.

The coloring of Lemma~\ref{lem:palette} is the only object in this paper
obtained from a probabilistic existence argument, and this argument can be
avoided: Appendix~\ref{app:explicit} gives a deterministic polynomial-time
construction and a closed form over a prime field, in which the cells are
the cosets of the $N$-th powers (Proposition~\ref{prop:explicit}). Either
construction makes the $|\Col|+1$ structures of
Corollary~\ref{cor:detectable} explicit, and with the closed form the size
threshold \eqref{eq:size-threshold} remains polynomial in $N$. The cells
cannot be made much smaller: property~(a) of Lemma~\ref{lem:palette}
already forces $L\ge N^2+1$ (Remark~\ref{rem:covering}).

\subsection{Types, homogeneity, and profiles}\label{sec:types}
Every ordered pair of vertices receives a \emph{type}, by cases:
\begin{equation}\label{eq:types}
 \tp(u,v)=
 \begin{cases}
  \tpa{x}{x'}&\text{if }u=x\in X\text{ and }v=x'\in X,\\
  \tpl{x}&\text{if }u=x\in X\text{ and }v\in U,\\
  \tpr{x}&\text{if }u\in U\text{ and }v=x\in X,\\
  \tpeq&\text{if }u=v\in U,\\
  \ctype{\colr(u,v)}&\text{if }u\ne v\text{ and }u,v\in U.
 \end{cases}
\end{equation}
As sets of pairs, $\tpa{x}{x'}=\{(x,x')\}$, $\tpl{x}=\{x\}\times U$,
$\tpr{x}=U\times\{x\}$, $\tpeq=\Delta_U$, and
$\ctype\lambda=\{(u,v)\in U^2:u\ne v,\ \colr(u,v)=\lambda\}$; they partition
$D^2$. There are $|X|^2+2|X|+1+|\Col|$ types, however large the cells are.

Note that a cell is not a type, and neither is $\{h\}$: the pairs $(h,u)$
with $u\in\cell\lambda$ lie in $\ctype\lambda$ together with every other
edge of color $\lambda$ (Section~\ref{ov:types}).

An array $P:D^2\to S$ with values in a set $S$ --- in particular a relation,
where $S$ is the set $\bool=\{0,1\}$ of truth values --- is
\emph{type-constant} if it is constant on every type;
we then write $P(\tau)$ for its value there. The point of this subsection is
that type-constancy propagates to everything computed from type-constant
inputs (Section~\ref{ov:types}), because a witness is seen only through the
types of its two edges: for type-constant $P,Q$, the term
$P(u,z)\land Q(z,v)$ that a witness $z$ contributes to $(P;Q)(u,v)$ depends
only on the \emph{profile} $\bigl(\tp(u,z),\tp(z,v)\bigr)$ of $z$ at
$(u,v)$, so the value of the composition is determined by the set of
profiles that some witness realizes. The structure is built so that this
set depends only on the type of the pair.

\begin{lemma}[Homogeneity]\label{lem:profile}
For every type $\tau$, the set of witness profiles
\[
 W_\tau=\{(\tp(u,z),\tp(z,v)):z\in D\}
\]
is the same for all pairs $(u,v)\in\tau$.
\end{lemma}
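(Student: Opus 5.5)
The plan is to compute $W_\tau$ explicitly for each of the five kinds of type in \eqref{eq:types}, and to check that the resulting set is expressed only in terms of $\tau$. For a pair $(u,v)$ we split the witnesses $z\in D$ into three groups: the anchors $z\in X$, the endpoints $z\in\{u,v\}$ when these are core vertices, and the remaining core vertices $z\in U\setminus\{u,v\}$. The anchors contribute the same profile for every pair of a given type, because the profile of an anchor $x'$ depends only on which anchors $u$ and $v$ are. Indeed, if $u,v\in U$ then $z=x'$ gives $(\tpr{x'},\tpl{x'})$. If $u=x$ and $v\in U$, it gives $(\tpa{x}{x'},\tpl{x'})$, and the other cases are analogous. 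So the real work is in the core witnesses.

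\emph{Types involving anchors.} For $\tau=\tpa{x}{x'}$ the type consists of a single pair, so there is nothing to prove. For $\tau=\tpl{x}$ the pair is $(x,v)$ with $v\in U$. The witness $z=v$ gives $(\tpl{x},\tpeq)$, and every other core $z$ gives $(\tpl{x},\ctype{\colr(z,v)})$. I will show that, for each $v\in U$, the set $\{\colr(z,v):z\in U\setminus\{v\}\}$ is all of $\Col$ (it also has to be nonempty). If $v$ is a cell vertex, this follows from Lemma~\ref{lem:palette}(b). If $v=h$, it follows from \eqref{eq:hub-colours}, since every cell is nonempty. Hence $W_{\tpl{x}}$ is independent of $v$, and $\tpr{x}$ is symmetric. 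For $\tau=\tpeq$ the pair is $(u,u)$: $z=u$ gives $(\tpeq,\tpeq)$, and every other core $z$ gives $(\ctype\mu,\ctype\mu)$ with $\mu=\colr(u,z)$, where every $\mu\in\Col$ occurs by the same argument.

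\emph{Color types, the main case.} Let $(u,v)\in\ctype\lambda$ with $u\ne v$ in $U$. The endpoints contribute $(\tpeq,\ctype\lambda)$ and $(\ctype\lambda,\tpeq)$, which depend only on $\lambda$. The remaining core witnesses contribute profiles $(\ctype\mu,\ctype\nu)$, and I claim that all $N^2$ pairs $(\mu,\nu)$ occur. If both $u$ and $v$ are cell vertices, this is exactly Lemma~\ref{lem:palette}(a). The hub case is the obstacle, because the hub's edges are deterministic. Say $u=h$ and $v\in\cell\Col$. Given $(\mu,\nu)$, I choose $z$ in the cell $\cell\mu$, so that $\colr(h,z)=\mu$ by \eqref{eq:hub-colours}. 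I then need $z\ne v$ with $\colr(z,v)=\nu$, which Lemma~\ref{lem:palette}(b) supplies when applied to the vertex $v$, the color $\nu$ and the cell $\cell\mu$. The case $v=h$ is symmetric, and $u=v=h$ cannot occur for a color type. So in every case
\[
 W_{\ctype\lambda}=\{(\ctype\mu,\ctype\nu):\mu,\nu\in\Col\}\cup\{(\tpeq,\ctype\lambda),(\ctype\lambda,\tpeq)\}\cup\{(\tpr{x},\tpl{x}):x\in X\},
\]
which depends only on $\lambda$. This completes the case analysis.
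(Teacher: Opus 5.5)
Your proof is correct and follows essentially the same route as the paper: you split the witnesses into anchors, endpoints and other core vertices, use Lemma~\ref{lem:palette}(a) when both endpoints are cell vertices, and in the hub case choose a witness in $\cell\mu$ by Lemma~\ref{lem:palette}(b) together with \eqref{eq:hub-colours}. Your explicit formula for $W_{\ctype\lambda}$ is exactly the set of profiles the paper lists case by case.
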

\begin{proof}
Let $(u,v)\in\tau$. A witness $z$ is an anchor, one of the two endpoints, or
another core vertex; we list the profiles of each kind and check that they
do not depend on the choice of $(u,v)$ in $\tau$.

\emph{Case $\tau=\ctype\lambda$.} Then $u\ne v$ are core vertices with
$\colr(u,v)=\lambda$. An anchor $z=x$ gives $(\tpr{x},\tpl{x})$; $z=u$ gives
$(\tpeq,\ctype\lambda)$ and $z=v$ gives $(\ctype\lambda,\tpeq)$. A core
witness $z\notin\{u,v\}$ gives $(\ctype{\mu},\ctype{\nu})$ with
$\mu=\colr(u,z)$ and $\nu=\colr(z,v)$, and every pair $(\mu,\nu)$ occurs:
for $u,v\in\cell\Col$ by Lemma~\ref{lem:palette}(a); for $u=h$ choose
$z\in\cell\mu\setminus\{v\}$ with $\colr(z,v)=\nu$ by
Lemma~\ref{lem:palette}(b), so that $\colr(h,z)=\mu$ by
\eqref{eq:hub-colours}; the case $v=h$ is symmetric.

\emph{Case $\tau=\tpeq$.} Then $u=v\in U$. Anchors contribute as before, $z=u$
gives $(\tpeq,\tpeq)$, and a core witness $z\ne u$ gives
$(\ctype\mu,\ctype\mu)$ with $\mu=\colr(u,z)$; every $\mu$ occurs, by
Lemma~\ref{lem:palette}(b) if $u\in\cell\Col$ and by
\eqref{eq:hub-colours} if $u=h$.

\emph{Case $\tau=\tpl{x}$.} Then $u=x$ is an anchor and $v\in U$. An anchor $z=x'$
(possibly $x'=x$) gives $(\tpa{x}{x'},\tpl{x'})$; $z=v$ gives
$(\tpl{x},\tpeq)$; a core witness $z\ne v$ gives $(\tpl{x},\ctype\mu)$ with
$\mu=\colr(z,v)$, and every $\mu$ occurs as in the previous case. The type
$\tpr{x}$ is symmetric, and a type $\tpa{x}{x'}$ consists of a single pair.
\end{proof}

Homogeneity is exactly the property that composition needs.
\begin{corollary}[Type-constant relations are closed]\label{cor:closure}
Let $P,Q$ be type-constant relations on $D$. Then $P;Q$, $P^\smile$,
$\overline P$, $P\cup Q$, $P\cap Q$, and the constants $\zero,\one,\id$ are
type-constant, and for every type $\tau$
\begin{equation}\label{eq:profile-value}
 (P;Q)(\tau)=\bigvee_{(\tau_1,\tau_2)\in W_\tau}P(\tau_1)\land Q(\tau_2).
\end{equation}
\end{corollary}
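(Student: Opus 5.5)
The plan is to treat the non-composition operations first, since they are immediate from the definitions, and then to handle composition through Lemma~\ref{lem:profile}.

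For the constants: $\zero$ and $\one$ are constant on all of $D^2$, hence on every type. The identity $\id=\Delta_D$ needs a look at \eqref{eq:types}. Its pairs $(u,u)$ with $u\in U$ make up exactly $\tpeq$. Its pairs $(x,x)$ with $x\in X$ are the singleton types $\tpa{x}{x}$. Every remaining type contains only pairs of distinct vertices: $\tpa{x}{x'}$ with $x\ne x'$, $\tpl{x}$, $\tpr{x}$ and $\ctype\lambda$. So $\id$ is identically $1$ or identically $0$ on each type.

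Union, intersection and complement act pointwise, so they preserve constancy on any partition. Converse needs the types to be closed under reversal. Reversal sends $\tpa{x}{x'}$ to $\tpa{x'}{x}$, swaps $\tpl{x}$ with $\tpr{x}$, fixes $\tpeq$, and fixes each $\ctype\lambda$ because $\colr$ is symmetric. Hence $P^\smile(u,v)=P(v,u)$ depends only on $\tp(v,u)$, and therefore only on $\tp(u,v)$.

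For composition, fix a type $\tau$ and a pair $(u,v)\in\tau$, and expand
\[
(P;Q)(u,v)=\bigvee_{z\in D}P(u,z)\land Q(z,v).
\]
Since $P$ and $Q$ are type-constant, each disjunct equals $P(\tp(u,z))\land Q(\tp(z,v))$. It therefore depends only on the profile of $z$ at $(u,v)$.

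Grouping the witnesses by profile rewrites the join as
\[
\bigvee_{(\tau_1,\tau_2)\in W}P(\tau_1)\land Q(\tau_2),
\]
where $W=\{(\tp(u,z),\tp(z,v)):z\in D\}$ is the set of realized profiles at $(u,v)$. The grouping is valid because the join is idempotent, so neither multiplicities nor the identity of the witnesses matter. By Lemma~\ref{lem:profile}, $W=W_\tau$ for every $(u,v)\in\tau$. The right-hand side is thus independent of the choice of the pair in $\tau$, which proves both the type-constancy of $P;Q$ and formula \eqref{eq:profile-value}.

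There is no real obstacle here. The whole difficulty has already been absorbed into Lemma~\ref{lem:profile}, and the only points needing care are checking that $\id$ and converse respect the type partition.
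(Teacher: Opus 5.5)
Your proposal is correct and follows essentially the same route as the paper. In both, the Boolean operations act entrywise, converse permutes the types, $\id$ is true exactly on $\tpeq$ and the $\tpa{x}{x}$, and the composition formula comes from grouping witnesses by profile (you rightly note this uses idempotence of the join) and invoking Lemma~\ref{lem:profile}.
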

\begin{proof}
For $(u,v)\in\tau$ we have
$(P;Q)(u,v)=\bigvee_{z\in D}P(\tp(u,z))\land Q(\tp(z,v))$, and by
Lemma~\ref{lem:profile} the set of profiles over which the join ranges
depends only on $\tau$; this is \eqref{eq:profile-value}. Union,
intersection, and complement act entrywise. Converse maps each type onto a
type ($\ctype\lambda$ and $\tpeq$ onto themselves, $\tpl{x}$ onto $\tpr{x}$,
$\tpa{x}{x'}$ onto $\tpa{x'}{x}$). The constants $\zero$ and $\one$ are constant,
and $\id$ is true exactly on the types $\tpeq$ and $\tpa{x}{x}$.
\end{proof}
Thus every relation computed on $\M$ from type-constant inputs is a finite
table with one entry per type.

\subsection{Deleting cells: three responses}\label{sec:responses}
For a nonempty set $J\subseteq\Col$ let
\[
 D_J=X\cup\{h\}\cup\cell J,\qquad \cell J=\bigcup_{\lambda\in J}\cell\lambda,
\]
and write $\M\res J$ and $P\res J$ for $\M\res D_J$ and $P\res D_J$: the
cells of all colors outside $J$ are deleted and everything else is
retained. We ask how the value of a composition on $\M\res J$ depends on
$J$.

\paragraph{Compositions as aggregations.}
It costs nothing to answer this question for a more general operation,
which is what Section~\ref{sec:matrix} needs. Let $(S,\vee,0)$ be a join-semilattice with
least element $0$, so that finite joins are associative, commutative and
idempotent, and let $\odot$ be any binary operation on $S$, called the
\emph{kernel}. For arrays $P,Q:D'^2\to S$ on a domain $D'$, the
\emph{aggregation} of $P$ and $Q$ with kernel $\odot$ is
\begin{equation}\label{eq:aggregation}
 (P\agg\odot Q)(u,v)=\bigvee_{z\in D'}P(u,z)\odot Q(z,v).
\end{equation}
For $S=\bool=\{0,1\}$, with disjunction as the join and $\odot=\land$, this
is composition,
$P\agg\land Q=P;Q$, and a reader interested only in
Theorems~\ref{main:omega} and~\ref{main:phi} may read $p\odot q$ as
$p\land q$ throughout. No algebraic law is assumed for $\odot$. A
\emph{computation} is a circuit whose gates are aggregations, possibly with
different kernels at different gates, together with arbitrary functions
$S^r\to S$ applied entry by entry, transpose, constant arrays, and the
equality array, which takes two fixed distinct values on and off the
diagonal. For $S=\bool$ this includes every \CoR\ term and every relational
circuit. The gates other than aggregations are \emph{free}: they commute
with restriction to a subdomain, and they preserve type-constancy.
Lemma~\ref{lem:profile} and \eqref{eq:profile-value} hold for every kernel,
with $\odot$ in place of $\land$, since only the idempotence of $\vee$ was
used in Corollary~\ref{cor:closure}.

\paragraph{Where the retained colors enter.}
Let $P,Q$ be type-constant arrays on $D$ and write
\[
 p_\lambda=P(\ctype\lambda),\qquad q_\lambda=Q(\ctype\lambda)
 \qquad(\lambda\in\Col)
\]
for their values on the edges of color $\lambda$. As explained in
Section~\ref{ov:twocolors}, whose argument does not depend on the kernel, a
deletion can change the value of an aggregation only at a pair with a hub
endpoint, and the hub sees the whole of $\cell\lambda$ through $p_\lambda$
and $q_\lambda$ alone: a retained witness $z\in\cell\lambda$ contributes a
term $p_\lambda\odot s$, $s\odot q_\lambda$ or $p_\lambda\odot q_\lambda$,
where $s$ is the value carried by its other edge, according to whether the
hub is the left endpoint, the right endpoint, or both. This motivates the
three \emph{responses} of the family $(p_\lambda,q_\lambda)_{\lambda\in\Col}$
to a set $J$ of colors:
\begin{equation}\label{eq:responses}
 \ell_J(s)=\bigvee_{\lambda\in J}p_\lambda\odot s,\qquad
 r_J(s)=\bigvee_{\lambda\in J}s\odot q_\lambda,\qquad
 d_J=\bigvee_{\lambda\in J}p_\lambda\odot q_\lambda
 \qquad(s\in S).
\end{equation}
The \emph{left} and \emph{right responses} $\ell_J$ and $r_J$ are functions
on $S$; the \emph{paired response} $d_J$ is a scalar. For $\odot=\land$ they amount to the three bits of Section~\ref{ov:twocolors}.

These three responses are all that a deletion can disturb.
\begin{lemma}[Three responses preserve an aggregation]\label{lem:local}
Let $P,Q$ be type-constant arrays on $D$ and $\varnothing\ne J\subseteq\Col$.
If $(\ell_J,r_J,d_J)=(\ell_{\Col},r_{\Col},d_{\Col})$, then
\[
 (P\agg\odot Q)\res J=(P\res J)\agg\odot(Q\res J),
\]
where the right-hand side is evaluated on $D_J$.
\end{lemma}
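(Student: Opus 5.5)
We fix a pair $(u,v)\in D_J^2$ and compare the two joins
\[
 \bigvee_{z\in D}P(u,z)\odot Q(z,v)
 \qquad\text{and}\qquad
 \bigvee_{z\in D_J}P(u,z)\odot Q(z,v).
\]
Since $\vee$ is idempotent, it suffices to show that every term contributed by a deleted witness $z\in\cell{\Col\setminus J}$ is already dominated by, indeed equal to, some term or join of terms from retained witnesses. Because the join is idempotent, adding these equal contributions does not change the value.

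Witnesses in $X\cup\{h\}$, and the endpoints $u,v$ themselves, are common to both domains and need no treatment. So we only compare the contributions of cell witnesses $z\in\cell\Col\setminus\{u,v\}$ with those of $z\in\cell J\setminus\{u,v\}$. The term of such a $z$ depends only on its profile $(\tp(u,z),\tp(z,v))$, by type-constancy. We split into cases by the position of the hub.

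\emph{Case 1: neither $u$ nor $v$ is $h$.} If $u$ is an anchor $x$, then $\tp(u,z)=\tpl{x}$ for every core $z$, and the profile is determined by $\tp(z,v)$. If $v\in\cell\Col$, Lemma~\ref{lem:palette}(b) gives, in any single retained cell $\cell\lambda$ with $\lambda\in J$ (nonempty), a $z\ne v$ (and $\ne u$, since $u$ is an anchor) of every color $\colr(z,v)$. If $v$ is also an anchor, all cell witnesses share one profile, and $\cell J\ne\varnothing$ suffices. If $u,v$ are both cell vertices, they are distinct (or equal, which is handled by (b) as in the loop case of Lemma~\ref{lem:profile}). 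Lemma~\ref{lem:palette}(a) then realizes every profile $(\ctype\mu,\ctype\nu)$ inside a retained cell avoiding $u,v$. The symmetric subcases are the same. Hence the set of profiles realized by cell witnesses is unchanged by the deletion, and the values agree.

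\emph{Case 2: $u=h\ne v$.} A cell witness $z\in\cell\lambda$, $z\ne v$, contributes $p_\lambda\odot s$, where $s=Q(\tp(z,v))$. By Lemma~\ref{lem:palette}(b), or by the anchor argument when $v\in X$, the set $\Sigma$ of values $s$ realized by $z\in\cell\lambda\setminus\{v\}$ is the same for every $\lambda$. It is the full $\{Q(\ctype\nu)\}$ for $v\in\cell\Col$, and $\{Q(\tpr{x})\}$ for $v=x$. Therefore the cell part of the join on $D_J$ equals $\bigvee_{s\in\Sigma}\ell_J(s)$, and on $D$ it equals $\bigvee_{s\in\Sigma}\ell_\Col(s)$. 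These coincide by hypothesis. Here we must check that $z\ne v$ removes no needed witness; it does not, since each cell still realizes every color away from $v$. The case $v=h\ne u$ is symmetric with $r_J$.

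\emph{Case 3: $u=v=h$.} A witness $z\in\cell\lambda$ contributes $p_\lambda\odot q_\lambda$, so the cell part is $d_J$ versus $d_\Col$, and these are equal.

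The main obstacle is bookkeeping in Case 1 and Case 2. We must make sure that each realized value is attained within a single retained cell while avoiding the endpoints, and that the anchor and loop subcases are covered. This is exactly what parts (a) and (b) of Lemma~\ref{lem:palette} and the argument of Lemma~\ref{lem:profile} provide.
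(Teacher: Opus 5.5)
Your proof is correct and takes essentially the same route as the paper: you set aside the witnesses in $X\cup\{h\}\cup\{u,v\}$, use Lemma~\ref{lem:palette} to show that the cell witnesses contribute independently of the retained cells unless the hub is an endpoint, and reduce the hub-endpoint cases to joins of $\ell_J(s)$ over the realized values $s$, to the symmetric $r_J$ case, and to $d_J$. The paper packages the same case analysis as the table \eqref{eq:response-table}, and the $I$-dependent entries of that table are exactly your Cases~2 and~3.
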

\begin{proof}
Let $u,v\in D_J$. Both sides are joins over witnesses, the left one over
$z\in D$ and the right one over $z\in D_J$. Witnesses in $X\cup\{h\}$ and
witnesses in $\{u,v\}$ lie in $D_J$ and contribute the same term to both
sides. It remains to compare, for $I=J$ and for $I=\Col$, the contribution
of the cell witnesses,
\[
 \sigma_I(u,v)=\bigvee_{z\in\cell I\setminus\{u,v\}}
 P(u,z)\odot Q(z,v).
\]
As $P$ and $Q$ are type-constant, $\sigma_I(u,v)$ is determined by the set
of profiles realized by $z\in\cell I\setminus\{u,v\}$. We read these off
from \eqref{eq:hub-colours} and Lemma~\ref{lem:palette} for each kind of
endpoint. Following the rule $p_\lambda=P(\ctype\lambda)$, write $p_x=P(\tpl{x})$ and
$q_{x'}=Q(\tpr{x'})$; the two subscript positions differ because $P$ is
always the left factor of the aggregation and $Q$ the right one. The result
is
\begin{equation}\label{eq:response-table}
\renewcommand{\arraystretch}{1.35}
\begin{array}{c|ccc}
 \sigma_I(u,v)& v=x'\in X&v=h&v\in\cell J\\ \hline
 u=x\in X&p_x\odot q_{x'}&r_I(p_x)&r_{\Col}(p_x)\\[1pt]
 u=h&\ell_I(q_{x'})&d_I&\displaystyle\bigvee_{\mu\in\Col}\ell_I(q_\mu)\\[3pt]
 u\in\cell J&\ell_{\Col}(q_{x'})&\displaystyle\bigvee_{\mu\in\Col}r_I(p_\mu)&
 \begin{cases}d_{\Col}&\text{if }u=v,\\[1pt]
 \displaystyle\bigvee_{\mu,\nu\in\Col}p_\mu\odot q_\nu&\text{if }u\ne v.\end{cases}
\end{array}
\end{equation}
We verify the table row by row. For two anchor endpoints, every cell witness
contributes the same term $p_x\odot q_{x'}$, so the entry does not depend on
$I$. For an anchor $x$ and the hub, a witness $z\in\cell\lambda$ has
$\tp(z,h)=\ctype\lambda$, so the retained colors $\lambda\in I$ enter
through $r_I$. For an anchor
and a cell vertex $v$, the edges $\{z,v\}$ with $z$ in any single retained
cell already realize every color, by Lemma~\ref{lem:palette}(b), so the
entry does not depend on $I$. In the second row a witness $z\in\cell\lambda$
is seen from the hub through $p_\lambda$: at $(h,x')$ this gives
$\ell_I(q_{x'})$; at $(h,h)$ the two edges of $z$ have the same color
$\lambda$, which gives $d_I$; at $(h,v)$ with $v$ a cell vertex the profiles
are $(\ctype\lambda,\ctype\mu)$ with $\lambda\in I$ and $\mu$ arbitrary, by
Lemma~\ref{lem:palette}(b). The first two entries of the third row are
obtained from the third column by exchanging the roles of the two endpoints.
Finally, for two cell-vertex endpoints, Lemma~\ref{lem:palette}(a) yields
every ordered pair of colors if $u\ne v$, while for $u=v$ both edges of $z$
have the same color and every color occurs.

Only the entries containing $\ell_I$, $r_I$ or $d_I$ depend on $I$, and the
hypothesis makes them equal for $I=J$ and $I=\Col$; for instance
$\bigvee_\mu\ell_J(q_\mu)=\bigvee_\mu\ell_{\Col}(q_\mu)$. Hence
$\sigma_J(u,v)=\sigma_{\Col}(u,v)$ for all $u,v\in D_J$, and adding back the
unchanged witnesses gives the identity.
\end{proof}
The lemma applies to any type-constant operands, not only to inputs, and
Lemma~\ref{lem:profile} makes it available at every gate of a computation.
It says that an aggregation cannot distinguish $J$ from $\Col$ unless its
three responses do; how many colors are needed to fix those responses is a
property of the kernel alone.

\begin{definition}[Support and support number]\label{def:width}
A set $K\subseteq\Col$ \emph{supports} the family
$(p_\lambda,q_\lambda)_{\lambda\in\Col}$ if
$(\ell_K,r_K,d_K)=(\ell_{\Col},r_{\Col},d_{\Col})$. The \emph{support number}
$\wid\odot$ of a kernel $\odot$, with respect to the given join, is
the least $c$ such that every finite family, over every finite set of
colors, has a support of size at most $c$; it is $\infty$ if no such $c$
exists. It therefore depends on the join as well as on $\odot$; the join is
always the one fixed with $S$, and Section~\ref{sec:matrix} varies both.
\end{definition}
If $K$ supports a family then so does every $J$ with
$K\subseteq J\subseteq\Col$, since $\ell_K\le\ell_J\le\ell_{\Col}$ pointwise
and likewise for $r$ and $d$. A support may be empty: as $\ell_\varnothing$
and $r_\varnothing$ are identically $0$ and $d_\varnothing=0$, the empty set
supports exactly the families whose three responses to $\Col$ vanish, which
for $\odot=\land$ happens as soon as every $p_\lambda$ and every $q_\lambda$
is $0$. For a general kernel that last inference is not available, since no
algebraic law is assumed for $\odot$ and $0\odot s$ need not be $0$. Either
way an empty support causes no trouble, since it is the retained set $J$ of
colors, never the support, that is required to be nonempty.

For the kernel that Theorems~\ref{main:omega} and~\ref{main:phi} need, two
colors always suffice.
\begin{lemma}[Conjunction has support number two]\label{lem:and-width}
For $S=\bool$, with disjunction as the join and $\odot=\land$, one has
$\wid\land=2$.
\end{lemma}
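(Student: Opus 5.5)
We have to prove two inequalities: $\wid\land\le2$, meaning every family has a support of at most two colors, and $\wid\land\ge2$, meaning some family needs two.

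\textbf{The responses for conjunction.} For $\odot=\land$ on $\bool$, the left response is $\ell_J(s)=s\land\bigvee_{\lambda\in J}p_\lambda$. So $\ell_J$ is determined by the single bit $a_J=\bigvee_{\lambda\in J}p_\lambda$. Likewise $r_J$ is determined by $b_J=\bigvee_{\lambda\in J}q_\lambda$. The paired response is the bit $d_J=\bigvee_{\lambda\in J}(p_\lambda\land q_\lambda)$. A set $K$ is therefore a support exactly when $(a_K,b_K,d_K)=(a_\Col,b_\Col,d_\Col)$, which are the three bits of Section~\ref{ov:twocolors}.

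\textbf{Upper bound.} Let $(p_\lambda,q_\lambda)_{\lambda\in\Col}$ be any family. We split into two cases.
\begin{itemize}[nosep]
\item If $d_\Col=1$, pick $\lambda$ with $p_\lambda=q_\lambda=1$. Then $K=\{\lambda\}$ already has all three bits equal to $1$, matching $\Col$.
\item If $d_\Col=0$, then $d_K=0$ for every $K$, so only $a$ and $b$ need to be matched. Put into $K$ one color with $p_\lambda=1$ if $a_\Col=1$, and one color with $q_\mu=1$ if $b_\Col=1$. The resulting $K$ has at most two elements and satisfies $a_K=a_\Col$ and $b_K=b_\Col$, using monotonicity of the joins.
\end{itemize}
In both cases $|K|\le2$, so $\wid\land\le2$.

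\textbf{Lower bound.} Take the second example of Section~\ref{ov:twocolors}: two colors $\lambda_1,\lambda_2$ with $(p,q)$ equal to $(1,0)$ and $(0,1)$ respectively. Then $a_\Col=b_\Col=1$. Each of the sets $\varnothing$, $\{\lambda_1\}$ and $\{\lambda_2\}$ leaves $a$ or $b$ equal to $0$, so none of them is a support. Hence every support of this family has two elements, and $\wid\land\ge2$.

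There is no real obstacle. The only point needing care is the reduction from equality of the functions $\ell_J,r_J$ on $\bool$ to equality of the bits $a_J,b_J$. This follows by evaluating $\ell_J$ and $r_J$ at $s=1$, while at $s=0$ both sides are $0$.
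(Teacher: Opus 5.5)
Your proof is correct and matches the paper's argument essentially step for step. You reduce the three responses to the bits $\bigvee_{\lambda\in J}p_\lambda$, $\bigvee_{\lambda\in J}q_\lambda$ and $\bigvee_{\lambda\in J}(p_\lambda\land q_\lambda)$, choose a single color when some $p_\lambda=q_\lambda=1$ and otherwise one color for each nonzero marginal bit, and use the two-color family $(1,0),(0,1)$ for the lower bound. You also check explicitly that equality of $\ell_J$ and $r_J$ as functions on $\bool$ reduces to equality of these bits (by evaluating at $s=1$), which the paper leaves implicit.
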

\begin{proof}
Here $\ell_J(s)=s\land\bigvee_{\lambda\in J}p_\lambda$,
$r_J(s)=s\land\bigvee_{\lambda\in J}q_\lambda$, and
$d_J=\bigvee_{\lambda\in J}(p_\lambda\land q_\lambda)$. So $K$ supports the
family iff it contains a color with $p_\lambda=1$ whenever there is one, a
color with $q_\lambda=1$ whenever there is one, and a color with
$p_\lambda=q_\lambda=1$ whenever there is one. If some color has
$p_\lambda=q_\lambda=1$, it alone supports the family; otherwise one color
with $p_\lambda=1$ and one with $q_\lambda=1$, whenever such colors exist,
together support it. The family
$(p_\lambda,q_\lambda)=(1,0),(0,1)$ on two colors needs both of them, so
$\wid\land\ge2$.
\end{proof}
The \emph{support number} of a computation $\Comp$ is the sum of the
support numbers of its aggregation gates,
\[
 \wid\Comp=\sum_{g\text{ an aggregation gate of }\Comp}\wid{\odot_g},
\]
so a \CoR\ term or relational circuit with $m$ compositions has support
number $2m$.

\subsection{Simultaneous preservation}\label{sec:simultaneous}
As outlined in Section~\ref{ov:wholeterm}, choosing a support at every
gate and taking the union preserves every aggregation of a computation at
once.
\begin{theorem}[Preservation]\label{thm:preservation}
Let $\Comp$ be a computation whose inputs are type-constant arrays on $D$.
There is a set $K\subseteq\Col$ with $|K|\le\wid{\Comp}$ such that for
every nonempty $J$ with $K\subseteq J\subseteq\Col$ and every gate $g$ of
$\Comp$,
\[
 \text{value of $g$ on the inputs restricted to $D_J$}
 \;=\;
 \bigl(\text{value of $g$ on $D$}\bigr)\res J .
\]
In particular the outputs agree. Each aggregation gate is charged once,
however often its value is used.
\end{theorem}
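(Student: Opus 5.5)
The plan is to fix $K$ first, on $\M$ alone, and then prove the displayed identity by induction along a topological order of $\Comp$.

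\emph{Choosing $K$.} Evaluate $\Comp$ on the full domain $D$. The inputs are type-constant, and every free gate preserves type-constancy. By Lemma~\ref{lem:profile} and \eqref{eq:profile-value}, which hold for every kernel, so does every aggregation. Hence every gate value on $D$ is a type-constant array. For each aggregation gate $g$ with kernel $\odot_g$ and operand arrays $P_g,Q_g$ on $D$, form the family $(p_\lambda,q_\lambda)=(P_g(\ctype\lambda),Q_g(\ctype\lambda))$. By Definition~\ref{def:width} this family has a support $K_g$ with $|K_g|\le\wid{\odot_g}$ (if $\wid{\odot_g}=\infty$ the bound is vacuous; take $K_g=\Col$). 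Put $K=\bigcup_g K_g$, with the union taken over distinct gates. Then $|K|\le\sum_g\wid{\odot_g}=\wid\Comp$, and a gate used many times contributes only once. Since supports are upward closed, as noted after Definition~\ref{def:width}, every $J$ with $K\subseteq J\subseteq\Col$ also supports each gate's family.

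\emph{Induction.} Fix a nonempty $J\supseteq K$ and proceed along the topological order of $\Comp$. Input gates satisfy the claim by definition, since their values on $D_J$ are the restricted inputs. Constant arrays and the equality array are defined per domain and agree with restriction. Entrywise functions, transpose and Boolean operations commute with restriction to $D_J$. At an aggregation gate $g$, the induction hypothesis says that its operands on $D_J$ equal $P_g\res J$ and $Q_g\res J$. So the value of $g$ on $D_J$ is $(P_g\res J)\agg{\odot_g}(Q_g\res J)$. Because $J$ supports the family of $g$, meaning $(\ell_J,r_J,d_J)=(\ell_\Col,r_\Col,d_\Col)$, Lemma~\ref{lem:local} applies to the type-constant arrays $P_g,Q_g$ on $D$ and gives $(P_g\agg{\odot_g}Q_g)\res J$. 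This is the claim for $g$, and the outputs are the special case of the output gates.

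\emph{The delicate point.} The supports must be computed from the operands on the \emph{full} structure. Lemma~\ref{lem:local} requires type-constant operands on $D$, not on $D_J$, and the restricted arrays need not be type-constant on the subdomain in any useful sense. The induction therefore has to carry exact equality with the restriction of the $D$-values. Equality, unlike the one-sided inclusion available for compositions, also survives complements and arbitrary entrywise functions. Once this is arranged the argument is routine, and $K$ is independent of $J$ because it was defined on $\M$ alone.
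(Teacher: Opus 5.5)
Your proposal is correct and follows the paper's proof: you choose a support for each aggregation gate from its type-constant operands on the full domain $D$, take their union as $K$, and induct along the evaluation order, using that free gates commute with restriction and applying Lemma~\ref{lem:local} at aggregation gates. The differences are only cosmetic: you state the upward closure of supports explicitly, and you handle infinite support numbers gate by gate rather than once for the whole computation.
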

\begin{proof}
If $\wid{\Comp}=\infty$, take $K=\Col$. Otherwise evaluate $\Comp$ on
$D$. By Corollary~\ref{cor:closure}, extended to arbitrary kernels as noted
in Section~\ref{sec:responses}, the value of every gate is type-constant.
For each aggregation gate $g$ with operands $P,Q$ choose a support $K_g$ of
the family $(P(\ctype\lambda),Q(\ctype\lambda))_{\lambda\in\Col}$ with
$|K_g|\le\wid{\odot_g}$, and let $K=\bigcup_gK_g$. Now fix a nonempty
$J\supseteq K$ and proceed by induction along the evaluation order. For the
inputs the claim holds by definition. A free gate commutes with
restriction, so if its operands agree after restriction, so does its value.
At an aggregation gate the operands agree by induction, $J\supseteq K_g$
supports its reference operands, and Lemma~\ref{lem:local} gives the claim.
The supports were chosen on the reference computation, independently of
$J$; this is why a single set $K$ serves every $J\supseteq K$.
\end{proof}
Two remarks are in order. First, the only closure property used is that of the values on
$D$; nothing is assumed about the restricted values. Second, for a \CoR\
term or circuit with $m$ compositions Lemma~\ref{lem:and-width} gives
$|K|\le2m$; complement gates need no separate treatment, since the theorem
asserts an equality (Section~\ref{ov:wholeterm}).

The theorem reduces lower bounds to counting. Let $\varphi$ be a binary
query and let $\M$ interpret its symbols by type-constant relations. A color
$\lambda$ is \emph{detectable} for $\varphi$ if
\[
 \varphi^{\M\res(\Col\setminus\{\lambda\})}(u,v)\ne\varphi^{\M}(u,v)
 \qquad\text{for some }u,v\in D_{\Col\setminus\{\lambda\}},
\]
that is, if deleting the cell of color $\lambda$ changes $\varphi$ at a pair
of surviving vertices.

\begin{corollary}[Detectable colors]\label{cor:detectable}
Let $\Det\subseteq\Col$ be a set of detectable colors for $\varphi$ on $\M$.
Every computation $\Comp$ whose output equals $\varphi$ on $\M$ and on the
$|\Det|$ structures $\M\res(\Col\setminus\{\lambda\})$, $\lambda\in\Det$,
has $\wid{\Comp}\ge|\Det|$. In particular, every \CoR\ term or relational
circuit that agrees with $\varphi$ on these $|\Det|+1$ finite structures has
at least $\lceil|\Det|/2\rceil$ compositions, since that count is an integer.
\end{corollary}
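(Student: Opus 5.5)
The argument is by contraposition from Theorem~\ref{thm:preservation}. Let $\Comp$ be a computation whose output equals $\varphi$ on $\M$ and on every $\M\res(\Col\setminus\{\lambda\})$ with $\lambda\in\Det$. The inputs of $\Comp$ on $\M$ are the interpretations of the relation symbols. By the standing assumption these are type-constant, so the preservation theorem applies to $\Comp$ and gives a set $K\subseteq\Col$ with $|K|\le\wid\Comp$.

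The output of $\Comp$ on $\M\res J$ is the restriction of its output on $\M$ for every nonempty $J$ with $K\subseteq J\subseteq\Col$. I will show that $\Det\subseteq K$, which gives $|\Det|\le|K|\le\wid\Comp$.

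Suppose instead that some $\lambda\in\Det$ lies outside $K$, and put $J=\Col\setminus\{\lambda\}$. Then:
\begin{itemize}[nosep]
\item $J$ is nonempty because $N=|\Col|\ge2$.
\item $J\supseteq K$ because $\lambda\notin K$.
\item The input arrays on $D_J$ are exactly the relations of $\M\res J$, since substructure restriction is $R^{\M}\cap D_J^2$.
\end{itemize}
By Theorem~\ref{thm:preservation}, the output of $\Comp$ on $\M\res J$ is therefore the restriction to $D_J$ of its output on $\M$. Since $\Comp$ agrees with $\varphi$ on both structures, this yields $\varphi^{\M\res J}(u,v)=\varphi^{\M}(u,v)$ for all $u,v\in D_J$. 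That contradicts the detectability of $\lambda$. Hence $\Det\subseteq K$.

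For the ``in particular'' clause, take a \CoR\ term or relational circuit with $m$ compositions, viewed as a computation over $\bool$ with kernel $\land$. By Lemma~\ref{lem:and-width} and the definition of $\wid\Comp$ it has $\wid\Comp=2m$. So $2m\ge|\Det|$, and since $m$ is an integer, $m\ge\lceil|\Det|/2\rceil$.

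The only point needing care is checking that the restricted-input computation really is the evaluation of $\Comp$ on the substructure $\M\res J$. The constants $\zero,\one,\id$ must be interpreted relative to $D_J$; they are free gates that commute with restriction, so this holds. Everything else is bookkeeping.
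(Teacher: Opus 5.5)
Your proof is correct and follows the paper's argument. The paper likewise takes the set $K$ of Theorem~\ref{thm:preservation}, notes that $\wid{\Comp}<|\Det|$ would leave some $\lambda\in\Det$ outside $K$, and derives the contradiction at $J=\Col\setminus\{\lambda\}$ (nonempty since $N\ge2$), with the composition count then following from $\wid{\Comp}=2m$ and integrality.
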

\begin{proof}
If $\wid{\Comp}<|\Det|$, the set $K$ of Theorem~\ref{thm:preservation}
misses some $\lambda\in\Det$. For $J=\Col\setminus\{\lambda\}$, which is
nonempty as $N\ge2$, the output of $\Comp$ on $\M\res J$ is the restriction
of its output on $\M$, whereas $\varphi$ changes at a pair of $D_J$.
\end{proof}
For a non-Boolean $S$, ``the output equals $\varphi$'' means that the output
encodes the truth values of $\varphi$ by two fixed distinct elements of $S$;
intermediate values may be arbitrary. In both interpretations of
Section~\ref{sec:logical} every color is detectable, so $\Det=\Col$ and the
bound is $\lceil|\Col|/2\rceil$; Remark~\ref{rem:iff} in
Appendix~\ref{app:optimality} is the one place where a single color has to
be given up.

\section{The two formulas}\label{sec:logical}
We now interpret the relation symbols of $\Psi_k$ and $\Phi_k$ on the
structure of Section~\ref{sec:preservation} so that every color is
detectable at a read-off pair, following the pattern of
Section~\ref{ov:structure}: the anchor supplies a \emph{label}, the hub
supplies the \emph{color} of the witness's cell, and the formula must
recognize a single color; here labels and colors are subsets of $[k]$.
$\Psi_k$ tests an inclusion between the label and the color, so the labels
are chosen to be of equal size, which turns inclusion into equality;
$\Phi_k$ tests agreement directly, which allows every subset of $[k]$ to
serve as a label.

\subsection{Disjunctive clauses}\label{sec:psi}
We take as colors the $\lfloor k/2\rfloor$-element subsets of $[k]$, that
is, $\Col=\binom{[k]}{\lfloor k/2\rfloor}$, with one anchor $x_A$ for each
color $A$, and interpret each $R_i$ as the type-constant relation with
\begin{equation}\label{eq:omega-inputs}
 R_i(\tpl{x_A})=1\iff i\in A,\qquad
 R_i(\ctype B)=1\iff i\notin B\quad(B\in\Col),
\end{equation}
and with $R_i$ false on every other type.

At the pair $(x_A,h)$, a cell witness $z\in\cell B$ satisfies all $k$
clauses iff $B\subseteq A$, that is, iff $B=A$, since $A$ and $B$ have the
same size; the hub and the anchors are not witnesses, since $R_i$ is false
on $(h,h)$ and on $(x_A,x_B)$ while neither $A$ nor $B$ is all of $[k]$
(see Section~\ref{ov:formula} for the details). Hence, for every nonempty
$J\subseteq\Col$ and every $A\in\Col$,
\begin{equation}\label{eq:omega-decoder}
 \Psi_k^{\M\res J}(x_A,h)\ \text{holds}\iff A\in J.
\end{equation}

\begin{proof}[Proof of Theorem~\ref{main:omega}]
By \eqref{eq:omega-decoder}, deleting the cell of color $A$ changes $\Psi_k$
at the surviving pair $(x_A,h)$, so every color is detectable and
$\Det=\Col$. A term or circuit that agrees with $\Psi_k$ on all finite
structures agrees with it in particular on $\M$ and on the $N$ structures
$\M\res(\Col\setminus\{A\})$, so Corollary~\ref{cor:detectable} gives
$m\ge\lceil N/2\rceil$ for its number $m$ of compositions, where
$N=\binom{k}{\lfloor k/2\rfloor}$. Equation~\eqref{eq:tree-size} gives
$|t|\ge2m+1\ge N+1$.
\end{proof}

\paragraph{Why the middle layer.}
The central binomial coefficient in Theorem~\ref{main:omega} is not an
artifact of the interpretation \eqref{eq:omega-inputs}. Each clause of
$\Psi_k$ is a disjunction, so at a witness the $k$ clauses amount to a
disjointness test between two subsets of $[k]$, and detecting a color turns
into a cross-intersecting condition on pairs of such subsets; Bollob\'as's
set-pair inequality then shows that no type-constant interpretation of
$\Psi_k$, over any set of colors and with any number of anchors, makes more
than $\binom{k}{\lfloor k/2\rfloor}$ colors detectable
(Proposition~\ref{prop:middle-layer} in Appendix~\ref{app:optimality}).
The interpretation \eqref{eq:omega-inputs} is thus optimal, and the loss of
a factor $\Theta(\sqrt k)$ against \eqref{eq:omega-upper} is inherent to
$\Psi_k$ together with this construction; whether $\Psi_k$ itself needs
$2^k$ compositions we do not determine.

\subsection{Profile agreement}\label{sec:phi}
For a pair $(u,v)$ and an index $i\in[k]$ write
\[
 c_i(u,v)=\bigl(P_i(u,v),Q_i(u,v)\bigr)\in\bool^2
\]
for the \emph{code} of $i$ at $(u,v)$. The $i$-th clause of \eqref{eq:phi}
at a witness $z$ is the \emph{agreement test}
\begin{equation}\label{eq:eta}
 \eta\bigl((p,q),(p',q')\bigr)=(p\land p')\lor(q\land q')
\end{equation}
applied to $c_i(x,z)$ and $c_i(z,y)$. Writing
\[
 \cd1=(1,0),\qquad \cd0=(0,1),\qquad \cdb=(0,0),
\]
for $e,e'$ among these three codes, $\eta(e,e')=1$ holds exactly when
$e=e'\in\{\cd0,\cd1\}$; the \emph{invalid} code $\cdb$ agrees with no code,
not even with itself.

Say that a type $\tau$ \emph{carries the label} $A\subseteq[k]$ if
$c_i(\tau)=\cd1$ for every $i\in A$ and $c_i(\tau)=\cd0$ for every
$i\notin A$, and that it \emph{carries no label} if $c_i(\tau)=\cdb$ for
some $i$. The interpretation below gives every type one of the three codes
at every $i$, so these two cases are exhaustive. A witness $z$ then
satisfies all $k$ clauses at $(u,v)$ precisely
when the types of its two edges carry labels and the two labels are equal.

Take $\Col=2^{[k]}$, the set of \emph{all} subsets of $[k]$, with one anchor
$x_A$ for every $A\in\Col$, and interpret the $2k$ symbols by the
type-constant arrays with
\begin{equation}\label{eq:phi-inputs}
 \tpl{x_A}\ \text{carries}\ A,\qquad
 \ctype B\ \text{carries}\ B\quad(B\in\Col),
\end{equation}
and every other type carrying no label, that is $\tpeq$, each $\tpr{x}$ and
each $\tpa{x}{x'}$. In terms of the symbols: on $\tpl{x_A}$, $P_i$ holds iff
$i\in A$ and $Q_i$ iff $i\notin A$; on $\ctype B$, $P_i$ holds iff $i\in B$
and $Q_i$ iff $i\notin B$; and $P_i=Q_i=0$ everywhere else.

Consider the pair $(x_A,h)$ and a witness $z$.
\begin{itemize}[nosep]
\item $z\in\cell B$: the two edges carry the labels $A$ and $B$, so $z$ is a
 witness exactly when $B=A$.
\item $z=h$: the second edge is the loop $(h,h)$, of type $\tpeq$, which
 carries no label, so no clause holds.
\item $z=x'$, an anchor, possibly $x_A$ itself: the first edge has type
 $\tpa{x_A}{x'}$, which carries no label, so again no clause holds.
\end{itemize}
Hence, for every nonempty $J\subseteq\Col$ and every $A\in\Col$,
\begin{equation}\label{eq:phi-decoder}
 \Phi_k^{\M\res J}(x_A,h)\ \text{holds}\iff A\in J.
\end{equation}

\begin{proof}[Proof of Theorem~\ref{main:phi}]
By \eqref{eq:phi-decoder} all $N=2^k$ colors are detectable, so
Corollary~\ref{cor:detectable} gives $2m\ge2^k$ for the number $m$ of
compositions, that is $m\ge2^{k-1}$, and \eqref{eq:tree-size} gives
$|t|\ge2m+1\ge2^k+1$. The matching term is \eqref{eq:phi-upper} below.
\end{proof}

The hub and the anchors are
excluded as witnesses not because their labels go unused but because they
carry no label at all; this third code is exactly what the independence of
$P_i$ and $Q_i$ provides, since the two symbols are complementary on the
types that carry a label and both false elsewhere. The interpretation
\eqref{eq:phi-inputs} is Boolean only in its values, and
Section~\ref{semi:weighted-section} reuses it verbatim over an arbitrary
semiring.

The matching upper bound is built like \eqref{eq:omega-upper}, from one
composition for each subset of $[k]$:
\begin{equation}\label{eq:phi-upper}
 \CoR_{\Phi_k}:=\bigcup_{A\subseteq[k]}\bigl(H_A;H_A\bigr),
 \qquad
 H_A=\Bigl(\bigcap_{i\in A}P_i\Bigr)\cap\Bigl(\bigcap_{i\notin A}Q_i\Bigr).
\end{equation}
It has $2^k$ compositions: a witness satisfies clause $i$ either through
$P_i$ or through $Q_i$, and collecting the indices of the first kind into a
set $A$ says precisely that both of its edges lie in $H_A$. The subset $A$
plays the same role as $I$ in \eqref{eq:omega-upper}, with one difference:
there $I$ distributes the symbols between the two edges of the witness,
whereas here $A$ selects a single relation that both edges must satisfy.
Hence $\Phi_k\equiv\CoR_{\Phi_k}$ on all structures, and
\eqref{eq:phi-upper} again has $2^k$ compositions and $O(k2^k)$ nodes.

\paragraph{Optimality and the factor of two.}
Appendix~\ref{app:optimality} shows that no type-constant interpretation of
$\Phi_k$ makes more than $2^k$ colors detectable
(Proposition~\ref{prop:all-labels}), so on this structure
Corollary~\ref{cor:detectable} cannot yield more than the $2^{k-1}$
compositions of Theorem~\ref{main:phi}. The remaining factor of two against
\eqref{eq:phi-upper} is therefore not a loss in the choice of labels but the
support number $\wid\land=2$ of Lemma~\ref{lem:and-width}: a single
composition can monitor two colors at once, and the family in the proof of
that lemma shows that this cannot be improved per gate. Closing the factor
would require a different invariant, not a different interpretation.

The interpretations \eqref{eq:omega-inputs} and \eqref{eq:phi-inputs} also
make sense for a computation over an arbitrary $S$, with inputs and outputs
encoded by two fixed distinct elements of $S$ and arbitrary intermediate
values; Corollary~\ref{cor:detectable} then gives
\[
 \wid{\Comp}\ \ge\ \binom{k}{\lfloor k/2\rfloor}\ \text{ for }\Psi_k,
 \qquad
 \wid{\Comp}\ \ge\ 2^k\ \text{ for }\Phi_k.
\]
The only role of conjunction in Theorems~\ref{main:omega} and~\ref{main:phi}
is its support number two.

One feature of both families should be recorded, since it is what makes
$|\varphi|=\Theta(k)$.

\begin{remark}[The vocabulary grows with $k$]\label{rem:vocabulary}
$\Psi_k$ uses $k$ relation symbols and $\Phi_k$ uses $2k$; the size
$|\varphi|$ counts the nodes of the syntax tree and charges an occurrence of
a symbol one node, so $|\varphi|=\Theta(k)$ in both cases and
Corollary~\ref{cor:gap} holds as stated. Over a vocabulary \emph{fixed} in
advance the question is a different one, and we leave it open: we do not
know whether an exponential succinctness gap holds for a family of
\FOthree\ formulas over, say, a single binary relation symbol.
\end{remark}

\part*{Part II: Additional results}
\section{Size-specific computations and bounded error}\label{sec:robustness}
Restriction compares structures of different sizes. In this section we keep
the domain fixed and allow the target to be chosen separately for each
domain size. No new finite structure needs to be built, because of the way
in which a deletion acted in the first place: by
Section~\ref{sec:responses}, a deletion affected a composition only through
the three responses, that is, only by removing a color from the palette
that the hub can see. Recoloring the hub's edges into $\cell\lambda$ with
some other color has exactly the same effect and removes no vertex. The
same three responses therefore control such a recoloring of $\M$, on a
domain that never changes.

\subsection{Transport of type values on a fixed domain}
For a map $f:\Col\to\Col$, we change only the colors of the edges at the
hub:
\[
 \colr_f(h,u)=\colr_f(u,h)=f(\lambda)\quad(u\in\cell\lambda),\qquad
 \colr_f(u,u')=\colr(u,u')\quad(u\ne u'\in\cell\Col),
\]
and put $J=f(\Col)$. Write $\ctype\lambda^f$ for the new color types and
$\tp_f$ for the new typing map; the anchor types and $\tpeq$ are unchanged.
Every color still occurs on edges between cell vertices, so these types are
nonempty and partition $D^2$. What is lost is homogeneity: seen from the
hub, only the colors in $J$ now occur, whereas every color still occurs
elsewhere. Hence Lemma~\ref{lem:profile} fails for $\colr_f$, and the
type-constant arrays with respect to the recolored types need not be closed
under aggregation. The lemma below replaces this closure property.

For a type-constant $P$, let $f_\ast P$ be the array that takes the value
$P(\ctype\lambda)$ on $\ctype\lambda^f$ and agrees with $P$ on all other
types. This map commutes with uniform pointwise functions, transpose, and
the permitted constants. In particular, it does not change the values at
the read-off pairs:
\begin{equation}\label{eq:transport-anchor}
 (f_\ast P)(x,h)=P(x,h)\qquad(x\in X).
\end{equation}
The map $f_\ast$ transports values between types; it need not preserve
individual entries elsewhere.

The three responses now have to be compatible with $f_\ast$ rather than
with restriction.
\begin{lemma}[Three-response recoloring]\label{lem:recolor}
For type-constant $P,Q$, if the responses \eqref{eq:responses} agree on
$J=f(\Col)$ and on $\Col$, then
\[
 f_\ast(P\agg\odot Q)=(f_\ast P)\agg\odot(f_\ast Q).
\]
\end{lemma}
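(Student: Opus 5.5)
The plan is to compare the two sides entry by entry, grouping the recolored pairs by their new type $\tp_f$. Both sides are joins over all witnesses $z\in D$ on the same domain. On the right, $f_\ast P$ and $f_\ast Q$ are type-constant for the recolored types, so a witness contributes $P(\tau_1)\odot Q(\tau_2)$, where $\tau_1,\tau_2$ are the \emph{original} types corresponding to $\tp_f(u,z)$ and $\tp_f(z,v)$. Here we identify $\ctype\lambda^f$ with $\ctype\lambda$ and keep all other types unchanged. On the left, $f_\ast(P\agg\odot Q)$ takes at $(u,v)$ the value $(P\agg\odot Q)(\tau)$, where $\tau$ is the original type corresponding to $\tp_f(u,v)$. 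By Corollary~\ref{cor:closure} and \eqref{eq:profile-value}, this value is $\bigvee_{(\tau_1,\tau_2)\in W_\tau}P(\tau_1)\odot Q(\tau_2)$. So it suffices to show, for each pair $(u,v)$, that the join over the \emph{recolored} profiles $W^f(u,v)=\{(\tp_f(u,z),\tp_f(z,v)):z\in D\}$, read through the identification, equals the join over $W_\tau$.

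I would split the witnesses into those in $X\cup\{h\}\cup\{u,v\}$ and the cell witnesses in $\cell\Col\setminus\{u,v\}$, and then go through the rows of table \eqref{eq:response-table} once more.

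Consider first the non-cell witnesses. An anchor, or an endpoint $z\in\{u,v\}$, has profile $(\tpr x,\tpl x)$ or $(\tpeq,\tau)$ or the mirror image, where $\tau$ is the new type of $(u,v)$. This corresponds exactly to the original profile contributed to $W_\tau$. The hub as a witness, when it is not an endpoint, has edges of new colors $f(\mu)$ and $f(\nu)$. It is therefore just one more core witness, whose profile must be dominated by the profiles of cell witnesses.

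Now the cell witnesses. When neither endpoint is the hub, the edges are unchanged, and Lemma~\ref{lem:palette} shows that every profile already present in $W_\tau$ is realized. The hub witness adds $(\ctype{f(\mu)},\ctype{f(\nu)})$, which is realized anyway by Lemma~\ref{lem:palette}(a), or by (b) when $u=v$ (it is then of the form $(\ctype{f(\mu)},\ctype{f(\mu)})$). So these entries agree with no hypothesis at all. When an endpoint is the hub, a witness $z\in\cell\lambda$ is seen through the color $f(\lambda)$. The cell witnesses then yield exactly the entries of \eqref{eq:response-table} with $I=J=f(\Col)$: namely $r_J(p_x)$, $\ell_J(q_{x'})$, $d_J$ (using that both edges of $z$ at $(h,h)$ receive the same color $f(\lambda)$), and $\bigvee_\mu\ell_J(q_\mu)$, $\bigvee_\mu r_J(p_\mu)$. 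The original value has $\Col$ in place of $J$. The hypothesis $(\ell_J,r_J,d_J)=(\ell_\Col,r_\Col,d_\Col)$ closes all these cases.

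The main obstacle is the mixed case $(h,v)$ with $v\in\cell\Col$, and its mirror. Here the recolored edge $(h,v)$ has type $\ctype{f(\lambda_v)}$, whereas its original color was $\lambda_v$. I must check that the witness profiles seen from the recolored pair coincide with $W_{\ctype{f(\lambda_v)}}$, and not with $W_{\ctype{\lambda_v}}$. The endpoint witnesses give $(\tpeq,\ctype{f(\lambda_v)})$, which is correct. The cell witnesses give $\{(\ctype{f(\lambda)},\ctype\mu)\}$ for $\lambda\in\Col$ and all $\mu$, by Lemma~\ref{lem:palette}(b). Their join is $\bigvee_\mu\ell_J(q_\mu)$, which equals $\bigvee_\mu\ell_\Col(q_\mu)$ by hypothesis. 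This is exactly the core part of $W_{\ctype\nu}$ for any $\nu$, so the case closes.
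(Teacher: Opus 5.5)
Your proposal is correct and follows essentially the same route as the paper. Both arguments compare joins over profile names, split the witnesses into anchors and endpoints, the hub, and the cell witnesses, and read the cell-witness contributions off table~\eqref{eq:response-table} with $I=J$, so that the hypothesis equates the $I$-dependent entries with their values at $I=\Col$. One loose end: you absorb the hub as a non-endpoint witness only when both endpoints are cell vertices, but at pairs with an anchor endpoint it contributes $(\tpl x,\ctype{f(\nu)})$, $(\ctype{f(\mu)},\tpr{x'})$ or $(\tpl x,\tpr{x'})$; the cell witnesses already realize these (by Lemma~\ref{lem:palette}(b) in the first two cases), so idempotence of the join absorbs them exactly as in your case.
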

\begin{proof}
Both sides are arrays on the same domain $D$; we compare them at a pair
$(u,v)$. We call $\tau$ the \emph{name} of the recolored type of $(u,v)$: it
is $\ctype\lambda$ when $\tp_f(u,v)=\ctype\lambda^f$, and it is the type
itself for the four kinds of types that $\colr_f$ leaves unchanged, namely
$\tpeq$, the $\tpl x$, the $\tpr x$ and the $\tpa x{x'}$. By the definition of $f_\ast$ the
left-hand side at $(u,v)$ equals $(P\agg\odot Q)(\tau)$, which by
\eqref{eq:profile-value} is the join of $P(\tau_1)\odot Q(\tau_2)$ over the
reference profiles $(\tau_1,\tau_2)\in W_\tau$. The right-hand side at
$(u,v)$ is the join of the same expression over the names of the recolored
profiles,
\[
 W^f_{(u,v)}=\bigl\{\text{names of }
 \bigl(\tp_f(u,z),\tp_f(z,v)\bigr):z\in D\bigr\},
\]
because $(f_\ast P)(u,z)=P(\tau_1)$ when $\tau_1$ is the name of
$\tp_f(u,z)$, and likewise for $Q$. So it suffices to show that the two
joins agree. Since $\colr_f$ differs from $\colr$ only on the edges at the
hub, and there replaces the palette $\Col$ by $J$, we sort the witnesses
into three groups.

\emph{Anchor witnesses and endpoint witnesses.} An anchor witness $z$ gives
two types among the anchor types, which $\colr_f$ does not touch, so it
contributes its reference term. A witness $z\in\{u,v\}\cap U$ gives
$(\tpeq,\tau)$ or $(\tau,\tpeq)$ if $u\ne v$, and $(\tpeq,\tpeq)$ if $u=v$;
these are the reference profiles that $W_\tau$ attaches to the endpoints of
a pair of type $\tau$, because $\tau$ is by construction the name of the
recolored type of $(u,v)$.

\emph{The hub as a witness, when neither endpoint is the hub.} The hub then
contributes exactly one profile name, and that name already occurs among the
cell witnesses at the same pair, so idempotence of the join absorbs it. In
detail, writing $\cell\mu$ and $\cell\nu$ for the cells of the left and the
right endpoint when these are cell vertices: at two distinct cell endpoints
the hub gives
$(\ctype{f(\mu)},\ctype{f(\nu)})$, and Lemma~\ref{lem:palette}(a) realizes
every ordered pair of colors there; at equal cell endpoints it gives a
paired name $(\ctype{f(\mu)},\ctype{f(\mu)})$, and Lemma~\ref{lem:palette}(b)
realizes every paired color; at an anchor--cell pair it gives $(\tpl{x},
\ctype{f(\nu)})$ or $(\ctype{f(\mu)},\tpr{x'})$, and
Lemma~\ref{lem:palette}(b) realizes every color on the cell side; and at a
pair of two anchors it gives $(\tpl{x},\tpr{x'})$, which every core witness
gives.

\emph{Cell witnesses.} These are the entries of the table
\eqref{eq:response-table}, read with $I=J$ and with the cell endpoints
ranging over all of $\cell\Col$. Indeed a witness $z\in\cell\mu$ is now seen
\emph{from the hub} through the name $\ctype{f(\mu)}$, so as $\mu$ runs over
$\Col$ that name runs over $J$; at any other endpoint every color is still
available, by Lemma~\ref{lem:palette}. The entries that depend on $I$ are
$r_I(p_x)$, $\ell_I(q_{x'})$, $d_I$, $\bigvee_\mu\ell_I(q_\mu)$ and
$\bigvee_\mu r_I(p_\mu)$, and the hypothesis turns each of them into its
value at $I=\Col$, which is the reference entry. The remaining entries do
not depend on $I$.

The three groups exhaust $D$, and together they reproduce the join over
$W_\tau$ at every pair $(u,v)$. Beyond the idempotence of the join, no law
of $\odot$ is used.
\end{proof}

\needspace{9\baselineskip}
As in Section~\ref{sec:simultaneous}, choosing a support at every gate
handles an entire computation at once.
\begin{theorem}[Simultaneous transport]\label{thm:transport}
For a computation $\Comp$ with type-constant inputs on $D$, there is
$K\subseteq\Col$ with $|K|\le\wid{\Comp}$ such that every map
$f:\Col\to\Col$ with $f(\Col)\supseteq K$ transports every intermediate
value by $f_\ast$: the value of each gate on the inputs $f_\ast P$ is
$f_\ast$ of its value on the original inputs $P$. One may use the same
support as in Theorem~\ref{thm:preservation}.
\end{theorem}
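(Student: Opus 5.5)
The plan is to copy the proof of Theorem~\ref{thm:preservation}, with Lemma~\ref{lem:recolor} taking the place of Lemma~\ref{lem:local} and $f_\ast$ taking the place of restriction. If $\wid{\Comp}=\infty$, take $K=\Col$. In that case the hypothesis $f(\Col)\supseteq K$ forces $f(\Col)=\Col$, so the three responses at $J=f(\Col)$ and at $\Col$ coincide trivially. Otherwise, evaluate $\Comp$ on the reference inputs on $D$. By Corollary~\ref{cor:closure}, extended to arbitrary kernels as noted in Section~\ref{sec:responses}, every gate value is type-constant with respect to the original types. At each aggregation gate $g$ with reference operands $P_g,Q_g$, choose a support $K_g$ of the family $(P_g(\ctype\lambda),Q_g(\ctype\lambda))_{\lambda\in\Col}$ with $|K_g|\le\wid{\odot_g}$. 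Then set $K=\bigcup_g K_g$, so that $|K|\le\wid{\Comp}$. These are exactly the supports chosen in Theorem~\ref{thm:preservation}, which justifies the final sentence of the statement.

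Now fix $f$ with $J=f(\Col)\supseteq K$. We prove by induction along the evaluation order that the value of each gate on the inputs $f_\ast P$ equals $f_\ast$ of its reference value. For input gates this holds by definition. For free gates we use that $f_\ast$ commutes with uniform pointwise functions, transpose and the permitted constants. Two points need checking here. First, $f_\ast$ of a pointwise function of type-constant arrays is that function of the $f_\ast$'s, because $f_\ast$ acts type by type. Second, transpose maps $\ctype\lambda^f$ onto itself, since $\colr_f$ is symmetric, and exchanges $\tpl x$ with $\tpr x$. The constants and the equality array are constant on the types $\tpeq$, $\tpa x x$, and so on, and these types are unchanged by the recoloring. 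The equality array is off-diagonal on every $\ctype\lambda^f$.

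At an aggregation gate $g$, the induction hypothesis says that its operands on the transported inputs are $f_\ast P_g$ and $f_\ast Q_g$. Since $K_g\subseteq K\subseteq J\subseteq\Col$, the monotonicity remark after Definition~\ref{def:width} shows that $J$ also supports the family. That is, $(\ell_J,r_J,d_J)=(\ell_\Col,r_\Col,d_\Col)$. Lemma~\ref{lem:recolor} then gives $f_\ast(P_g\agg{\odot_g}Q_g)=(f_\ast P_g)\agg{\odot_g}(f_\ast Q_g)$, which closes the induction. Each aggregation gate contributes only its own $K_g$, however many times its value is used, so sharing is charged once.

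The main subtlety is that the induction must never need closure of the recolored type-constant arrays under aggregation, since homogeneity fails for $\colr_f$. This is avoided because Lemma~\ref{lem:recolor} only requires the \emph{reference} operands to be type-constant, and the transported operands are by induction exactly of the form $f_\ast$ of those. I would therefore make sure to state the induction hypothesis in the form ``value $=f_\ast(\text{reference value})$'' rather than ``value is type-constant for $\tp_f$''. The supports are fixed from the reference computation independently of $f$.
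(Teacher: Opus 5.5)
Your proposal is correct and follows the paper's proof. It chooses supports on the reference computation (the same ones as in Theorem~\ref{thm:preservation}), takes $K=\Col$ when the support number is infinite, and runs the induction in evaluation order, with free gates commuting with $f_\ast$ and Lemma~\ref{lem:recolor} handling aggregation gates. Beyond that, you spell out details the paper leaves implicit: the monotonicity step from $K_g\subseteq f(\Col)$ to agreement of the responses at $J=f(\Col)$, and the checks that transpose and the equality array commute with $f_\ast$.
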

\begin{proof}
Use the support selected at each aggregation gate on the reference
computation. Free
operations commute with $f_\ast$ and Lemma~\ref{lem:recolor} handles
aggregation, so induction in evaluation order gives the claim. If the
support number is infinite, use $K=\Col$ and the same argument, since
$f(\Col)=\Col$ preserves every response. Each shared gate is charged once.
\end{proof}
All comparisons now take place on one and the same domain $D$. The wiring,
kernels, scalar constants, and pointwise functions may therefore have been
chosen for that domain size, provided that they are fixed before the input
is given and are uniform across entries. Besides the inputs, the arrays
available as primitives remain those listed in
Section~\ref{sec:responses}, namely the constant arrays and the equality
array. What is ruled out is any further primitive array whose $(u,v)$ entry
varies with $u$ and $v$, such as an array recording the cell of each
vertex, since such an array would give the computation exactly the
information that a type hides.

\subsection{Detectable labels at a fixed size}
For $f:\Col\to\Col$, let $\M_f$ be the structure on $D$ whose relations are
$f_\ast R^{\M}$, that is, the reference structure with its hub edges
recolored. Suppose that a Boolean-output query $\varphi$ has a
type-constant interpretation on $\M$ together with a set $\Det\subseteq\Col$
of labels, with corresponding anchors $x_\lambda$, that are
\emph{detectable under recoloring} in the following sense: for every
$\lambda\in\Det$,
\begin{equation}\label{eq:transport-decoder}
 \varphi^{\M_f}(x_\lambda,h)\ \text{holds}\iff\lambda\in f(\Col).
\end{equation}
For a non-Boolean $S$, the two truth values are represented by two fixed
distinct output states.

For each $\lambda\in\Det$, fix a color $\lambda'\ne\lambda$ and let
$f_\lambda:\Col\to\Col$ be the map that fixes all other colors and sends
$\lambda$ to $\lambda'$; then $f_\lambda(\Col)=\Col\setminus\{\lambda\}$,
and $\M_{f_\lambda}$ is $\M$ with the hub edges into $\cell\lambda$
recolored $\lambda'$. All of these structures have the same domain $D$, and
they are the fixed-size counterparts of the deletions used in
Corollary~\ref{cor:detectable}.
\begin{proposition}[A fixed-size test]\label{prop:fixed-test}
Under these assumptions, any aggregation computation $\Comp$ agreeing with
$\varphi$ on $\M$ and on the $|\Det|$ structures $\M_{f_\lambda}$,
$\lambda\in\Det$, has $\wid{\Comp}\ge|\Det|$.
\end{proposition}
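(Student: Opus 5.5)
The argument mirrors Corollary~\ref{cor:detectable}, with Theorem~\ref{thm:transport} in place of Theorem~\ref{thm:preservation}. We argue by contradiction and suppose $\wid{\Comp}<|\Det|$. We apply Theorem~\ref{thm:transport} to $\Comp$ evaluated on the reference inputs $R^{\M}$. These are type-constant by assumption, so the theorem supplies a set $K\subseteq\Col$ with $|K|\le\wid{\Comp}<|\Det|$. Hence some $\lambda\in\Det$ lies outside $K$. For this $\lambda$, the map $f_\lambda$ has image $f_\lambda(\Col)=\Col\setminus\{\lambda\}\supseteq K$. Theorem~\ref{thm:transport} therefore applies to $f_\lambda$: every gate, and in particular the output gate, evaluated on the inputs $f_{\lambda\ast}R^{\M}$ equals $f_{\lambda\ast}$ of its value on $\M$. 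The inputs $f_{\lambda\ast}R^{\M}$ are, by definition, the relations of $\M_{f_\lambda}$.

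Next we read off both values at the read-off pair $(x_\lambda,h)$. By \eqref{eq:transport-anchor}, $f_{\lambda\ast}$ leaves the entry at $(x_\lambda,h)$ unchanged. So the output of $\Comp$ at $(x_\lambda,h)$ on $\M_{f_\lambda}$ equals its output there on $\M$. Since $\Comp$ agrees with $\varphi$ on both structures, $\varphi^{\M_{f_\lambda}}(x_\lambda,h)=\varphi^{\M}(x_\lambda,h)$. For a non-Boolean $S$, we use that the two truth values are encoded by fixed distinct output states, so equal outputs give equal truth values.

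This contradicts \eqref{eq:transport-decoder}. Applied with $f=\mathrm{id}$, which is a legitimate map with $\M_{\mathrm{id}}=\M$ and $\lambda\in\mathrm{id}(\Col)$, it says that $\varphi$ holds at $(x_\lambda,h)$ on $\M$. Applied with $f=f_\lambda$, where $\lambda\notin f_\lambda(\Col)$, it says that $\varphi$ fails there on $\M_{f_\lambda}$. Hence $\wid{\Comp}\ge|\Det|$.

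The one point needing care is that Theorem~\ref{thm:transport} must be invoked with the support computed on the reference inputs, and that $f_\lambda$ must have image exactly $\Col\setminus\{\lambda\}$; this holds by construction. If $\wid{\Comp}=\infty$ the claim is trivial. I expect no real obstacle, since all the work is already contained in Lemma~\ref{lem:recolor}.
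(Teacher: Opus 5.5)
Your proposal is correct and follows the paper's own proof: assume $\wid{\Comp}<|\Det|$, take the support $K$ of Theorem~\ref{thm:transport} computed on the reference inputs, and pick $\lambda\in\Det\setminus K$. Then $\M$ and $\M_{f_\lambda}$ give the same output at $(x_\lambda,h)$ by \eqref{eq:transport-anchor}, yet have opposite correct answers by \eqref{eq:transport-decoder}. Your explicit use of $f=\mathrm{id}$ for the answer on $\M$, and your remark on encoded truth values over a non-Boolean $S$, spell out what the paper leaves implicit.
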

\begin{proof}
If $\wid{\Comp}<|\Det|$, a support from Theorem~\ref{thm:transport} omits
some $\lambda\in\Det$. The structures $\M$ and $\M_{f_\lambda}$ then give
the same output at $(x_\lambda,h)$ by \eqref{eq:transport-anchor}, but
opposite correct answers by \eqref{eq:transport-decoder}. These $|\Det|+1$
structures are fixed independently of the candidate.
\end{proof}

For both logical families, every color is detectable under recoloring, so
$\Det=\Col$, and the number of colors is
\begin{equation}\label{eq:two-parameters}
 N(\Psi_k)=\binom{k}{\lfloor k/2\rfloor}\quad(k\ge2),
 \qquad
 N(\Phi_k)=2^k\quad(k\ge1).
\end{equation}
Each of the structures involved has
\begin{equation}\label{eq:size-threshold}
 n_0(N)=N+1+N\lceil12N^2\ln(2N)\rceil=O(N^3\log N)
\end{equation}
elements: $N$ anchors, the hub, and $N$ cells of $L$ vertices each. Both
values of $N$ in \eqref{eq:two-parameters} are exponential in $k$, and so
is the threshold: $n_0=2^{O(k)}$ for either family. Consequently, nothing
below applies to a target that only has to be correct on structures whose
size is polynomial in $k$, and the construction cannot be adjusted to supply
such structures: property~(a) of Lemma~\ref{lem:palette} already forces
$L\ge N^2+1$ (Remark~\ref{rem:covering}), so structures of the kind used
here have at least $N^3$ vertices however the coloring is chosen.

Both families satisfy \eqref{eq:transport-decoder} once the domain has been
padded with inert anchors, as the next theorem verifies. That theorem and
Corollary~\ref{cor:randomized} below are the precise form of
Theorem~\ref{main:robust}.
\begin{theorem}[Size-specific lower bounds]\label{thm:fixed-size}
Let $\varphi$ be $\Psi_k$ or $\Phi_k$, let $N(\varphi)$ be as in
\eqref{eq:two-parameters}, and let $n\ge n_0(N(\varphi))$ with $n_0$ as in
\eqref{eq:size-threshold}. All the composition-count and term-size lower
bounds of Theorems~\ref{main:omega} and~\ref{main:phi} hold for targets that
are required to be correct only on $n$-element inputs, even when the target
depends arbitrarily on $k$ and $n$.
More generally, let $\Comp$ be an aggregation computation of this Boolean
query in the sense of Section~\ref{sec:responses}, with arbitrary
intermediate scalar values, whose wiring, kernels, scalar constants and
pointwise maps may all have been chosen for $k$ and $n$, and whose primitive
arrays are the inputs of $\varphi$ together with the constant arrays and the
equality array, and nothing else. Then $\wid{\Comp}\ge N(\varphi)$.
\end{theorem}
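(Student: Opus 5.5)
The plan is to reduce everything to Proposition~\ref{prop:fixed-test}, applied to a padded copy of $\M$ with exactly $n$ elements. Fix $N=N(\varphi)$ and the colors, anchors $x_\lambda$ and interpretation of Section~\ref{sec:logical} (\eqref{eq:omega-inputs} for $\Psi_k$, \eqref{eq:phi-inputs} for $\Phi_k$). The structure $\M$ has $n_0(N)$ elements. For $n>n_0(N)$ we add $n-n_0(N)$ further \emph{inert} anchors to $X$. On every type involving an inert anchor ($\tpl{x}$, $\tpr{x}$, $\tpa{x}{x'}$ with $x$ or $x'$ inert), all input relations are false: $R_i$ is false there, and $P_i=Q_i=0$, i.e.\ the type carries no label. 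The interpretation stays type-constant, since types are defined for an arbitrary finite $X$ and Lemma~\ref{lem:profile} and Corollary~\ref{cor:closure} were proved for arbitrary $X$. Hence Theorem~\ref{thm:transport} applies to the padded structure verbatim.

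The key step is to verify \eqref{eq:transport-decoder} for every $\lambda\in\Col$ on the padded structure $\M_f$, for every $f:\Col\to\Col$. I would redo the witness analysis of Section~\ref{sec:logical} at the read-off pair $(x_A,h)$ in $\M_f$, splitting the witnesses $z$ into four kinds.
\begin{itemize}[nosep]
\item A cell vertex $z\in\cell B$: the edge $(x_A,z)$ has type $\tpl{x_A}$, which is unchanged. The edge $(z,h)$ now has type $\ctype{f(B)}^f$, on which $f_\ast$ places the values of $\ctype{f(B)}$. So $z$ is a witness iff $f(B)=A$ for $\Phi_k$. For $\Psi_k$ it is a witness iff $f(B)\subseteq A$, which by equal sizes again means $f(B)=A$.
\item The hub $z=h$: the loop type $\tpeq$ is unchanged, so $h$ fails exactly as before.
\item An original anchor $z=x_B$: the type $\tpa{x_A}{x_B}$ is unchanged, so $x_B$ fails as before.
\item An inert anchor: the edge $(x_A,z)$ is an anchor--anchor type on which everything is false or carries no label. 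It fails by the same reasoning, since for $\Psi_k$ clause $i$ would need $R_i(z,h)$, which is false on $\tpl{z}$ for $z$ inert.
\end{itemize}
Thus $\varphi^{\M_f}(x_A,h)$ holds iff $A\in f(\Col)$, which is \eqref{eq:transport-decoder} with $\Det=\Col$. The one point I expect to need care is this check: confirming that the inert anchors never become witnesses, and that a hub-incident type under $f_\ast$ really carries the label $f(B)$ rather than $B$. Both follow directly from the definition of $f_\ast$ and the case list of \eqref{eq:types}.

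Proposition~\ref{prop:fixed-test} now gives $\wid{\Comp}\ge N$ for any computation agreeing with $\varphi$ on $\M$ and the $N$ structures $\M_{f_\lambda}$. All of these have exactly $n$ elements, so correctness on $n$-element inputs suffices. The wiring, kernels, constants and pointwise maps may depend on $k$ and $n$, because the comparison happens on a single domain (Theorem~\ref{thm:transport}); the only primitive arrays are the type-constant inputs, constants and equality, as the proposition requires. For \CoR\ terms and relational circuits, $\wid\land=2$ by Lemma~\ref{lem:and-width}, so $2m\ge N$. This gives $m\ge\lceil N/2\rceil$, namely $\lceil\frac12\binom{k}{\lfloor k/2\rfloor}\rceil$ for $\Psi_k$ and $2^{k-1}$ for $\Phi_k$. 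The term-size bounds $|t|\ge N+1$ then follow from \eqref{eq:tree-size}.
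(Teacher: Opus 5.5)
Your proposal is correct and follows the paper's proof. You pad $\M$ to exactly $n$ elements with inert anchors on whose incident types every input is false, so they carry no label. You then check \eqref{eq:transport-decoder} at $(x_A,h)$, using that a witness in $\cell B$ is seen from the hub through $f(B)$, and conclude with Proposition~\ref{prop:fixed-test}, $\wid\land=2$ and \eqref{eq:tree-size}. Your case-by-case check of the hub, original-anchor and inert-anchor witnesses is more explicit than the paper's, but the argument is the same.
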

The last restriction is essential; see the discussion after
Theorem~\ref{thm:transport}.
\begin{proof}
Use the corresponding interpretation of Section~\ref{sec:logical} and add
anchors until the domain has size $n$; this is possible for every $n$ in
the stated range, since the initial domain has $n_0(N)$ elements. The extra
anchors are made inert as follows. For $\Psi_k$, set every entry incident
to an extra anchor to
zero, so that such a witness has both of its edges false at a read-off
pair. For $\Phi_k$, give every entry incident to an extra anchor the code
$\cdb$, so that no type at such an anchor carries a label and such a witness
fails every clause; the core inputs of \eqref{eq:phi-inputs} are kept
unchanged. In both cases the original anchors
and the hub remain excluded as witnesses, for the reasons given in
Section~\ref{sec:logical}, and all the inputs remain type-constant.

It remains to check \eqref{eq:transport-decoder}. For both families a color
is a subset of $[k]$. After recoloring by $f$, a witness in the original
cell $\cell B$ is seen from the hub through $f(B)$ rather than $B$, so for
$\Psi_k$ it satisfies the clauses at $x_A$ exactly when $f(B)\subseteq A$,
hence exactly when $f(B)=A$, and for $\Phi_k$ it succeeds at $x_A$ exactly
when $f(B)=A$. Either way the pair $(x_\lambda,h)$ is satisfied precisely
when $\lambda\in f(\Col)$, which is \eqref{eq:transport-decoder} for every
$\lambda\in\Col$.
Proposition~\ref{prop:fixed-test} now gives the support-number bound. For
\CoR\ we have $\wid{\Comp}=2m$, so $m\ge\lceil N(\varphi)/2\rceil$ since $m$
is an integer, and Equation~\eqref{eq:tree-size} gives the term-size bounds.
\end{proof}
The explicit upper bounds are valid at every size, so the composition count
for $\Phi_k$ remains $\Theta(2^k)$ even when the target may depend on the
size. This is a result about relational circuits, not a lower bound for
unrestricted Boolean circuits with access to individual input entries.

\subsection{A bounded-error consequence}
For fixed $k$ and $n$, a randomized computation is an input-independent
probability distribution over computations in the sense of
Section~\ref{sec:responses}, each of which has its wiring, kernels,
constants and pointwise maps fixed. The distribution may depend on $k$ and
$n$. The error is measured at each input and each output pair. Random
auxiliary relations, random vertex labels, and input-dependent choices of a
circuit are not part of this model: the first two are in general not
type-constant, so Theorem~\ref{thm:transport} does not apply to a
computation that has one of them among its primitives (see the discussion
after that theorem). The distribution may thus range over the computation,
but not over the arrays from which it starts.

\begin{corollary}[Bounded error at a fixed size]\label{cor:randomized}
Let $\varphi$ be either family and $n\ge n_0(N(\varphi))$. A randomized
relational circuit computing $\varphi$ with error at most $\varepsilon<1/2$
on every input and output pair satisfies
\[
 \mathbb E[\gcost{\Comp}]\ \ge\ \frac{1-2\varepsilon}{2}\,N(\varphi).
\]
The same holds for expected composition occurrences in terms. For general
aggregation computations the bound is
$\mathbb E[\wid{\Comp}]\ge(1-2\varepsilon)N(\varphi)$.
\end{corollary}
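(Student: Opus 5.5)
The plan is Yao's minimax principle in averaging form, applied to the fixed family of $N+1$ structures from Proposition~\ref{prop:fixed-test}, with $N=N(\varphi)$ and $\Det=\Col$. I would work on the padded domain of size $n$ built in the proof of Theorem~\ref{thm:fixed-size}. On it take the reference structure $\M$ and the $N$ recolorings $\M_{f_\lambda}$, $\lambda\in\Col$. The correct answer at the read-off pair $(x_\lambda,h)$ is $1$ on $\M$ and $0$ on $\M_{f_\lambda}$, by \eqref{eq:transport-decoder}.

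Fix any deterministic computation $\Comp$ from the support of the distribution. By Theorem~\ref{thm:transport} there is a set $K_\Comp$ with $|K_\Comp|\le\wid\Comp$ such that, for every $\lambda\notin K_\Comp$, $\Comp$ transports through $f_\lambda$, since $f_\lambda(\Col)=\Col\setminus\{\lambda\}\supseteq K_\Comp$. By \eqref{eq:transport-anchor}, $\Comp$ then produces the \emph{same} output at $(x_\lambda,h)$ on $\M$ and on $\M_{f_\lambda}$. Since the correct answers there differ, $\Comp$ errs at $(x_\lambda,h)$ on at least one of the two structures. It is important that $K_\Comp$ depends only on $\Comp$, through its reference evaluation on $\M$, and not on $\lambda$.

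The averaging step is this. For each $\lambda$, let $E_\lambda$ be the event that $\lambda\notin K_\Comp$. On $E_\lambda$, the sum of the two error indicators at $(x_\lambda,h)$, one on $\M$ and one on $\M_{f_\lambda}$, is at least $1$. By hypothesis each error probability is at most $\varepsilon$, so $\Pr[E_\lambda]\le2\varepsilon$, and hence $\Pr[\lambda\in K_\Comp]\ge1-2\varepsilon$. Summing over $\lambda\in\Col$ gives
\[
\mathbb E[\wid\Comp]\ \ge\ \mathbb E|K_\Comp|\ =\ \sum_{\lambda}\Pr[\lambda\in K_\Comp]\ \ge\ (1-2\varepsilon)N .
\]
This is the general aggregation bound. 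For relational circuits, Lemma~\ref{lem:and-width} gives $\wid\Comp=2\gcost\Comp$, which yields $\mathbb E[\gcost\Comp]\ge\frac{1-2\varepsilon}{2}N$. For terms, every composition occurrence is a gate of the corresponding tree-shaped circuit, so the same bound holds for expected occurrences.

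The main obstacle is the bookkeeping needed to make this rigorous. The support must be chosen measurably per sample, which is trivial because the distribution ranges over finitely many computations once $n$ and $k$ are fixed. The error must also be charged per input and output pair; the hypothesis provides exactly this. No integrality rounding is available in expectation, so the bound stays $(1-2\varepsilon)N/2$ without a ceiling. I would also note that the two structures used for each $\lambda$ have the same domain, which is why this argument is placed after the transport theorem rather than in the deletion setting.
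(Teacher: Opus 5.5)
Your proof is correct and is the paper's argument with the averaging rearranged. The paper bounds the success probability of one uniformly random test ($\lambda$ uniform, then $\M$ or $\M_{f_\lambda}$ with probability $\frac12$ each) by $\frac12+\frac{|K|}{2N(\varphi)}$ for each deterministic computation and then averages over the distribution; you instead bound $\Pr[\lambda\notin K_\Comp]\le2\varepsilon$ for each $\lambda$ and sum, which gives the same inequality by linearity of expectation. One side remark is inaccurate but harmless: the distribution need not have finite support, since there are infinitely many relational circuits even at fixed $n$. Nothing depends on this, because $K_\Comp$ is simply a function of the sampled computation.
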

\begin{proof}
Choose a label $\lambda$ uniformly from $\Col$ and, with equal
probability, test $(x_\lambda,h)$ on $\M$ or on $\M_{f_\lambda}$; the
correct answers are opposite. For a deterministic computation with support
$K$, every $\lambda\notin K$ gives the same output on the two tests, so at
most one of them can be correct, and the success probability is at most
\[
 \frac12+\frac{|K|}{2N(\varphi)}
 \ \le\ \frac12+\frac{\wid{\Comp}}{2N(\varphi)}.
\]
Averaging over the input-independent distribution, a pointwise error of at
most $\varepsilon$ gives an average success probability of at least
$1-\varepsilon$, and comparing this with the bound above yields the claim.
If the expected support number is infinite, the claim is immediate. For
relational circuits, use $\wid{\Comp}=2\gcost{\Comp}$.
\end{proof}
In particular, error $1/3$ still requires $N(\varphi)/6$ compositions in
expectation, that is $2^k/6$ for $\Phi_k$.

\section{Non-Boolean aggregation and matrix query languages}\label{sec:matrix}
The preservation theorem of Section~\ref{sec:preservation} was proved for
an arbitrary kernel. We now compute the support number of semiring
multiplication and apply the theorem to arity elimination.

\subsection{The support number of semiring multiplication}\label{semi:width-section}
An idempotent semiring is a tuple
$\mathbb S=(S,\vee,\otimes,0_{\mathbb S},1_{\mathbb S})$: addition is a
commutative idempotent monoid, multiplication is a monoid, $0_{\mathbb S}$
is an annihilator, and multiplication distributes over finite joins on both
sides. Assume $0_{\mathbb S}\ne1_{\mathbb S}$; multiplication need not be
commutative unless stated. The \emph{join breadth}
$\breadth(\mathbb S)$ is the least $\breadth$ such that every finite join is
the join of at most $\breadth$ of its summands, or $\infty$. An idempotent
semiring whose addition orders $S$ totally --- $a\le b$ when $a\vee b=b$ ---
has breadth one, since a finite join is then its largest summand.

Semiring matrix multiplication is precisely the aggregation
\eqref{eq:aggregation} with kernel $\otimes$; we write $P\star Q$ for
$P\agg\otimes Q$. The equality array is the usual identity matrix $I_D$. The
model includes arbitrary uniform pointwise functions, constant-valued
matrices, transpose, and shared intermediates.

For this kernel the support number is governed by the join breadth.
\begin{proposition}[Semiring responses]\label{prop:semiring-width}
For semiring multiplication, the join of Definition~\ref{def:width} being
the semiring addition $\vee$,
\[
 \breadth(\mathbb S)\le\wid\otimes\le3\breadth(\mathbb S).
\]
In particular, if $\breadth<\infty$, then every computation with $m$
products has a preserving support of at most $3\breadth m$ colors, both
under restriction and under recoloring.
\end{proposition}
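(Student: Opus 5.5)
The two inequalities are proved separately, and the final sentence then follows from the earlier machinery.

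\emph{Upper bound.} Fix a family $(p_\lambda,q_\lambda)_{\lambda\in\Col}$. By distributivity,
\[
 \ell_J(s)=\Bigl(\bigvee_{\lambda\in J}p_\lambda\Bigr)\otimes s
 \quad\text{and}\quad
 r_J(s)=s\otimes\Bigl(\bigvee_{\lambda\in J}q_\lambda\Bigr).
\]
So $\ell_J=\ell_{\Col}$ and $r_J=r_{\Col}$ hold as soon as the scalar joins agree:
\[
 \bigvee_{\lambda\in J}p_\lambda=\bigvee_{\lambda\in\Col}p_\lambda
 \quad\text{and}\quad
 \bigvee_{\lambda\in J}q_\lambda=\bigvee_{\lambda\in\Col}q_\lambda.
\]
Here functional equality follows from scalar equality, which is the only place distributivity is used. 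By the definition of join breadth, the finite join $\bigvee_{\lambda\in\Col}p_\lambda$ equals the join of at most $\breadth$ of its summands. Let $K_1$ be the colors of those summands, so $|K_1|\le\breadth$.

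Then $\bigvee_{K_1}p_\lambda=\bigvee_{\Col}p_\lambda$. Since joins over sets between $K_1$ and $\Col$ are sandwiched, every $J\supseteq K_1$ also attains the full join. In the same way choose $K_2$ for the $q_\lambda$, and $K_3$ for the scalars $p_\lambda\otimes q_\lambda$, which fixes $d$. Then $K=K_1\cup K_2\cup K_3$ supports the family, and $|K|\le3\breadth$.

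\emph{Lower bound.} Suppose $\breadth(\mathbb S)\ge b$. Then there is a finite join $a_1\vee\dots\vee a_n$ that is not the join of fewer than $b$ of its summands. If $\breadth=\infty$, this holds for every $b$.

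Take colors $\lambda_1,\dots,\lambda_n$ with $p_{\lambda_i}=a_i$ and $q_{\lambda_i}=0_{\mathbb S}$. Then $r_J$ and $d_J$ vanish identically, because $0_{\mathbb S}$ is an annihilator. Also $\ell_J(1_{\mathbb S})=\bigvee_{i\in J}a_i$.

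So any support $K$ must satisfy $\bigvee_{K}a_i=\bigvee_{i}a_i$, which forces $|K|\ge b$. Hence $\wid\otimes\ge\breadth$. This direction uses the unit $1_{\mathbb S}$ to read the scalar join back out of the function $\ell_J$, and the annihilator to silence $r$ and $d$.

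\emph{The ``in particular'' clause.} Assign each product gate $\wid\otimes\le3\breadth$. The support number of a computation with $m$ products is then at most $3\breadth m$. Theorem~\ref{thm:preservation} gives the preserving support under restriction, and Theorem~\ref{thm:transport} gives it under recoloring. Nothing new is needed, since both theorems are stated for arbitrary kernels and the free gates are unchanged.

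\emph{Expected obstacle.} The delicate point is the lower bound, namely exhibiting a family whose support must be large. This needs a family over a finite color set in which the breadth is witnessed exactly, combined with a choice of $s$ that makes $\ell_J$ reveal $\bigvee_J p_\lambda$. The choice $s=1_{\mathbb S}$ together with $q\equiv0_{\mathbb S}$ should suffice. One should also check the case $\breadth=\infty$ separately: there the support number is $\infty$, and Theorem~\ref{thm:preservation} falls back to $K=\Col$.
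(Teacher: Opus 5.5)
Your proof is correct and follows the paper's argument: distributivity reduces the three responses to the scalar joins $\bigvee_J p_\lambda$, $\bigvee_J q_\lambda$ and $\bigvee_J p_\lambda\otimes q_\lambda$, each fixed by at most $\breadth$ colors, and the unit $1_{\mathbb S}$ extracts a scalar join for the lower bound. The one difference is cosmetic: the paper takes $q_\lambda=1_{\mathbb S}$ and reads the join off the paired response $d_J$, whereas you take $q_\lambda=0_{\mathbb S}$ and read it off $\ell_J(1_{\mathbb S})$; silencing $r_J$ and $d_J$ is harmless but unnecessary, since $\ell_K=\ell_{\Col}$ alone already forces the constraint.
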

\begin{proof}
Distributivity gives
\[
 \ell_J(s)=\Bigl(\bigvee_{\lambda\in J}p_\lambda\Bigr)\otimes s,
 \qquad r_J(s)=s\otimes\Bigl(\bigvee_{\lambda\in J}q_\lambda\Bigr),
\]
and evaluation at $1_{\mathbb S}$ recovers the two marginal joins. The
triple of responses is therefore equivalent to the three scalar values
\[
 \bigvee_{\lambda\in J}p_\lambda,\qquad
 \bigvee_{\lambda\in J}q_\lambda,\qquad
 \bigvee_{\lambda\in J}p_\lambda\otimes q_\lambda,
\]
and it suffices to choose at most $\breadth$ colors for each of them. For
the lower bound on $\wid\otimes$, set every $q_\lambda=1_{\mathbb S}$; then
the paired response becomes $d_J=\bigvee_{\lambda\in J}p_\lambda$, so a
support $K$ must already satisfy
$\bigvee_{\lambda\in K}p_\lambda=\bigvee_{\lambda\in\Col}p_\lambda$, and a
finite family $(p_\lambda)$ whose join is not the join of fewer than
$\breadth$ of its summands therefore needs $\breadth$ colors. The
preservation statements are those of Theorems~\ref{thm:preservation}
and~\ref{thm:transport}, and require no new closure or witness calculation.
\end{proof}
The lower inequality also shows that, for these unital kernels, a finite
support number is equivalent to finite join breadth. Neither inequality is
tight in general: Boolean conjunction has $\breadth=1$ but $\wid\land=2$, by
Lemma~\ref{lem:and-width}, whereas over max-plus, min-plus and max-min the
upper bound is attained, $\wid\otimes=3$ (Appendix~\ref{app:kernels}), so
over these semirings the factor $3\breadth$ in Theorems~\ref{thm:weighted}
and~\ref{main:matrix} is the support number itself. Under restriction the
scalar maps are always held fixed; size-specific maps are allowed in the
recoloring comparison, which takes place on a single domain.

\subsection{Profile agreement over a semiring}\label{semi:weighted-section}
A general semiring has no Boolean equivalence, but the agreement test
\eqref{eq:eta} of Section~\ref{sec:phi} is already algebraic. We read the
three codes in $\mathbb S$,
\[
 \cd1=(1_{\mathbb S},0_{\mathbb S}),\quad
 \cd0=(0_{\mathbb S},1_{\mathbb S}),\quad
 \cdb=(0_{\mathbb S},0_{\mathbb S}),
\]
and let $\eta\bigl((p,q),(p',q')\bigr)=(p\otimes p')\vee(q\otimes q')$. Then
$\eta(e,e')=1_{\mathbb S}$ if $e=e'\in\{\cd0,\cd1\}$ and $0_{\mathbb S}$ if
$e\ne e'$, while the invalid code $\cdb$ matches nothing. The codes are
pairs of input values, not an extension of the scalar state space.

For a commutative idempotent semiring, we use the $2k$ independent inputs
$P_1,Q_1,\ldots,P_k,Q_k$ and define
\begin{equation}\label{eq:weighted-query}
 F_k(x,y)=\bigvee_z\bigotimes_{i=1}^k
 \bigl(P_i(x,z)\otimes P_i(z,y)\ \vee\
       Q_i(x,z)\otimes Q_i(z,y)\bigr),
\end{equation}
whose clause is $\eta$ applied to the codes of the two edges. Over the
Boolean semiring, \eqref{eq:weighted-query} is literally $\Phi_k$. Over a
general $\mathbb S$ the pairs $(P_i,Q_i)$ remain independent, and
\eqref{eq:weighted-query} defines an algebraic query on arbitrary
semiring-valued inputs rather than a Boolean equivalence.

An annotated relation assigns a scalar to each tuple. In positive annotated
relational algebra, union adds annotations, natural join multiplies them,
projection sums over removed coordinates, and renaming changes attribute
names. Following \citet{brijder2022}, ARA(3) denotes this algebra restricted
to expressions all of whose intermediate relations have at most three
attributes. The query $F_k$ has an ARA(3) expression of size $O(k)$: for
each $i$, join copies of $P_i$ renamed to $(x,z)$ and $(z,y)$, do the same
for $Q_i$, and take their union; then join the $k$ ternary results and
project away $z$. Every intermediate has at most three attributes.

The interpretation of Section~\ref{sec:phi} and Proposition~\ref{prop:semiring-width}
now give the lower bound.
\begin{theorem}[Weighted-query lower bound]\label{thm:weighted}
Let $\mathbb S$ be a nontrivial commutative idempotent semiring of finite
join breadth $\breadth$, and let $\Comp$ be a square-matrix computation of
$F_k$ on all finite inputs in which \emph{every} aggregation gate carries
the kernel $\otimes$. Writing $m$ for the number of these gates, the
\emph{products} of $\Comp$, we have
\[
 m\ \ge\ \frac{2^k}{3\breadth}\qquad(k\ge1).
\]
Conversely, $2^k$ products suffice, with $O(k2^k)$ tree nodes or $O(2^k)$
nodes with sharing. The same lower bound holds separately at every fixed
size $n\ge n_0(2^k)$, even with size-specific scalar maps and wiring.
\end{theorem}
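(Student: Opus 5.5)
The plan is to reuse the Boolean interpretation of Section~\ref{sec:phi} verbatim over $\mathbb S$, read the Boolean values as $0_{\mathbb S}$ and $1_{\mathbb S}$, and feed the resulting detectability statement into Corollary~\ref{cor:detectable} (for the uniform bound) and Proposition~\ref{prop:fixed-test} (for the fixed-size bound). The support number comes from Proposition~\ref{prop:semiring-width}, which gives $\wid\otimes\le3\breadth$. A computation with $m$ products therefore has $\wid{\Comp}\le3\breadth m$, and it suffices to show $\wid{\Comp}\ge2^k$.

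\textbf{Step 1: the interpretation is type-constant and still decodes.} Take $\Col=2^{[k]}$ with one anchor $x_A$ per color, and interpret $P_i,Q_i$ by \eqref{eq:phi-inputs}, with entries in $\{0_{\mathbb S},1_{\mathbb S}\}$. These arrays are type-constant by construction.

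Evaluate $F_k$ at $(x_A,h)$ on $\M\res J$. The value is the join over witnesses $z$ of $\bigotimes_i\eta(c_i(x_A,z),c_i(z,h))$. By the code analysis after \eqref{eq:eta}, each factor is $1_{\mathbb S}$ or $0_{\mathbb S}$, and $0_{\mathbb S}$ annihilates. Hence the product is $1_{\mathbb S}$ exactly when every clause agrees, and $0_{\mathbb S}$ otherwise. The case analysis preceding \eqref{eq:phi-decoder} then applies unchanged: cell witnesses in $\cell B$ succeed iff $B=A$, while the hub and the anchors carry no label and so contribute $0_{\mathbb S}$. Since $1_{\mathbb S}\vee0_{\mathbb S}=1_{\mathbb S}$ and $0_{\mathbb S}\ne1_{\mathbb S}$, we get
\[
F_k^{\M\res J}(x_A,h)=1_{\mathbb S}\text{ if }A\in J,\qquad F_k^{\M\res J}(x_A,h)=0_{\mathbb S}\text{ otherwise}.
\]
So every color is detectable with two distinct output values, and Corollary~\ref{cor:detectable} gives $\wid{\Comp}\ge2^k$. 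Combined with $\wid{\Comp}\le3\breadth m$, this yields $m\ge2^k/(3\breadth)$.

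One point needs care here. Corollary~\ref{cor:detectable} is stated for Boolean-valued queries encoded by two elements. Here $\Comp$ computes $F_k$ exactly on all inputs, so on these particular inputs its outputs take only the two values $0_{\mathbb S},1_{\mathbb S}$, and the argument goes through unchanged.

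\textbf{Step 2: fixed size.} Pad the domain with inert anchors, all of whose incident entries carry the code $\cdb$, exactly as in the proof of Theorem~\ref{thm:fixed-size}. Then verify \eqref{eq:transport-decoder}: after recoloring by $f$, a witness in $\cell B$ succeeds at $(x_A,h)$ iff $f(B)=A$, by the same algebraic evaluation as in Step~1. Proposition~\ref{prop:fixed-test}, together with Theorem~\ref{thm:transport}, then gives the bound for every $n\ge n_0(2^k)$, even with size-specific maps and wiring.

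\textbf{Step 3: upper bound.} Set $H_A=\bigotimes_{i\in A}P_i\odot\bigotimes_{i\notin A}Q_i$, where $\odot$ here denotes the entrywise (Hadamard) product, which is a free pointwise map. Then $F_k=\bigvee_A H_A\star H_A$. To check this, expand the product of the $k$ binary joins by distributivity and commutativity into a join over $A\subseteq[k]$. The term for $A$ is $\bigotimes_{i\in A}P_i(x,z)P_i(z,y)\otimes\bigotimes_{i\notin A}Q_i(x,z)Q_i(z,y)$, which equals $H_A(x,z)\otimes H_A(z,y)$ by commutativity. Pulling the join over $z$ inside the join over $A$ gives the claimed identity. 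This uses $2^k$ products. As a tree it has $O(k2^k)$ nodes; with sharing of the partial Hadamard products along a binary decision tree on the indices, it has $O(2^k)$ nodes.

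The main obstacle is Step~1. The subtle part is checking that idempotence and annihilation make the semiring value at read-off pairs exactly two-valued. The remaining steps merely cite earlier results.
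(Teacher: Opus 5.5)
Your proposal is correct and follows essentially the same route as the paper. Both read the interpretation \eqref{eq:phi-inputs} in $\mathbb S$ to get the two-valued decoder at $(x_A,h)$, charge each product at most $3\breadth$ support colors via Proposition~\ref{prop:semiring-width}, handle fixed sizes by padding with unlabelled anchors and applying Proposition~\ref{prop:fixed-test}, and obtain the upper bound from $F_k=\bigvee_A H_A\star H_A$. The only difference is cosmetic: you cite Corollary~\ref{cor:detectable}, whereas the paper re-runs its one-line argument directly from Theorem~\ref{thm:preservation}.
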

\begin{proof}
Use the interpretation \eqref{eq:phi-inputs} of Section~\ref{sec:phi} with
the codes read in $\mathbb S$: all $N=2^k$ subsets of $[k]$ serve as colors,
$\Col=2^{[k]}$, with one anchor $x_A$ per color; the type $\tpl{x_A}$
carries the label $A$, the type $\ctype B$ carries $B$, and every other type
carries no label. All inputs are type-constant. At $(x_A,h)$ a cell witness
succeeds exactly when its color is $A$, while an anchor witness and the hub
have an unlabelled type on one of their edges, so, for every nonempty
$J\subseteq\Col$, every $f:\Col\to\Col$ and every $A\in\Col$,
\begin{equation}\label{eq:weighted-decoder}
 F_k^{\M\res J}(x_A,h)=1_{\mathbb S}\iff A\in J,\qquad
 F_k^{\M_f}(x_A,h)=1_{\mathbb S}\iff A\in f(\Col),
\end{equation}
and the value is $0_{\mathbb S}$ otherwise. All $N$ labels are detectable.

Every aggregation gate of $\Comp$ costs at most $3\breadth$ support colors
by Proposition~\ref{prop:semiring-width}, and there are $m$ of them, so
$\wid\Comp\le3\breadth m$ and the set $K$ of
Theorem~\ref{thm:preservation} has $|K|\le3\breadth m$. If $3\breadth m<N$
then $K$ misses some color $A$, and deleting the cell $\cell A$ leaves the
output at $(x_A,h)$ unchanged, whereas $F_k$ changes there by
\eqref{eq:weighted-decoder}. For the
fixed-size claim, pad with anchors carrying no label on any incident type
and apply Proposition~\ref{prop:fixed-test}. All
intermediate values may be arbitrary scalars, not just encoded bits.

For the upper bound let
$H_A=\bigotimes_{i\in A}P_i\otimes\bigotimes_{i\notin A}Q_i$ entrywise.
Distributivity and commutativity give
\begin{equation}\label{eq:weighted-upper}
 F_k=\bigvee_{A\subseteq[k]}H_A\star H_A,
\end{equation}
which is \eqref{eq:phi-upper} with $\otimes$ for $\cap$ and $\star$ for the
composition. There are $2^k$ products and $O(k2^k)$ tree nodes; sharing
prefix products gives $O(2^k)$ nodes.
\end{proof}
The lower-bound interpretation uses only the two valid codes and the invalid
one, so it also works with noncommutative multiplication if the source
product is given its displayed order; commutativity is used only for the
upper bound on general scalar inputs.

The hypothesis that every aggregation gate carries the kernel $\otimes$ is
not merely cosmetic. A computation in the sense of
Section~\ref{sec:responses} may use a different kernel at each gate, and the
support number of a computation is the sum over \emph{all} of its
aggregation gates (Definition~\ref{def:width}); a gate carrying some other
kernel therefore contributes its own support number to this sum, and a
kernel of infinite support number defeats the count altogether.
Appendix~\ref{app:kernels} exhibits such a kernel, together with a single
aggregation that uses it to compute $\Phi_k$ outright. What the
theorem counts is the number of products in a computation whose only
aggregations are products, and this is exactly the form needed for
Theorem~\ref{main:matrix}: by Lemma~\ref{lem:broadcast}, a MATLANG
computation translates into such a computation, its products contracting
the dimension $1$ having become free entrywise gates.

The following examples all have breadth one.
\begin{center}
\renewcommand{\arraystretch}{1.2}
\begin{tabular}{lcccc}
\toprule
Semiring & Scalars & $\vee$ & $\otimes$ & $(0_{\mathbb S},1_{\mathbb S})$\\
\midrule
Max--plus & $\mathbb R\cup\{-\infty\}$ & $\max$ & $+$ & $(-\infty,0)$\\
Min--plus & $\mathbb R\cup\{+\infty\}$ & $\min$ & $+$ & $(+\infty,0)$\\
Max--min & $[0,1]$ & $\max$ & $\min$ & $(0,1)$\\
\bottomrule
\end{tabular}
\end{center}
For instance the min-plus query is
\[
 \min_z\sum_{i=1}^k
 \min\{P_i(x,z)+P_i(z,y),\ Q_i(x,z)+Q_i(z,y)\},
\]
and the minimum number of min-plus products needed to compute it lies
between $2^k/3$ and $2^k$. These are not ordinary arithmetic matrix
products; the latter, combined with arbitrary decoding functions, are not
covered by this theorem.

\subsection{Typed MATLANG and arity elimination}\label{semi:typed-section}
Semiring MATLANG \citep{brijder2022}, starting from square inputs of type
$n\times n$, has intermediate types $n\times n$, $n\times1$, $1\times n$,
and $1\times1$. Besides entrywise operations and transpose it has
ones-vectors, column diagonalization, and conformable matrix
multiplication. We allow arbitrary uniform entrywise functions.

Typed MATLANG is not literally the model of Section~\ref{sec:responses}, but
it embeds into that model at no cost.
\begin{lemma}[Broadcasting typed intermediates]\label{lem:broadcast}
A typed MATLANG computation can be represented in the square-matrix model
without increasing its number of products, preserving sharing. Products
contracting dimension $n$ become one square product, and those contracting
dimension $1$ become entrywise operations.
\end{lemma}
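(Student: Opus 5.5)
We will represent every typed intermediate by a square $n\times n$ array and translate the computation gate by gate, in evaluation order. The encoding broadcasts along the missing dimension. An $n\times n$ value is kept as is. A column vector $a$ of type $n\times1$ becomes the array $\hat a(u,v)=a(u)$, constant along rows. A row vector $b$ of type $1\times n$ becomes $\hat b(u,v)=b(v)$. A scalar $c$ of type $1\times1$ becomes the constant array $\hat c\equiv c$. Sharing is preserved automatically, since each gate is translated once and its image is reused wherever the original output was used.

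We then check each MATLANG operation against this encoding.
\begin{itemize}[nosep]
\item Entrywise functions act on operands of equal type, so they commute with broadcasting and remain uniform pointwise maps.
\item Transpose swaps $\hat a$ and $\hat{a^{\mathsf T}}$ by array transpose, and leaves square and scalar encodings as they should be.
\item The ones-vector is a constant array.
\item Column diagonalization sends $a$ to $\mathrm{diag}(a)(u,v)=a(u)$ if $u=v$ and $0_{\mathbb S}$ otherwise. We realize it as a single free pointwise gate applied to $\hat a$ and the equality array $I_D$, which is a primitive of the model.
\end{itemize}

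Products split by which dimension they contract. A product contracting dimension $1$ is an outer product or a scalar multiple: $n\times1$ times $1\times n$, $n\times1$ times $1\times1$, and so on. Each output entry is a single product of one entry of each factor, so the output is $\hat a\otimes\hat b$ taken entrywise. For example, $(ab)(u,v)=a(u)\otimes b(v)=\hat a(u,v)\otimes\hat b(u,v)$. This is a free gate and no aggregation is used.

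A product contracting dimension $n$ covers the shapes $(n\times n)(n\times n)$, $(n\times n)(n\times1)$, $(1\times n)(n\times n)$, $(1\times n)(n\times1)$ and $(n\times n)$ against column vectors. We claim it equals one square aggregation $\hat P\star\hat Q$ of the encodings, up to the broadcasting convention. Take the column case: $(\hat P\star\hat a)(u,v)=\bigvee_z P(u,z)\otimes a(z)$, which is independent of $v$ and so is exactly $\widehat{Pa}$. Take the inner product $b\,a$: it is $\bigvee_z b(z)\otimes a(z)$. The array $\hat b\star\hat a$ at $(u,v)$ gives $\bigvee_z \hat b(u,z)\otimes\hat a(z,v)=\bigvee_z b(z)\otimes a(z)$, a constant array, as required.

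The main obstacle is getting the orientation right in these cases. A row vector must be broadcast so that its index sits in the contracted position on the left, and a column vector so that its index sits in the contracted position on the right. We fix these conventions once and check them across all shapes. Each such MATLANG product then maps to exactly one square product, so the product count does not increase and every aggregation carries the kernel $\otimes$.
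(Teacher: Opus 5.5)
Your proposal is correct and follows essentially the same route as the paper: the same broadcasting encoding of columns, rows and scalars; transpose, pointwise maps and the ones-vector handled directly; column diagonalization as entrywise multiplication by $I_D$; products contracting $n$ as one square product and products contracting $1$ as entrywise $\otimes$ in the original factor order; and sharing preserved by translating each node once. The only case you omit is diagonalization of a $1\times1$ scalar, which must leave its constant encoding unchanged (multiplying that encoding by $I_D$ would give the wrong array), and the paper states this case explicitly.
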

\begin{proof}
Encode matrices, columns, rows, and scalars, respectively, as
\[
 \widehat A(u,v)=A(u,v),\quad \widehat p(u,v)=p(u),\quad
 \widehat w(u,v)=w(v),\quad \widehat s(u,v)=s,
\]
where the letters $A,p,w,s$ are local to this proof.
The scalar encoding stores the computed scalar at every entry; it is not an
input-independent free constant. Transpose and pointwise operations respect
the encodings. A ones-vector becomes the constant matrix $1_{\mathbb S}$.
Diagonalization of a column multiplies its encoding entrywise by $I_D$;
diagonalization of a scalar leaves its encoding unchanged. A product
contracting $n$ sums over the common coordinate and is one square semiring
product. A product contracting $1$ has one summand at each entry and is
entrywise $\otimes$. These identities cover all conformable type
combinations and keep the factor order. Translating each node once preserves
sharing.
\end{proof}

ARA(3) allows annotated relations of intermediate arity at most three,
whereas MATLANG uses matrices, vectors, and scalars. The following is the
precise form of Theorem~\ref{main:matrix-informal}.
\begin{theorem}[Exponential cost of arity elimination]\label{main:matrix}
Fix a nontrivial commutative idempotent semiring of finite join breadth
$\breadth$. For every $k\ge1$ there is a binary-output positive ARA(3)
expression of size $O(k)$ such that every equivalent MATLANG computation,
even with arbitrary uniform entrywise functions and sharing, uses at least
$2^k/(3\breadth)$ matrix products that contract the dimension $n$; an
equivalent MATLANG expression with $2^k$ products exists. Equivalence is on
all finite inputs over the fixed semiring.
\end{theorem}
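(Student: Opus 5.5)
The plan is to take the ARA(3) expression for $F_k$ from Section~\ref{semi:weighted-section} as the required expression of size $O(k)$, and to reduce the MATLANG lower bound to Theorem~\ref{thm:weighted} via Lemma~\ref{lem:broadcast}.

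\emph{The expression.} For each $i$, join copies of $P_i$ renamed to $(x,z)$ and $(z,y)$, do the same for $Q_i$, take the union of the two results, join the $k$ ternary relations, and project away $z$. Every intermediate has at most three attributes and the output is binary. The semantics of positive annotated relational algebra (union as $\vee$, join as $\otimes$, projection as $\bigvee$) give exactly \eqref{eq:weighted-query}, using commutativity of $\otimes$ to reorder factors.

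\emph{Lower bound.} Let $\mathcal C$ be a MATLANG computation, possibly with sharing and arbitrary uniform entrywise functions, equivalent to this expression on all finite inputs. First apply Lemma~\ref{lem:broadcast}. It turns $\mathcal C$ into a square-matrix computation in the sense of Section~\ref{sec:responses} that computes the same $n\times n$ output. In the new computation, every product contracting dimension $n$ becomes one aggregation gate with kernel $\otimes$, and products contracting dimension $1$, ones-vectors and diagonalizations become free gates (entrywise operations, constant arrays, or entrywise multiplication by $I_D$). Hence every aggregation gate carries the kernel $\otimes$, which is the hypothesis of Theorem~\ref{thm:weighted}. The number $m$ of these gates equals the number of $n$-contracting products of $\mathcal C$, and sharing is preserved. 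Theorem~\ref{thm:weighted} then gives $m\ge 2^k/(3\beta)$.

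The main point to check is that the free gates produced by broadcasting stay within the model: they must be uniform and must commute with restriction and with $f_\ast$. The identity matrix is the equality array, constants are constant arrays, and entrywise functions are uniform, so none of them introduces a non-type-constant primitive. Also, a scalar computed inside $\mathcal C$ is encoded as a constant-valued matrix that is computed from the inputs, not as a free constant, and this is consistent with the preservation argument.

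\emph{Upper bound.} Form the matrices $H_A$ entrywise and take $\bigvee_{A}H_A H_A$ as in \eqref{eq:weighted-upper}. This is a MATLANG expression with $2^k$ products, each contracting $n$, and its correctness uses distributivity and commutativity.
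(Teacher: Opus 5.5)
Your proposal is correct and follows the paper's proof essentially verbatim. You take the ARA(3) expression for $F_k$, translate any equivalent MATLANG computation by Lemma~\ref{lem:broadcast} so that every aggregation gate carries $\otimes$ and only the $n$-contracting products remain as aggregations, invoke Theorem~\ref{thm:weighted}, and use \eqref{eq:weighted-upper} for the upper bound. Your extra check, that the broadcast free gates (constant arrays, the equality array $I_D$, uniform entrywise maps) introduce no non-type-constant primitive, is exactly the condition the paper relies on implicitly.
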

\begin{proof}
Use the ARA(3) expression for $F_k$ and apply Lemma~\ref{lem:broadcast} to
any equivalent MATLANG computation. The translated computation has $\otimes$
at every aggregation gate, as Theorem~\ref{thm:weighted} requires, because
every product contracting the dimension $1$ has become a free entrywise
gate; that theorem then gives the lower bound, which counts only the
products contracting the dimension $n$, as stated.
Expansion~\eqref{eq:weighted-upper} is already a MATLANG expression and
gives the upper bound. The same proof applies at every fixed
$n\ge n_0(2^k)$, even when the scalar constants and entrywise maps are
chosen for that size, since they remain uniform across entries.
\end{proof}
\citet{brijder2022} establish the corresponding expressive equivalence, and
the remark following Corollary~8 in the earlier version
\citep[p.~9]{brijder2019} asks whether the exponential blow-up in arity
elimination is necessary. The theorem shows that it is, for every fixed
nontrivial commutative idempotent semiring of finite breadth, including the
three examples above, and even for the enlarged target language.

\subsection{Positive computations over further semirings}\label{semi:positive-section}
A separate transfer argument does not require idempotence. Here
\emph{positive} MATLANG uses only entrywise semiring addition and
multiplication, transpose, ones-vectors, diagonalization, and matrix
multiplication; scalar constants must lie in the subsemiring generated by
$0_{\mathbb S}$ and $1_{\mathbb S}$, and arbitrary entrywise functions are
not allowed.

On Boolean inputs such a computation is a Boolean one in disguise.
\begin{corollary}[Positive transfer]\label{cor:positive}
Let $\mathbb S$ be a nontrivial commutative semiring with
$r\cdot1_{\mathbb S}\ne0_{\mathbb S}$ for every positive integer $r$.
Interpret \eqref{eq:weighted-query} using its sums and products. Every
positive MATLANG computation of this query has at least $2^{k-1}$ products.
This holds with sharing, on all finite inputs, and separately at every fixed
size $n\ge n_0(2^k)$.
\end{corollary}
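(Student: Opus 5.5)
The plan is to run a positive MATLANG computation on Boolean inputs and show it computes a relational circuit with the same number of products. The characteristic-zero hypothesis is what makes the 0/1 pattern of its values well behaved.

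First I define the support map $\pi:\mathbb S\to\bool$ by $\pi(0_{\mathbb S})=0$ and $\pi(s)=1$ otherwise. On the subsemiring generated by $0_{\mathbb S}$ and $1_{\mathbb S}$, every element is $r\cdot1_{\mathbb S}$ for some $r\in\mathbb N$, and it is nonzero iff $r\ge1$ by the hypothesis. For inputs with entries in $\{0_{\mathbb S},1_{\mathbb S}\}$, every intermediate value of a positive computation lies in this subsemiring, since only sums and products are ever applied. On the image $\{r\cdot1_{\mathbb S}\}\cong$ a quotient-free copy of $\mathbb N$, the map $\pi$ is a semiring homomorphism onto $(\bool,\lor,\land)$. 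Additivity holds because $r+r'=0$ forces $r=r'=0$. Multiplicativity holds because $(r\cdot1)(r'\cdot1)=rr'\cdot1$, which is nonzero iff both factors are.

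Second, I apply $\pi$ gate by gate. By Lemma~\ref{lem:broadcast} the computation becomes a square-matrix computation whose aggregation gates are semiring products $\star$, with the same sharing and with only the products contracting dimension $n$ left as aggregations. Its free gates are entrywise $+$ and $\cdot$, transpose, the constant arrays $r\cdot1_{\mathbb S}$ and the identity matrix. Applying $\pi$ entrywise maps each gate to the corresponding Boolean operation: $+\mapsto\cup$, $\cdot\mapsto\cap$, $\star\mapsto{;}$, transpose$\mapsto{}^\smile$, the constants to $\zero$ or $\one$, and $I_D\mapsto\id$. The result is a relational circuit $\Comp'$ with the same number of composition gates, no complement, and, on Boolean inputs, output equal to $\pi$ of the original output.

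Third, I restrict to Boolean inputs. On these, \eqref{eq:weighted-query} evaluated in $\mathbb S$ is a natural number times $1_{\mathbb S}$, and its image under $\pi$ is exactly $\Phi_k$. This uses the same homomorphism applied to the defining expression, where the clause $P_i(x,z)P_i(z,y)+Q_i(x,z)Q_i(z,y)$ maps to the Boolean clause. Hence $\Comp'\equiv_{\rm fin}\Phi_k$, and Theorem~\ref{main:phi} gives at least $2^{k-1}$ composition gates. For the fixed-size claim I instead use Theorem~\ref{thm:fixed-size}, since the structures there are Boolean-valued and $\Comp'$ agrees with $\Phi_k$ on all $n$-element Boolean inputs.

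The main obstacle is the first step, namely checking that $\pi$ really is a homomorphism on all values that arise. The risk is cancellation ($r\cdot1=0$ for some $r>0$) or zero divisors among the multiples of $1_{\mathbb S}$. Both are excluded because the multiples of $1$ form a homomorphic image of $\mathbb N$ that is injective on the zero class by hypothesis. Arbitrary entrywise functions and arbitrary scalar constants would break this, which is why the positive restriction is needed.
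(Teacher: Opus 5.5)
Your proposal is correct and follows the paper's proof essentially step for step: you Booleanize entrywise via $a\mapsto[a\ne0_{\mathbb S}]$ on the multiples of $1_{\mathbb S}$, translate with Lemma~\ref{lem:broadcast}, note that the Boolean version of $F_k$ is $\Phi_k$, and invoke Theorem~\ref{main:phi} or Theorem~\ref{thm:fixed-size} (the order of broadcasting and Booleanizing is immaterial). Drop the side remark that the multiples of $1_{\mathbb S}$ form a copy of $\mathbb N$, which is false in the Boolean and tropical semirings where $1_{\mathbb S}+1_{\mathbb S}=1_{\mathbb S}$; your homomorphism check only uses that $r\cdot1_{\mathbb S}=0_{\mathbb S}$ forces $r=0$, which is exactly the hypothesis.
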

\begin{proof}
Consider the subsemiring $S_0=\{r\cdot1_{\mathbb S}:r\ge0\}$ and the map
$S_0\to\bool$, $a\mapsto[a\ne0_{\mathbb S}]$. It sends $0_{\mathbb S}$ to
$0$ and, by the assumption on $\mathbb S$, it sends $r\cdot1_{\mathbb S}$ to
$1$ for every $r\ge1$; without that assumption a positive multiple of
$1_{\mathbb S}$ could equal $0_{\mathbb S}$ and the two clauses would
conflict. The identities for integer sums and products then show that this
map is a semiring homomorphism.

On Boolean inputs all intermediate values lie in $S_0$, so applying the
homomorphism entrywise Booleanizes the computation without changing its
number of products. After Lemma~\ref{lem:broadcast} this is a Boolean
relational computation of the Boolean version of $F_k$, which is $\Phi_k$
itself. Theorem~\ref{main:phi}, or Theorem~\ref{thm:fixed-size} at a fixed
dimension, gives the bound. All transformations preserve sharing.
\end{proof}
This includes $\mathbb N,\mathbb Z,\mathbb R,\mathbb C$ with the specified
positive operations and constants, but not negative constants, subtraction,
or arbitrary decoding functions. Expansion~\eqref{eq:weighted-upper} still
uses $2^k$ products: it needs distributivity and commutativity, not
idempotence.

\paragraph{Wider scalars and other kernels.}
The invariant also measures what wider scalars and other kernels can buy.
Appendix~\ref{app:kernels} shows that $b$-bit scalars with coordinatewise
Boolean operations reduce the number of products needed for $\Phi_k$ by a
factor of $b$ and no more (Proposition~\ref{prop:width}); that every kernel
on $[0,1]$ that is monotone in each argument has support number at most
three, so the lower bounds persist with a factor $1/3$; and that a kernel of
unbounded support number, bitwise AND on the nonnegative integers, computes
$\Phi_k$ with a single aggregation. What a finite support number limits is
the way in which encoded states can interact through a witness.

\bibliographystyle{plainnat}
\bibliography{bib}

\appendix
\section{An explicit structure}\label{app:explicit}
This appendix states and proves the explicit colorings announced in
Section~\ref{sec:structure}, which make the coloring of
Lemma~\ref{lem:palette} deterministic. Nothing here is needed for the lower
bounds; it is recorded so that the $|\Col|+1$ structures of
Corollary~\ref{cor:detectable} can be written down rather than merely shown
to exist. The symbols $q$, $\ind$, $\psi$, $\omega$, the modulus $m$, the
partial coloring $\sigma$ and the summation indices $j,j',j''$ are local to
this appendix.

\begin{proposition}[Explicit colorings]\label{prop:explicit}
Let $N\ge2$. A coloring with the two properties of Lemma~\ref{lem:palette}
can be obtained in either of the following ways.
\begin{enumerate}[nosep,label=(\alph*)]
\item Keep $L=\lceil12N^2\ln(2N)\rceil$ and any $N$ disjoint cells of that
 size. A valid coloring is then produced by a deterministic algorithm whose
 running time is polynomial in $|D|$.
\item In closed form: let $q$ be a prime with $q\equiv1\pmod{2N}$ and
 $q\ge6N^6$, let $\ind:\mathbb F_q^{\times}\to\mathbb Z_N$ be a surjective
 homomorphism, identify $\Col$ with $\mathbb Z_N$, and take
 \[
  \cell\Col=\mathbb F_q^{\times},\qquad
  \cell\lambda=\ind^{-1}(\lambda),\qquad
  \colr(u,v)=\ind(u-v),
 \]
 so that the cells are the cosets of the $N$-th powers. Here
 $L=(q-1)/N$, which is $N^{O(1)}$; for $N=2$ the coloring is the Paley
 graph.
\end{enumerate}
\end{proposition}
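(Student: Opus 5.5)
For (a) I would derandomize the union-bound argument of Lemma~\ref{lem:palette} by the method of conditional expectations. The bad event for a requirement $(\lambda,u,v,\mu,\nu)$ is that no $z\in\cell\lambda\setminus\{u,v\}$ has $\colr(u,z)=\mu$ and $\colr(z,v)=\nu$. The pessimistic estimator is the sum, over all requirements, of the conditional probability of failure given a partial coloring $\sigma$. Each such probability factors over the witnesses $z$, because the pairs $\{u,z\},\{z,v\}$ are disjoint for distinct $z$. Each factor is $0$, $1$, $1-1/N$ or $1-N^{-2}$, according to how many of the two edges are already fixed and whether they match. So the estimator is computable exactly, as a rational number, in time polynomial in $|D|$. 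Initially it is below $1$ by the computation in Lemma~\ref{lem:palette}. We then color the edges one at a time, choosing the color that does not increase the estimator; an average over the $N$ choices equals the current value, so some choice works. At the end the estimator is an integer below $1$, hence $0$, and every requirement is met. There are $O(N^5L^2)$ requirements and $O(N^2L^2)$ edges, so the running time is polynomial. This part is routine.

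For (b) the main work is a character-sum estimate. The existence of $q$ follows from Dirichlet's theorem; I would not try to bound $q$ here beyond noting it is $N^{O(1)}$ once it exists, though it may be cleaner to invoke Linnik. Note that $-1$ is an $N$-th power because $q\equiv1\pmod{2N}$ makes $(q-1)/N$ even. Hence $\ind(-1)=0$, and $\colr(u,v)=\ind(u-v)$ is symmetric, so it is a valid edge coloring of $\mathbb F_q^\times$. The hub is a separate vertex, so the element $0$ is not used. Property~(b) follows from (a) as in the lemma, so only (a) needs proof.

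For (a), fix distinct $u,v\in\mathbb F_q^\times$, colors $\mu,\nu$ and a cell $\lambda$. We count $z\neq0,u,v$ with $\ind(z)=\lambda$, $\ind(u-z)=\mu$ and $\ind(z-v)=\nu$. Let $\psi$ be a character of $\mathbb F_q^\times$ of order $N$ with kernel the $N$-th powers, identified via $\ind$. Write each indicator as $\frac1N\sum_{j}\omega^{-j\lambda}\psi^j(z)$, and similarly for the other two conditions. The count becomes $\frac{1}{N^3}\sum_{j,j',j''}$ of twisted sums $\sum_z\psi^j(z)\psi^{j'}(u-z)\psi^{j''}(z-v)$. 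The term with $j=j'=j''=0$ gives $(q-3)/N^3$. Every other term is a character sum of a polynomial $z^{a}(u-z)^{b}(z-v)^{c}$ with distinct roots $0,u,v$ that is not an $N$-th power. By the Weil bound it has absolute value at most $2\sqrt q$. So the count is at least $(q-3)/N^3-2\sqrt q$, and this is positive once $q\ge6N^6$: then $q/N^3\ge\sqrt{6q}$, which exceeds $2\sqrt q+3/N^3$.

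The main obstacle I expect is exactly this Weil step. One must check that the exceptional characters, those making the product trivial, occur only for $j=j'=j''=0$, which holds because the three linear factors are coprime. One must also check that the constant is right. Finally, $L=(q-1)/N$ and the Paley remark for $N=2$ are immediate.
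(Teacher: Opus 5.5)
Your plan for both parts is the paper's. For (a) you use conditional expectations on the union-bound estimator. For (b) you get symmetry from $\ind(-1)=0$, expand the three indicators in powers of $\psi=\omega^{\ind}$, and apply Weil's bound. Part (a) is fine as written. Two points in (b) need repair; the first is a real gap, the second a justification issue.

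\textbf{The size of $q$.} You say $L=(q-1)/N$ is $N^{O(1)}$ ``once $q$ exists''. That does not follow. Dirichlet's theorem gives existence but no bound on the size of $q$. Linnik's theorem applied to the modulus $2N$ gives a small prime $q\equiv1\pmod{2N}$, but says nothing about whether that prime satisfies $q\ge6N^6$. The fix is to apply Linnik to a modulus that encodes both constraints at once, for instance $m=2N\lceil3N^5\rceil$. Any prime $q\equiv1\pmod m$ then satisfies $q\equiv1\pmod{2N}$ and $q>m\ge6N^6$. Linnik supplies such a prime with $q=m^{O(1)}=N^{O(1)}$. This is the route the paper takes.

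\textbf{The Weil step.} Weil's bound controls the complete sum $\sum_{z\in\mathbb F_q}\psi(f(z))$, where $f=z^{j}(u-z)^{j'}(z-v)^{j''}$ and $d$ is the number of distinct roots of $f$. Your sum runs only over $z\notin\{0,u,v\}$. When one or two of $j,j',j''$ vanish, the corresponding points are not roots of $f$. So $f$ does not have ``distinct roots $0,u,v$'' as you state, and the two sums differ at those points.

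Your claimed bound $2\sqrt q$ is nevertheless true, because in that case $d<3$ and $(d-1)\sqrt q+(3-d)\le2\sqrt q$. This needs to be said. The paper sidesteps the point by using the cruder bound $2\sqrt q+3$, and positivity of $\frac{q-3}{N^3}-2\sqrt q-3$ at $q\ge6N^6$ still holds, so your conclusion stands either way.
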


\begin{proof}[Proof of Proposition~\ref{prop:explicit}(a)]
Order the edges between cell vertices arbitrarily and color them one at a
time. A \emph{requirement} is a tuple $(\lambda,u,v,\mu,\nu)$ as in the
proof of Lemma~\ref{lem:palette}; there are fewer than $N^5L^2$ of them. For
a partial coloring $\sigma$ of an initial segment of the edges, let
$E(\sigma)$ be the expected number of failed requirements when the remaining
edges are colored independently and uniformly at random. For a single
requirement the events ``$z$ witnesses it'' concern disjoint pairs of edges
for distinct $z$, so
\[
 \Pr[\,(\lambda,u,v,\mu,\nu)\text{ fails}\mid\sigma\,]
 =\prod_{z\in\cell\lambda\setminus\{u,v\}}
 \bigl(1-\Pr[\,z\text{ witnesses it}\mid\sigma\,]\bigr),
\]
and each factor $\Pr[\,z\text{ witnesses it}\mid\sigma\,]$ is $0$,
$\tfrac1{N^2}$, $\tfrac1N$ or $1$ according to how many of the two edges
$\{u,z\}$ and $\{z,v\}$ are already colored by $\sigma$ and whether those
colors are $\mu$ and $\nu$. Hence $E(\sigma)$ is computable
exactly, in time $O(N^5L^3)$.

The calculation in the proof of Lemma~\ref{lem:palette} shows that
$E(\varnothing)<1$.
When the next edge is colored, $E(\sigma)$ is the average of its $N$ values
over the colors available for that edge, so one of those colors does not
increase it.
Coloring greedily in this way produces a total coloring $\sigma^\ast$ with
$E(\sigma^\ast)\le E(\varnothing)<1$; but $E(\sigma^\ast)$ is the number of
requirements that $\sigma^\ast$ fails, an integer, hence $0$. Fewer than
$N^2L^2$ edges are colored and each step evaluates $E$ at $N$ candidates, so
the running time is polynomial in $|D|$. Property (b) of
Lemma~\ref{lem:palette} follows from (a) as before.
\end{proof}

\begin{proof}[Proof of Proposition~\ref{prop:explicit}(b)]
Primes $q\equiv1\pmod{2N}$ with $q\ge6N^6$ exist by Dirichlet's theorem,
and one of them is polynomial in $N$: applied to the modulus
$m=2N\lceil3N^5\rceil\ge6N^6$, Linnik's theorem \citep{linnik1944} --- see
\citet{xylouris2011} for the best known exponent --- gives a prime
$q\equiv1\pmod m$ with $q=N^{O(1)}$, and any such $q$ satisfies
$q\equiv1\pmod{2N}$ and $q\ge m\ge6N^6$. The modulus is taken to be $m$ rather than $2N$ because $q$
has to satisfy $q\ge6N^6$ as well as $q\equiv1\pmod{2N}$, and for the least
prime of the progression $1\pmod{2N}$ lying above $6N^6$ there is no
unconditional estimate to appeal to: at that height $2N$ is of size about
$q^{1/6}$, far outside the range in which the Siegel--Walfisz theorem
applies. We make no claim about that least prime, and nothing below depends
on it. A surjective homomorphism
$\ind:\mathbb F_q^{\times}\to\mathbb Z_N$ exists because $N\mid q-1$, and
its fibres partition $\cell\Col=\mathbb F_q^{\times}$ into $N$ cells of
$L=(q-1)/N$ elements each. Since $\mathbb F_q^{\times}$ is cyclic,
$\ker\ind$ is its unique subgroup of index $N$, of order $(q-1)/N$, and that
order is even because $2N\mid q-1$; the element $-1$, the unique one of
order $2$, therefore lies in $\ker\ind$. So $\ind(-1)=0$, hence
$\colr(v,u)=\ind(-1)+\colr(u,v)=\colr(u,v)$ and the coloring is
symmetric.

For property (a), fix distinct $u,v\in\mathbb F_q^{\times}$, colors
$\mu,\nu\in\mathbb Z_N$ and a cell $\cell\gamma$. Any $z$ with
$\ind(z)=\gamma$, $\ind(z-u)=\mu$ and $\ind(z-v)=\nu$ lies in
$\cell\gamma\setminus\{u,v\}$ and satisfies $\colr(u,z)=\mu$ and
$\colr(z,v)=\nu$, as required. Put $\omega=\exp(2\pi i/N)$ and
$\psi=\omega^{\ind}$, a multiplicative character of $\mathbb F_q$ of order
$N$, so that for $x\ne0$ the indicator of the event $\ind(x)=\lambda$ equals
$\frac1N\sum_{j<N}\omega^{-j\lambda}\psi^j(x)$. The number of such $z$ is
\[
 \frac1{N^3}\sum_{j,j',j''<N}\omega^{-(j\gamma+j'\mu+j''\nu)}
 \sum_{z\notin\{0,u,v\}}\psi\bigl(z^{\,j}(z-u)^{j'}(z-v)^{j''}\bigr).
\]
The triple $(j,j',j'')=(0,0,0)$ contributes $q-3$. For any other triple the
polynomial $X^j(X-u)^{j'}(X-v)^{j''}$ has at most three distinct roots and
is not an $N$-th power, because $0,u,v$ are distinct and $j,j',j''$ are not
all zero;
Weil's bound for multiplicative character sums
\citep[Thm.~5.41]{lidl-niederreiter1997} then gives at most $2\sqrt q$ for
the sum over all of $\mathbb F_q$, and hence at most $2\sqrt q+3$ for the
sum displayed. The count is therefore at least
\[
 \frac{q-3}{N^3}-2\sqrt q-3,
\]
which is positive at $q=6N^6$ for every $N\ge2$ and increasing in $q$ for
$q>N^6$. Property (b) follows from (a). For $N=2$ the coloring is the Paley
graph on $\mathbb F_q$.
\end{proof}

Using \eqref{eq:size-threshold} with $L=(q-1)/N$, the closed-form structure
has $n_0(N)=N+q$. This is still polynomial in $N$, but only as $N^{O(1)}$
with the exponent supplied by Linnik's theorem, in place of the explicit
$O(N^3\log N)$ of the probabilistic construction. Only the size threshold of
Section~\ref{sec:robustness} changes, and it remains $2^{O(k)}$.

Neither construction brings the cells below $\Theta(N^2)$, and that is not an
accident.
\begin{remark}[How small the cells can be]\label{rem:covering}
Property~(a) of Lemma~\ref{lem:palette} says that for each cell
$\cell\lambda$ the array with rows indexed by $z\in\cell\lambda$, columns by
$u\in\cell\Col\setminus\cell\lambda$ and entries $\colr(u,z)$, all of which
are defined because $u\ne z$, is a \emph{covering array of strength two}:
any two columns take all $N^2$ pairs of values, since a witness supplied
by~(a) for two such columns lies in $\cell\lambda$ and is therefore distinct
from both. This alone forces $L\ge N^2+1$. Fix a column $u$; its $N$ value classes partition the
rows, and every other column must take all $N$ values inside each class, so
each class holds at least $N$ rows and $L\ge N^2$. If $L=N^2$ then each
class holds exactly $N$ rows and each pair of values occurs exactly once, so
the array is an orthogonal array of index one, with $N^2$ rows, $N$ symbols
and strength two; such an array has at most $N+1$ columns, the classical
bound for orthogonal arrays of index unity
\citep{bush1952,hedayat-sloane-stufken1999}, whereas this one has
$(N-1)L=(N-1)N^2$ of them, which is more for every $N\ge2$.

We know of no better lower bound, and Lemma~\ref{lem:palette} is within
a logarithmic factor of this one. The factor is not forced by the
covering-array condition alone: for a prime power $N$, concatenating four
copies of an orthogonal array with $N^2$ rows, $N+1$ columns, $N$ symbols
and strength two --- one exists for every prime power
\citep{hedayat-sloane-stufken1999} --- and giving the columns
distinct label vectors in
$\{1,\ldots,N+1\}^4$, yields a strength-two covering array with only $4N^2$
rows and $(N+1)^4$ columns, more than the $4N^2(N-1)$ that are needed. Such
an array does not by itself give a
coloring, since $\colr$ must also be symmetric and a single $\colr$ must
serve all $N$ cells at once; we have not tried to close the gap, because $L$
enters the results only through the threshold \eqref{eq:size-threshold}.
\end{remark}

\section{Optimality of the interpretations}\label{app:optimality}
Corollary~\ref{cor:detectable} converts detectable colors into a lower
bound, so the larger $\Det$ is, the better. This appendix bounds $|\Det|$
from above, for every set of colors and every type-constant interpretation,
and shows that the interpretations \eqref{eq:omega-inputs} and
\eqref{eq:phi-inputs} of Section~\ref{sec:logical} are optimal in this
sense. Nothing here is needed for Theorems~\ref{main:omega}
and~\ref{main:phi}; the results explain why Theorem~\ref{main:omega}
carries a central binomial coefficient and why the factor of two in
Theorem~\ref{main:phi} cannot be removed by a better interpretation. The
notation is that of Sections~\ref{sec:preservation} and~\ref{sec:logical}.

\subsection{Where a deletion can be seen}\label{app:where}
The lemma below bounds $|\Det|$ from above, for every set of colors and
every type-constant interpretation. It turns the informal account of the
anchors in Section~\ref{ov:structure} into a precise statement, and
Appendices~\ref{app:middle} and~\ref{app:alllabels} combine it with the
form of the clauses of $\Psi_k$ and $\Phi_k$.

Call $\varphi(x,y)=\exists z\,\psi$ a \emph{witness query} if $\psi$ is
quantifier-free and every atom of $\psi$, equalities included, has $z$ as
exactly one of its two arguments and $x$ or $y$ as the other. Both $\Psi_k$
and
$\Phi_k$ are witness queries, and so is the $\Phi'_k$ of
Remark~\ref{rem:iff}.

\begin{lemma}[Where a deletion can be seen]\label{lem:where}
Let $\varphi=\exists z\,\psi$ be a witness query whose relation symbols are
interpreted on $\M$ by type-constant relations. Every color detectable for
$\varphi$ is detected at a pair one of whose endpoints is the hub, and
\begin{enumerate}[nosep,label=(\alph*)]
\item for each anchor $x$, at most one color is detected at $(x,h)$ and at
 most one at $(h,x)$;
\item at most one color is detected at $(h,h)$; at most one at the pairs
 $(h,v)$ with $v$ a cell vertex, taken together; and at most one at the
 pairs $(u,h)$ with $u$ a cell vertex;
\item these three kinds detect at most two colors between them.
\end{enumerate}
Hence at most $2|X|+2$ colors are detectable for $\varphi$ on $\M$.
\end{lemma}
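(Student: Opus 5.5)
The plan is to reduce $\varphi$ to a join over witness profiles, exactly as in Corollary~\ref{cor:closure}, and then reread the table \eqref{eq:response-table} with ``some profile satisfies $\psi$'' in place of the aggregation.

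\textbf{Step 1 (a witness is seen only through its profile).} Every atom of $\psi$ relates $z$ to $x$ or to $y$. An atom $R(z,x)$ reads $\tp(z,u)$, which is determined by $\tp(u,z)$ because converse maps types to types. An equality $z=x$ holds iff $\tp(u,z)$ is $\tpeq$ or some $\tpa{x}{x}$. Hence the truth value of $\psi$ at a witness $z$ for the pair $(u,v)$ is a Boolean function $\hat\psi(\tp(u,z),\tp(z,v))$ of its profile. Consequently $\varphi^{\M\res J}(u,v)$ holds iff some $z\in D_J$ realizes a profile accepted by $\hat\psi$. Deleting cells only removes witnesses, so a detection must turn a true value into a false one. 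This happens only when the deletion removes every witness realizing an accepted profile.

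\textbf{Step 2 (no hub endpoint, no detection).} Go through the rows of \eqref{eq:response-table} whose entries do not depend on $I$. These are the pairs with no hub endpoint: two anchors, an anchor and a cell vertex, or two cell vertices. By Lemma~\ref{lem:palette}, any single surviving cell already realizes every profile that cell witnesses realize at such a pair. The witnesses in $X\cup\{h\}\cup\{u,v\}$ survive every deletion. So the set of realized profiles is unchanged, and nothing is detected at these pairs. This proves the first assertion of the lemma.

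\textbf{Step 3 (at most one color per hub pair).} At a pair with a hub endpoint the only witnesses that can be lost are cell witnesses. For each kind of pair I define a set $G\subseteq\Col$ of \emph{good colors}, namely the colors $\lambda$ whose cell supplies an accepted profile. At $(x,h)$, a witness in $\cell\lambda$ has profile $(\tpl{x},\ctype\lambda)$, so $G_{x,h}=\{\lambda:\hat\psi(\tpl{x},\ctype\lambda)\}$. The pair $(h,x)$ is symmetric. At $(h,h)$ the profile is $(\ctype\lambda,\ctype\lambda)$, giving $G_D$. At $(h,v)$ with $v$ a cell vertex, Lemma~\ref{lem:palette}(b) shows that the realized profiles are all $(\ctype\lambda,\ctype\mu)$ with $\lambda\in I$ and $\mu$ arbitrary. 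This gives $G_L=\{\lambda:\exists\mu\,\hat\psi(\ctype\lambda,\ctype\mu)\}$, and $G_L$ is the same for every cell vertex $v$. Symmetrically $G_R=\{\nu:\exists\mu\,\hat\psi(\ctype\mu,\ctype\nu)\}$ for the pairs $(u,h)$. Deleting $\cell\lambda$ falsifies $\varphi$ at such a pair only if three things hold: the surviving non-cell witnesses accept nothing, $\lambda\in G$, and $G\cap(\Col\setminus\{\lambda\})=\varnothing$. In other words $G=\{\lambda\}$. So each kind of pair detects at most one color, which gives (a) and (b).

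\textbf{Step 4 (the three core kinds together).} Part (c) is the only step that needs more than bookkeeping, and the obstacle is the interaction between the three kinds. I expect the inclusion $G_D\subseteq G_L\cap G_R$ to settle it, taking $\mu=\lambda$ in both definitions. Suppose $(h,h)$ detects $\lambda$. Then $G_D=\{\lambda\}$, so $\lambda\in G_L\cap G_R$. Now suppose $(h,v)$ also detects some color. Then $G_L$ is a singleton containing $\lambda$, so the color it detects is $\lambda$ again; the same argument applies to $(u,h)$. Hence the three kinds detect at most two distinct colors. One must also check that a detection at $(h,v)$ requires the absence of accepting non-cell witnesses uniformly in $v$, but $G_L$ alone already bounds the color, so this does not matter. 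Finally, summing $2$ colors per anchor from (a) and $2$ colors from (c) gives the bound $2|X|+2$.
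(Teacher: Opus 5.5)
Your proposal is correct and follows essentially the same route as the paper. You reduce $\psi$ to a function of the witness profile, and you use the table \eqref{eq:response-table} to see that only pairs with a hub endpoint can lose an accepted profile. At each hub kind, your set $G$ is exactly the set where the paper's $s(\lambda)=1$; for $(h,v)$ and $(u,h)$ it is independent of the cell endpoint. You then settle (c) by the inclusion $G_D\subseteq G_L\cap G_R$, which is precisely the paper's contradiction between $\psi(\ctype{\lambda_0},\ctype{\lambda_0})=1$ and $\bigvee_\mu\psi(\ctype{\lambda_0},\ctype\mu)=0$. Your explicit check that equality atoms are also determined by the profile is a small detail the paper leaves implicit.
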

\begin{proof}
Every atom of $\psi$ is evaluated at $(u,z)$, $(z,u)$, $(z,v)$ or $(v,z)$,
and the types of all four are determined by
$\bigl(\tp(u,z),\tp(z,v)\bigr)$, since converse carries each type onto a
type (Corollary~\ref{cor:closure}). The interpretations being
type-constant, the truth value of $\psi$ at a witness $z$ therefore depends
only on the profile of $z$ at $(u,v)$; write $\psi(\tau_1,\tau_2)$ for it.
For $u,v\in D_J$, separate the cell witnesses from the remaining
witnesses:
\[
 \varphi^{\M\res J}(u,v)=c(u,v)\vee\sigma_J(u,v),\qquad
 \sigma_I(u,v)=\bigvee_{z\in\cell I\setminus\{u,v\}}
 \psi\bigl(\tp(u,z),\tp(z,v)\bigr),
\]
where $c(u,v)$ collects the witnesses in $X\cup\{h\}\cup(\{u,v\}\cap U)$.
These survive every deletion and keep their profiles, so $c$ does not depend
on $J$. As $\sigma_I$ is monotone in $I$, detecting $\lambda$ at $(u,v)$
requires $c(u,v)=0$, $\sigma_{\Col}(u,v)=1$ and
$\sigma_{\Col\setminus\{\lambda\}}(u,v)=0$; only the last two are used
below.

The profiles realized by the cell witnesses were listed in the proof of
Lemma~\ref{lem:local}, and among the entries of \eqref{eq:response-table}
only the five carrying $\ell_I$, $r_I$ or $d_I$ depend on $I$; at every
other pair, $\sigma_I(u,v)=\sigma_{\Col}(u,v)$ for every nonempty $I$,
which proves the first assertion. Those five kinds of pairs are $(x,h)$ and
$(h,x)$ for an anchor $x$, the pair $(h,h)$, and the pairs $(h,v)$ and
$(u,h)$ with $v$ and $u$ cell vertices, and at them the profiles realized
by $z\in\cell I$ are respectively
\[
 (\tpl x,\ctype\lambda),\quad
 (\ctype\lambda,\tpr x),\quad
 (\ctype\lambda,\ctype\lambda),\quad
 (\ctype\lambda,\ctype\mu),\quad
 (\ctype\mu,\ctype\lambda)
 \qquad(\lambda\in I,\ \mu\in\Col),
\]
where the last two follow from Lemma~\ref{lem:palette}(b), which supplies
every $\mu$ on the cell side inside every retained cell. In each case
$\sigma_I(u,v)=\bigvee_{\lambda\in I}s(\lambda)$ for a function $s$ of
$\lambda$ alone, and in the last two cases $s$ depends on neither $v$ nor
$u$ (the term $c$ does depend on them, since the hub and the endpoint are
themselves witnesses there, but $c$ is unaffected by deletions). Hence
$\sigma_{\Col}=1$ and $\sigma_{\Col\setminus\{\lambda\}}=0$ force
$s(\lambda)=1$ and $s(\mu)=0$ for every $\mu\ne\lambda$, which at most one
$\lambda$ can satisfy; in the last two cases that single $\lambda$ serves
every choice of the cell vertex at once. This proves (a) and (b).

For (c), suppose that $(h,h)$ detects $\lambda_0$ while some pair $(h,v)$
detects a different color $\lambda_1$. The first gives
$\psi(\ctype{\lambda_0},\ctype{\lambda_0})=1$, the second gives
$\bigvee_{\mu\in\Col}\psi(\ctype{\lambda_0},\ctype\mu)=0$, and these
contradict each other. Hence $(h,h)$ and the pairs $(h,v)$ detect the same
color if they detect any color at all, and the same holds for $(h,h)$ and
the pairs $(u,h)$ by symmetry. Thus if $(h,h)$ detects a color, the three
kinds together detect exactly one color, and otherwise they detect at most
two. Summing $|X|+|X|+2$ gives the bound.
\end{proof}
The bound is linear in the number of anchors, which is the precise content
of the informal remark in Section~\ref{ov:structure} that the anchors are
the resource that the lower bound consumes. The constant is exact: for
$|\Col|\ge2|X|+2$, take one relation symbol per type, interpreted by the
indicator of that type, and let $\psi$ be the disjunction, over the profiles
$(\ctype{\lambda_0},\ctype{\lambda_1})$ and, for each anchor $x$,
$(\tpl x,\ctype{\alpha_x})$ and $(\ctype{\beta_x},\tpr x)$, where the
$2|X|+2$ colors $\lambda_0,\lambda_1,\alpha_x,\beta_x$ are distinct, of the
symbol of the first type applied to $(x,z)$ conjoined with the symbol of
the second type applied to $(z,y)$. Then $\lambda_0$ is detected at the pairs $(h,v)$,
$\lambda_1$ at the pairs $(u,h)$, and $\alpha_x$ and $\beta_x$ at $(x,h)$
and $(h,x)$, respectively (the witnesses in $X\cup\{h\}\cup\{u,v\}$ realize
none of these profiles, so the term $c$ of the proof vanishes there).

\subsection{Bollob\'as's set-pair inequality}\label{app:bollobas}
The bound for $\Psi_k$ comes from a classical inequality of extremal set
theory, which we state and prove for completeness. A finite family
$(A_\lambda,B_\lambda)_{\lambda\in I}$ of pairs of sets is
\emph{cross-intersecting} if $A_\lambda\cap B_\lambda=\varnothing$ for every
$\lambda\in I$ and $A_\lambda\cap B_\mu\ne\varnothing$ for all
$\lambda\ne\mu$ in $I$.

\begin{lemma}[Bollob\'as's set-pair inequality~\citep{bollobas1965}]\label{lem:bollobas}
Every cross-intersecting family $(A_\lambda,B_\lambda)_{\lambda\in I}$
satisfies
\begin{equation}\label{eq:set-pair}
 \sum_{\lambda\in I}\binom{|A_\lambda|+|B_\lambda|}{|A_\lambda|}^{-1}\ \le\ 1 .
\end{equation}
In particular, if all the sets are subsets of $[k]$, then
$|I|\le\binom{k}{\lfloor k/2\rfloor}$.
\end{lemma}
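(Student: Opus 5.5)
The plan is the classical random-permutation proof of Bollobás. Fix a ground set containing all the $A_\lambda$ and $B_\lambda$; since the family is finite we may take it to be a finite set $V$. Choose a linear order $\pi$ of $V$ uniformly at random. For each $\lambda\in I$ let $E_\lambda$ be the event that every element of $A_\lambda$ precedes every element of $B_\lambda$ in $\pi$. Because $A_\lambda\cap B_\lambda=\varnothing$, the relative order that $\pi$ induces on $A_\lambda\cup B_\lambda$ is uniform among the $(|A_\lambda|+|B_\lambda|)!$ orders. Exactly $|A_\lambda|!\,|B_\lambda|!$ of these put $A_\lambda$ first. Hence
\[
 \Pr[E_\lambda]=\binom{|A_\lambda|+|B_\lambda|}{|A_\lambda|}^{-1}.
\]

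The key step is to show that the events $E_\lambda$ are pairwise disjoint. Suppose $\lambda\ne\mu$ and both $E_\lambda$ and $E_\mu$ occur. Let $a$ be the last element of $A_\lambda\cup A_\mu$ in $\pi$ and $b$ the first element of $B_\lambda\cup B_\mu$. By symmetry, assume $\max A_\lambda\le\max A_\mu$ in the order $\pi$.

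Now pick $w\in A_\lambda\cap B_\mu$, which exists by cross-intersection. Since $w\in B_\mu$ and $E_\mu$ holds, $w$ comes after all of $A_\mu$, so $w>\max A_\mu\ge\max A_\lambda$. But $w\in A_\lambda$ forces $w\le\max A_\lambda$, a contradiction. The symmetric case uses an element of $A_\mu\cap B_\lambda$ in the same way.

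Disjointness gives $\sum_\lambda\Pr[E_\lambda]\le1$, which is \eqref{eq:set-pair}. This disjointness argument is the only nonroutine point, and the case split above should settle it cleanly.

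For the particular bound, assume all sets lie in $[k]$. Since $A_\lambda\cap B_\lambda=\varnothing$, we have $|A_\lambda|+|B_\lambda|\le k$. For $a+b\le k$,
\[
 \binom{a+b}{a}\le\binom{k}{a}\le\binom{k}{\lfloor k/2\rfloor}.
\]
The first inequality holds because $\binom{n}{a}$ is nondecreasing in $n$ for fixed $a$; the second holds because the middle binomial coefficient is the largest. Each term of the sum is therefore at least $\binom{k}{\lfloor k/2\rfloor}^{-1}$, and \eqref{eq:set-pair} yields $|I|\le\binom{k}{\lfloor k/2\rfloor}$.
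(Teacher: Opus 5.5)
Your proof is correct and is the same random-linear-order argument the paper uses, with the same per-summand bound $\binom{k}{\lfloor k/2\rfloor}^{-1}$ for subsets of $[k]$. The one difference is in the disjointness step, and it is cosmetic. The paper avoids a case split: it takes $a\in A_\lambda\cap B_\mu$ and $b\in A_\mu\cap B_\lambda$, which are distinct since $A_\lambda\cap B_\lambda=\varnothing$, and observes that $E_\lambda$ puts $a$ before $b$ while $E_\mu$ puts $b$ before $a$. Your comparison of maxima also works, since cross-intersection makes $A_\lambda$ and $A_\mu$ nonempty, but the elements $a,b$ you introduce at the start of that step are never used.
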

\begin{proof}
Let $\pi$ be a uniformly random linear order of the union of all the sets,
and let $E_\lambda$ be the event that every element of $A_\lambda$ precedes
every element of $B_\lambda$. The two sets being disjoint, $\Pr[E_\lambda]$
is the reciprocal of $\binom{|A_\lambda|+|B_\lambda|}{|A_\lambda|}$. The
events are pairwise disjoint: for $\lambda\ne\mu$ choose
$a\in A_\lambda\cap B_\mu$ and $b\in A_\mu\cap B_\lambda$, both available,
and distinct because $a=b$ would put $a$ in
$A_\lambda\cap B_\lambda=\varnothing$; then $E_\lambda$ puts $a$ before $b$
while $E_\mu$ puts $b$ before $a$. Summing the probabilities
gives~\eqref{eq:set-pair}. For subsets of $[k]$, disjointness gives
$|A_\lambda|+|B_\lambda|\le k$, so every summand is at least
$\binom{k}{\lfloor k/2\rfloor}^{-1}$, and the second claim follows.
\end{proof}
The inequality is usually quoted in the uniform case $|A_\lambda|=a$ and
$|B_\lambda|=b$, where it reads $|I|\le\binom{a+b}{a}$; the proof above is
the standard permutation argument (cf.~\citet{katona1974}), and
\citet{tuza1994} surveys the many uses of the inequality. The bound
$\binom{k}{\lfloor k/2\rfloor}$ is attained by letting $B_\lambda$ range
over all $\lfloor k/2\rfloor$-element subsets of $[k]$ and taking
$A_\lambda=[k]\setminus B_\lambda$.

\subsection{The middle layer is forced}\label{app:middle}
Fix such an interpretation and write
\[
 G(\tau)=\{i\in[k]:R_i(\tau)=0\}
\]
for the set of clauses that a type $\tau$ leaves unsatisfied. A witness with
profile $(\tau_1,\tau_2)$ satisfies clause $i$ exactly when $R_i(\tau_1)=1$
or $R_i(\tau_2)=1$, so it satisfies all $k$ clauses exactly when
\begin{equation}\label{eq:gap-disjoint}
 G(\tau_1)\cap G(\tau_2)=\varnothing.
\end{equation}
Because each clause is a disjunction, the test has become a disjointness
test between two subsets of $[k]$, symmetric in the two edges. This is the
only feature of $\Psi_k$ used below.

\begin{proposition}[The middle layer is forced]\label{prop:middle-layer}
Let $\Col$ be any finite set of colors with $|\Col|\ge2$, let $X$ be any
finite set of anchors, and interpret $R_1,\ldots,R_k$ on the resulting
structure $\M$ by any type-constant relations. The set $\Det$ of colors
detectable for $\Psi_k$ satisfies
\[
 |\Det|\ \le\ \binom{k}{\lfloor k/2\rfloor},
\]
and the interpretation \eqref{eq:omega-inputs} attains this bound.
\end{proposition}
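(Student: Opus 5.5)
The plan is to turn detection into a cross-intersecting family and apply Lemma~\ref{lem:bollobas}. By Lemma~\ref{lem:where}, every detectable color $\lambda$ is detected at a pair with a hub endpoint, and at such a pair the part of $\Psi_k$ that depends on the retained set $I$ has the form $\sigma_I=\bigvee_{\mu\in I}s(\mu)$. Here $s(\mu)$ is the truth of $\psi$ at a profile in which $\ctype\mu$ occupies the hub side and some type $\tau$, fixed by the kind of pair, occupies the other side. By \eqref{eq:gap-disjoint} we have $\psi(\tau_1,\tau_2)=1$ iff $G(\tau_1)\cap G(\tau_2)=\varnothing$. Detection of $\lambda$ at that pair means $s(\lambda)=1$ and $s(\mu)=0$ for all $\mu\ne\lambda$. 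For each detectable $\lambda$ I fix one such pair, its "other side" type $\tau^{(\lambda)}$, and set $A_\lambda=G(\tau^{(\lambda)})$ and $B_\lambda=G(\ctype\lambda)$. Then $A_\lambda\cap B_\lambda=\varnothing$ and $A_\lambda\cap G(\ctype\mu)\ne\varnothing$ for every $\mu\ne\lambda$.

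The five kinds of pairs must be handled so that a single cross-intersecting condition results. For $(x,h)$ the other side is $\tpl x$; for $(h,x)$ it is $\tpr x$. For $(h,h)$ the profile is $(\ctype\lambda,\ctype\lambda)$, so $A_\lambda=G(\ctype\lambda)$. In the cases $(h,v)$ and $(u,h)$, the condition is that $\bigvee_\mu\psi(\ctype\lambda,\ctype\mu)$ equals $1$ for $\lambda$ and $0$ for the others. Here I choose some $\mu_0$ with $\psi(\ctype\lambda,\ctype{\mu_0})=1$ and set $A_\lambda=G(\ctype{\mu_0})$. For every $\mu\ne\lambda$ and every $\nu$, including $\nu=\mu_0$, we have $\psi(\ctype\mu,\ctype\nu)=0$, which gives $G(\ctype{\mu_0})\cap G(\ctype\mu)\ne\varnothing$. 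Since $\psi$ is symmetric in its two edges (the disjointness test is symmetric), one uniform rule works: in every case $A_\lambda$ is a subset of $[k]$ that is disjoint from $B_\lambda=G(\ctype\lambda)$ and meets $G(\ctype\mu)$ for every $\mu\ne\lambda$.

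The family $(A_\lambda,B_\lambda)_{\lambda\in\Det}$ needs the condition $A_\lambda\cap B_\mu\ne\varnothing$ for $\lambda\ne\mu$. That holds by the previous paragraph, because $B_\mu=G(\ctype\mu)$. So the family is cross-intersecting, and all its sets lie in $[k]$. Lemma~\ref{lem:bollobas} then gives $|\Det|\le\binom{k}{\lfloor k/2\rfloor}$. Note that the choice of which pair detects $\lambda$ only affects $A_\lambda$, never $B_\mu$, so distinct colors detected at different kinds of pairs cause no interference. I expect the main obstacle to be checking this last point carefully. I also need the case where $\Det$ is a single color, but that is trivial.

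Attainment follows from \eqref{eq:omega-decoder}. Under \eqref{eq:omega-inputs}, deleting $\cell A$ changes $\Psi_k$ at the surviving pair $(x_A,h)$. Hence all $\binom{k}{\lfloor k/2\rfloor}$ colors of $\Col$ are detectable.
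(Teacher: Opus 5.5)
Your strategy is the paper's: one pair $(A_\lambda,B_\lambda)$ per detected color with $B_\lambda=G(\ctype\lambda)$, then Lemma~\ref{lem:bollobas}. Your anchor cases $(x,h)$ and $(h,x)$ are correct. Your $(h,v)$/$(u,h)$ case is also correct. The paper shows that case is vacuous, but your pair $(G(\ctype{\mu_0}),G(\ctype\lambda))$ satisfies the required conditions.

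The step that fails is the $(h,h)$ case. There a witness $z\in\cell\mu$ has profile $(\ctype\mu,\ctype\mu)$, so the ``other side'' is not a fixed type; it moves with $\mu$. Detection of $\lambda$ at $(h,h)$ therefore says only that $G(\ctype\lambda)=\varnothing$ and that $G(\ctype\mu)\ne\varnothing$ for $\mu\ne\lambda$. It says nothing about $G(\ctype\lambda)\cap G(\ctype\mu)$. With your choice $A_\lambda=G(\ctype\lambda)$ you get $A_\lambda=\varnothing$, which meets no $G(\ctype\mu)$. So your claim that in every case $A_\lambda$ meets $G(\ctype\mu)$ for all $\mu\ne\lambda$ is false here, and since $|\Col|\ge2$ it is false outright. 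No other choice of $A_\lambda$ rescues the family: $B_\lambda=\varnothing$ makes $A_\mu\cap B_\lambda\ne\varnothing$ impossible for any other $\mu\in\Det$, so the family is cross-intersecting only if $\Det=\{\lambda\}$.

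The missing idea is that this case does force $\Det=\{\lambda\}$. Since $G(\ctype\lambda)=\varnothing$, every profile containing $\ctype\lambda$ satisfies \eqref{eq:gap-disjoint}. Hence at each of the five kinds of hub pairs, the retained cell $\cell\lambda$ supplies a successful witness, that is, $s(\lambda)=1$. Detecting any $\mu\ne\lambda$ would require $s(\lambda)=0$, so no other color is detected anywhere. Your own framework gives the same conclusion: a $\mu\ne\lambda$ detected at an anchor pair or at $(h,v)$ would need $A_\mu\cap G(\ctype\lambda)\ne\varnothing$, which is impossible, and $(h,h)$ detects at most one color. Then $|\Det|=1$ and the bound is trivial.

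This is how the paper treats $(h,h)$, concluding $|\Det|\le\max(1,|\Det'|)$. You remarked that the singleton case is trivial but never linked it to $(h,h)$. With that argument added, your proof is complete, and the attainment part is fine as written.
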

\begin{proof}
$\Psi_k$ is a witness query, so Lemma~\ref{lem:where} applies: every
detectable color is detected at one of the five kinds of pairs listed there.
Let $\Det'$ be the set of colors detected at a pair $(x,h)$ or $(h,x)$ with
$x$ an anchor. We first bound $|\Det'|$ and then show that the three kinds
of pairs not involving an anchor add nothing to it.

Fix $\lambda\in\Det'$ together with a pair at which it is detected. If that
pair is $(x,h)$ then, by the proof of Lemma~\ref{lem:where}, the cell
witnesses realize the profiles $(\tpl x,\ctype\mu)$ with $\mu$ retained, so
detection means that \eqref{eq:gap-disjoint} holds for
$(\tpl x,\ctype\lambda)$ and fails for $(\tpl x,\ctype\mu)$ for every color
$\mu\ne\lambda$; put $A_\lambda=G(\tpl x)$. If the pair is $(h,x)$ the
profiles are $(\ctype\mu,\tpr x)$ and the same holds with
$A_\lambda=G(\tpr x)$, because \eqref{eq:gap-disjoint} is symmetric. Either
way, writing $B_\lambda=G(\ctype\lambda)$,
\[
 A_\lambda\cap B_\lambda=\varnothing,\qquad
 A_\lambda\cap B_\mu\ne\varnothing\quad(\mu\in\Col,\ \mu\ne\lambda).
\]
In particular $(A_\lambda,B_\lambda)_{\lambda\in\Det'}$ is a
cross-intersecting family of subsets of $[k]$, and
Lemma~\ref{lem:bollobas} gives $|\Det'|\le\binom{k}{\lfloor k/2\rfloor}$.

Now consider the pairs not involving an anchor. At a pair $(h,v)$ with $v$ a
cell vertex, the proof of Lemma~\ref{lem:where} gives
$s(\lambda)=\bigvee_{\mu\in\Col}\bigl[G(\ctype\lambda)\cap
G(\ctype\mu)=\varnothing\bigr]$, and detection of $\lambda$ requires
$s(\lambda)=1$ and $s(\mu)=0$ for every $\mu\ne\lambda$. Since
\eqref{eq:gap-disjoint} is symmetric, the color witnessing $s(\lambda)=1$
can only be $\lambda$ itself, so $G(\ctype\lambda)=\varnothing$; but then
$s(\mu)=1$ for every $\mu$, a contradiction because $|\Col|\ge2$. Hence no
color is detected at the pairs $(h,v)$ and, by symmetry, none at the pairs
$(u,h)$. At $(h,h)$, a color $\lambda$ is detected only if
$G(\ctype\lambda)=\varnothing$ and $G(\ctype\mu)\ne\varnothing$ for every
$\mu\ne\lambda$. In that case $G(\tau)\cap G(\ctype\lambda)=\varnothing$ for
every type $\tau$, so no anchor pair detects a color other than $\lambda$,
and $\Det=\{\lambda\}$. In all cases
$|\Det|\le\max(1,|\Det'|)\le\binom{k}{\lfloor k/2\rfloor}$. For
\eqref{eq:omega-inputs} we have $G(\tpl{x_A})=[k]\setminus A$ and
$G(\ctype B)=B$, so \eqref{eq:gap-disjoint} reads $B\subseteq A$, and
\eqref{eq:omega-decoder} makes all $\binom{k}{\lfloor k/2\rfloor}$ colors
detectable.
\end{proof}
The interpretation \eqref{eq:omega-inputs} is precisely the extremal family
described after Lemma~\ref{lem:bollobas}: its pairs are
$([k]\setminus A,A)$ for $A\in\Col$. The loss of a factor $\Theta(\sqrt k)$
is therefore inherent to $\Psi_k$ together with this construction
(Section~\ref{ov:formula}). The two extra colors allowed by
Lemma~\ref{lem:where} never materialize for $\Psi_k$, because the symmetry
of \eqref{eq:gap-disjoint} prevents the pairs $(h,v)$ and $(u,h)$ from
detecting anything; they are available only to witness queries with an
asymmetric $\psi$, such as the one exhibited after that lemma. Note that
the proposition bounds the
number of colors that the structure can be made to detect; it is not an
upper bound on the number of compositions that $\Psi_k$ requires, which we
do not determine.

It is instructive to see what the disjunction costs. Detection at an anchor
pair already forces the tuples $\bigl(R_i(\ctype\lambda)\bigr)_{i\le k}$ to
be pairwise distinct over $\Det'$, which by itself gives only
$|\Det'|\le2^k$; it is the disjointness test \eqref{eq:gap-disjoint} that
reduces $2^k$ to the middle layer. The next subsection replaces the
disjunctive clause by one that tests two labels for equality, a test subject
to no such constraint, and thereby recovers all $2^k$ subsets.

\subsection{All labels at once}\label{app:alllabels}
Proposition~\ref{prop:middle-layer} bounded the number of colors that
$\Psi_k$ can be made to detect by the size of the middle layer. For $\Phi_k$
the corresponding bound is $2^k$, so \eqref{eq:phi-inputs} gives up
nothing; the proof is shorter than that of
Proposition~\ref{prop:middle-layer}, because the agreement test
\eqref{eq:eta} is already a test between two sets.

Fix a type-constant interpretation of the $2k$ symbols, over any set of
colors and with any number of anchors, and give a type $\tau$ the
\emph{key set}
\[
 \mathcal S_\tau=\bigl\{A\subseteq[k]:H_A(\tau)=1\bigr\},
\]
where $H_A$ is the relation of \eqref{eq:phi-upper}; being an intersection
of type-constant relations, $H_A$ is type-constant, so the key set is well
defined. Explicitly, $A\in\mathcal S_\tau$ means that $P_i(\tau)=1$ for
every $i\in A$ and $Q_i(\tau)=1$ for every $i\notin A$. A witness with
profile $(\tau_1,\tau_2)$ satisfies clause $i$ exactly when $P_i$ holds on
both types or $Q_i$ does; recording for each $i$ which alternative is taken,
and collecting the indices of the first kind into a set $A$, we see that it
satisfies all $k$ clauses exactly when
\begin{equation}\label{eq:phi-meet}
 \mathcal S_{\tau_1}\cap\mathcal S_{\tau_2}\ne\varnothing .
\end{equation}
This is the counterpart of \eqref{eq:gap-disjoint}, and the contrast between
the two conditions is the whole difference between the two formulas: there
the test is disjointness, here it is non-empty intersection. The key set
also refines the dichotomy between labelled and unlabelled types of
Section~\ref{sec:phi}: it is empty exactly when $\tau$ carries no label, a singleton exactly
when $\tau$ carries a label, and larger when some index has both of its
symbols true, a case that the dichotomy does not cover but that a general
interpretation may use. Finally, the union in \eqref{eq:phi-upper} ranges
over exactly the sets $A$ that can occur in a key set, and
\eqref{eq:phi-meet} explains why: a witness serves exactly when a single
$H_A$ holds on both of its edges.

\begin{proposition}[All labels at once]\label{prop:all-labels}
Let $\Col$ be any finite set of colors with $|\Col|\ge2$, let $X$ be any
finite set of anchors, and interpret $P_1,Q_1,\ldots,P_k,Q_k$ on the
resulting structure $\M$ by any type-constant relations. The set $\Det$ of
colors detectable for $\Phi_k$ satisfies
\[
 |\Det|\ \le\ 2^k,
\]
and the interpretation \eqref{eq:phi-inputs} attains this bound.
\end{proposition}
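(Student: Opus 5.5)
The plan follows the proof of Proposition~\ref{prop:middle-layer}, with the disjointness test \eqref{eq:gap-disjoint} replaced by the non-empty intersection test \eqref{eq:phi-meet}. Since $\Phi_k$ is a witness query, Lemma~\ref{lem:where} applies. Every detectable color is therefore detected at an anchor pair $(x,h)$ or $(h,x)$, at $(h,h)$, or at the pairs $(h,v)$ or $(u,h)$ with $u,v$ cell vertices.

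Let $\Det'$ be the set of colors detected at anchor pairs. If $\lambda$ is detected at $(x,h)$, the proof of Lemma~\ref{lem:where} gives
\[
 \mathcal S_{\tpl x}\cap\mathcal S_{\ctype\lambda}\ne\varnothing,
 \qquad
 \mathcal S_{\tpl x}\cap\mathcal S_{\ctype\mu}=\varnothing\quad(\mu\ne\lambda).
\]
Pick a set $A_\lambda$ in the first intersection. Then $A_\lambda\in\mathcal S_{\ctype\lambda}$, and $A_\lambda\notin\mathcal S_{\ctype\mu}$ for every $\mu\ne\lambda$, because $A_\lambda\in\mathcal S_{\tpl x}$. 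The same argument works at $(h,x)$ with $\tpr x$ in place of $\tpl x$, since \eqref{eq:phi-meet} is symmetric. Hence the map $\lambda\mapsto A_\lambda$ is injective from $\Det'$ into $2^{[k]}$: if $A_\lambda=A_\mu$ with $\mu\ne\lambda$, then $A_\lambda\in\mathcal S_{\ctype\mu}$, a contradiction. This gives $|\Det'|\le2^k$.

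The anchor-free pairs are handled next. At $(h,v)$, the proof of Lemma~\ref{lem:where} gives $s(\lambda)=\bigvee_\mu[\mathcal S_{\ctype\lambda}\cap\mathcal S_{\ctype\mu}\ne\varnothing]$. Detection of $\lambda$ requires $s(\lambda)=1$ and $s(\mu)=0$ for $\mu\ne\lambda$. By symmetry, $s(\lambda)=1$ through some $\mu\ne\lambda$ would also force $s(\mu)=1$, so the witnessing color is $\lambda$ itself. That leaves $\mathcal S_{\ctype\lambda}\ne\varnothing$, and then every $\mathcal S_{\ctype\mu}$ with $\mu\ne\lambda$ must be empty. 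By Lemma~\ref{lem:where}(b) and (c), these pairs together with $(h,h)$ add at most two colors.

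This is the step I expect to be the main obstacle: those extra colors must be shown not to exceed the count. The plan is to show that if $\lambda$ is detected at $(h,h)$, $(h,v)$ or $(u,h)$, then $\mathcal S_{\ctype\mu}=\varnothing$ for every $\mu\ne\lambda$. In that case no anchor pair can detect any $\mu\ne\lambda$, because detection at an anchor pair needs a nonempty $\mathcal S_{\ctype\mu}$. For $(h,h)$, detection means that $\mathcal S_{\ctype\lambda}$ is the only key set among the color types meeting itself, which is just non-emptiness, so the others are empty. It follows that $\Det\subseteq\{\lambda\}$ whenever some anchor-free pair detects $\lambda$, and otherwise $\Det=\Det'$. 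In all cases $|\Det|\le\max(1,|\Det'|)\le2^k$.

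Finally, the interpretation \eqref{eq:phi-inputs} attains the bound by \eqref{eq:phi-decoder}, which makes all $2^k$ colors detectable.
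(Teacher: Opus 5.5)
Your proposal is correct and follows the paper's proof essentially step for step. The two ingredients are the same as in the paper. First, choosing an element of $\mathcal S_{\tpl x}\cap\mathcal S_{\ctype\lambda}$ (or of its $\tpr x$ analogue) gives an injection $\Det'\to2^{[k]}$. Second, detection at $(h,h)$, $(h,v)$ or $(u,h)$ forces $\mathcal S_{\ctype\mu}=\varnothing$ for every $\mu\ne\lambda$; this collapses $\Det$ to $\{\lambda\}$ and yields $|\Det|\le\max(1,|\Det'|)$.

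Your intermediate appeal to Lemma~\ref{lem:where}(b),(c) for ``at most two extra colors'' is superseded by that sharper argument and can be dropped.
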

\begin{proof}
$\Phi_k$ is a witness query, so Lemma~\ref{lem:where} applies: every
detectable color is detected at one of the five kinds of pairs listed there.
As in the proof of Proposition~\ref{prop:middle-layer}, let $\Det'$ be the
set of colors detected at a pair $(x,h)$ or $(h,x)$ with $x$ an anchor.

Fix $\lambda\in\Det'$ together with a pair at which it is detected. If that
pair is $(x,h)$ then, by the proof of Lemma~\ref{lem:where}, the cell
witnesses realize the profiles $(\tpl x,\ctype\mu)$ with $\mu$ retained, so
detection means that \eqref{eq:phi-meet} holds for $(\tpl x,\ctype\lambda)$
and fails for $(\tpl x,\ctype\mu)$ for every $\mu\ne\lambda$; put
$T_\lambda=\mathcal S_{\tpl x}$. If the pair is $(h,x)$ the profiles are
$(\ctype\mu,\tpr x)$ and the same holds with
$T_\lambda=\mathcal S_{\tpr x}$, because \eqref{eq:phi-meet} is symmetric.
Either way
\[
 T_\lambda\cap\mathcal S_{\ctype\lambda}\ne\varnothing,\qquad
 T_\lambda\cap\mathcal S_{\ctype\mu}=\varnothing
 \quad(\mu\in\Col,\ \mu\ne\lambda).
\]
Choosing for each $\lambda\in\Det'$ a set in
$T_\lambda\cap\mathcal S_{\ctype\lambda}$ therefore defines a map
$\Det'\to2^{[k]}$, and it is injective: the set chosen for $\lambda$ lies in
$T_\lambda$, hence in no $\mathcal S_{\ctype\mu}$ with $\mu\ne\lambda$,
whereas the set chosen for $\mu$ lies in $\mathcal S_{\ctype\mu}$. Hence
$|\Det'|\le2^k$.

The pairs not involving an anchor again add nothing. At a pair $(h,v)$ with
$v$ a cell vertex,
$s(\lambda)=\bigvee_{\mu\in\Col}\bigl[\mathcal S_{\ctype\lambda}\cap
\mathcal S_{\ctype\mu}\ne\varnothing\bigr]$, and detection of $\lambda$
requires $s(\lambda)=1$ and $s(\mu)=0$ for every $\mu\ne\lambda$. As
\eqref{eq:phi-meet} is symmetric, the color witnessing $s(\lambda)=1$ can
only be $\lambda$ itself, so $\mathcal S_{\ctype\lambda}\ne\varnothing$,
while $s(\mu)=0$ forces $\mathcal S_{\ctype\mu}=\varnothing$ for every
$\mu\ne\lambda$ (take $\mu$ itself in the join). The same two conditions
are necessary for detection at $(h,h)$ and, by symmetry, at the pairs
$(u,h)$. Under these conditions an anchor pair can detect a color
$\lambda'$ only if $\mathcal S_{\ctype{\lambda'}}\ne\varnothing$, that is,
only if $\lambda'=\lambda$, so $\Det=\{\lambda\}$. In all cases
$|\Det|\le\max(1,|\Det'|)\le2^k$.

For \eqref{eq:phi-inputs}, a type carrying the label $A$ has
$\mathcal S_\tau=\{A\}$, while a type carrying no label has
$c_i(\tau)=\cdb$ for some $i$ and hence $\mathcal S_\tau=\varnothing$. So
\eqref{eq:phi-meet} holds for $(\tpl{x_A},\ctype B)$ exactly when $A=B$, and
fails whenever one of the two types is unlabelled; this is
\eqref{eq:phi-decoder}, and all $2^k$ colors are detectable.
\end{proof}
Together, the two propositions locate the slack that remains in
Theorem~\ref{main:phi}. Corollary~\ref{cor:detectable} turns $|\Det|$
detectable colors into $\lceil|\Det|/2\rceil$ compositions, so on this
structure no interpretation of the symbols of $\Phi_k$ yields a bound above
the $2^{k-1}$ of Theorem~\ref{main:phi}, against the $2^k$ compositions of
\eqref{eq:phi-upper}. Unlike
the factor $\Theta(\sqrt k)$ of Section~\ref{sec:psi}, the factor of two is
therefore not a loss incurred in the choice of labels: it is the support
number $\wid\land=2$ of Lemma~\ref{lem:and-width}, that is, the ability of
a single composition to monitor two colors at once. This value is exact:
the family exhibited in the proof of Lemma~\ref{lem:and-width} needs both
of its colors, so the per-gate bound behind $|K|\le2m$ cannot be improved,
although a particular composition may have a smaller support. Closing the
factor of two would require a different invariant, not a different
interpretation.

\subsection{The equivalence query}\label{app:iff}
One variant of $\Phi_k$ is worth recording, because it looks simpler than
$\Phi_k$ and is not covered by Theorem~\ref{main:phi}.
\begin{remark}[The equivalence query]\label{rem:iff}
Over $k$ symbols let
$\Phi'_k(x,y)=\exists z\bigwedge_{i=1}^k\bigl(R_i(x,z)\leftrightarrow
R_i(z,y)\bigr)$, with $\leftrightarrow$ an abbreviation of fixed size, so
that $|\Phi'_k|=\Theta(k)$. Substituting $P_i:=R_i$ and
$Q_i:=\overline{R_i}$ turns a term for $\Phi_k$ into a term for $\Phi'_k$
with the same number of compositions, so Theorem~\ref{main:phi} does
\emph{not} transfer to $\Phi'_k$; a separate interpretation is needed, and
it costs one label, because an equivalence is also satisfied by two false
bits. Write $\prof(\tau)=\{i:R_i(\tau)=1\}$ for the \emph{profile} of a
type, take $\Col=2^{[k]}$ with an anchor $x_A$ for each \emph{nonempty} $A$,
and put $\prof(\tpl{x_A})=A$, $\prof(\ctype B)=B$, $\prof(\tpeq)=\varnothing$,
$\prof(\tpa{x_A}{x_B})=[k]\setminus B$ and $\prof(\tpr{x_B})=\varnothing$.
At $(x_A,h)$ a cell witness in $\cell B$ succeeds iff $B=A$; the hub shows
$\varnothing$ on its loop, which is no anchor's label; and an anchor $x_B$
is seen from $x_A$ with profile $[k]\setminus B$ while showing $B$ to the
hub, and no set equals its complement. So the $2^k-1$ nonempty colors are
detectable, $2m\ge2^k-1$, and hence $m\ge2^{k-1}$ again because $m$ is an
integer. Unlike $\Phi_k$, the
formula $\Phi'_k$ is not positive.
\end{remark}

\section{Further kernels and scalar interactions}\label{app:kernels}
This appendix collects three remarks on the support number that are not
needed for the theorems of Section~\ref{sec:matrix}: the upper bound of
Proposition~\ref{prop:semiring-width} is attained over the common tropical
semirings, wider scalars reduce the number of products by exactly a factor
of their width, and kernels of unbounded support number exist. The notation is that
of Sections~\ref{sec:responses} and~\ref{sec:matrix}.

\subsection{The support number of tropical multiplication}\label{app:tropical}
The upper bound $\wid\otimes\le3\breadth$ of
Proposition~\ref{prop:semiring-width} is attained whenever the semiring has
room for it. By the proof of that proposition, a support has to fix the
three scalar values $\bigvee_{\lambda\in J}p_\lambda$,
$\bigvee_{\lambda\in J}q_\lambda$ and
$\bigvee_{\lambda\in J}p_\lambda\otimes q_\lambda$. Suppose that
$\mathbb S$ has breadth one and contains elements $0_{\mathbb S}<a<b$ with
$a\otimes a\ne0_{\mathbb S}$, where $a\le b$ means $a\vee b=b$ (for
min-plus this is the reverse of the numerical order); the letters $a$ and
$b$ are local to this paragraph. Max-plus, min-plus and max-min, tabulated in
Section~\ref{semi:weighted-section}, all qualify, with $(a,b)=(0,1)$,
$(1,0)$ and $(\tfrac12,1)$ respectively. Take three colors with
\[
 (p_{\lambda_1},q_{\lambda_1})=(b,0_{\mathbb S}),\qquad
 (p_{\lambda_2},q_{\lambda_2})=(0_{\mathbb S},b),\qquad
 (p_{\lambda_3},q_{\lambda_3})=(a,a).
\]
The two marginal joins are both $b$ because $0_{\mathbb S}<a<b$, and the
paired response is $a\otimes a$ because $0_{\mathbb S}$ annihilates. Only
$\lambda_1$ attains the first marginal join, because
$0_{\mathbb S}\vee a=a\ne b$; only $\lambda_2$ attains the second; and only
$\lambda_3$ makes the paired response differ from $0_{\mathbb S}$. Every
support therefore contains all three colors, and together with the
proposition this gives $\wid\otimes=3$. The three responses are independent
here: $\{\lambda_3\}$ fixes the paired response but neither marginal join,
while $\{\lambda_1,\lambda_2\}$ fixes both marginal joins but not the paired
response. Over these semirings, the factor $3\breadth$ in
Theorems~\ref{thm:weighted} and~\ref{main:matrix} is thus the support
number itself and not slack in the proposition; closing it would require a
different invariant, as with the factor of two in Section~\ref{sec:phi}.

\subsection{Vector-valued scalars, and other scalar interactions}\label{app:scalars}
The same invariant also separates scalar capacity from the number of
aggregations. Take $S=\bool^b$, the $b$-bit scalars, with coordinatewise
Boolean operations; embed an input bit $\alpha$ diagonally as
$(\alpha,\ldots,\alpha)$, and require the same encoding of the output. A
single product then performs $b$ Boolean products, and pointwise functions
may mix coordinates freely.

Widening the scalars gains a factor of $b$, and nothing more.
\begin{proposition}[Bits per scalar against product count]\label{prop:width}
For $k,b\ge1$, the minimum number $m$ of these products computing $\Phi_k$
satisfies
\[
 \frac{2^{k-1}}b\ \le\ m\ \le\ \left\lceil\frac{2^k}b\right\rceil.
\]
The bounds hold with or without sharing, and at every fixed size
$n\ge n_0(2^k)$ for size-specific targets.
\end{proposition}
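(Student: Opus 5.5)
Both inequalities come from pieces already in hand: the upper bound packs the $2^k$ compositions of \eqref{eq:phi-upper} into $b$ coordinates, and the lower bound bounds the support number of the coordinatewise conjunction on $\bool^b$ and then applies Corollary~\ref{cor:detectable} and Proposition~\ref{prop:fixed-test} with the interpretation \eqref{eq:phi-inputs}.

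\emph{Upper bound.} Split the $2^k$ subsets $A\subseteq[k]$ into $\lceil 2^k/b\rceil$ blocks of at most $b$ sets each. Fix a block $A_1,\ldots,A_b$, padding with repeats if the last block is short. From the diagonally encoded inputs, one pointwise map builds the array whose $j$-th coordinate is $H_{A_j}$; this is allowed because pointwise maps may mix coordinates. One product of this array with itself then has $j$-th coordinate $H_{A_j};H_{A_j}$. A final pointwise map ORs all coordinates of all block products and writes the result diagonally. This uses $\lceil2^k/b\rceil$ products, and it needs no sharing beyond the reuse of a single operand in each square.

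\emph{Lower bound.} The kernel is coordinatewise $\land$ on $\bool^b$, and the join is coordinatewise $\lor$. The three responses \eqref{eq:responses} then split coordinatewise:
\[
 \ell_J(s)_j=s_j\land\bigvee_{\lambda\in J}(p_\lambda)_j,\qquad
 r_J(s)_j=s_j\land\bigvee_{\lambda\in J}(q_\lambda)_j,\qquad
 (d_J)_j=\bigvee_{\lambda\in J}(p_\lambda\land q_\lambda)_j .
\]
For each coordinate $j$ this is exactly the Boolean family of Lemma~\ref{lem:and-width}, which has a support of at most two colors. The union over the $b$ coordinates supports the whole family, so $\wid\odot\le2b$. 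A computation with $m$ products therefore has $\wid\Comp\le2bm$.

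The inputs are type-constant when the Boolean interpretation \eqref{eq:phi-inputs} is embedded diagonally, and the output is read off through the diagonal encoding, i.e.\ by two fixed distinct elements of $S$. By \eqref{eq:phi-decoder}, all $2^k$ colors are detectable. Corollary~\ref{cor:detectable} then gives $2bm\ge2^k$, that is, $m\ge2^{k-1}/b$. This holds with sharing, because Theorem~\ref{thm:preservation} charges each gate once. For the fixed-size version, pad the domain with inert anchors carrying the code $\cdb$, exactly as in Theorem~\ref{thm:fixed-size}, and invoke Proposition~\ref{prop:fixed-test} with Theorem~\ref{thm:transport}.

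The main point to verify is that the support bound is $2b$ and not something larger. A support must fix the entire response functions $\ell_J$ and $r_J$ on all of $S$, not just finitely many scalar values. Coordinatewise distributivity reduces these functions to their marginal joins $\bigvee_{\lambda\in J}p_\lambda$ and $\bigvee_{\lambda\in J}q_\lambda$, taken coordinatewise, so fixing those joins suffices. Once this reduction is checked, the rest is bookkeeping.
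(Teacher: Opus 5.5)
Your proposal is correct and follows the paper's proof. The lower bound takes a support of at most two colors per coordinate of each product, so their union of at most $2b$ colors preserves everything, coordinate-mixing pointwise maps included, and the detectable labels of \eqref{eq:phi-decoder} then force $2bm\ge2^k$, by restriction or by recoloring at a fixed size; the upper bound packs the $H_A$ of each group of at most $b$ subsets into separate coordinates, so that one product yields all the $H_A;H_A$ of that group (your remark about sharing is unnecessary, since in a term the operand of each square is simply written twice at no cost in products).
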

\begin{proof}
Each coordinate of a reference product has a support of at most two colors, so their
union has size at most $2b$, and the same support preserves the whole tuple,
including through pointwise maps that mix coordinates. The read-off
identity \eqref{eq:phi-decoder} therefore forces $2bm\ge2^k$, by restriction
or by
recoloring at a fixed size, which is the lower bound.

For the upper bound, partition the $2^k$ subsets $A$ into groups of size at
most $b$. A pointwise map puts the values $H_A$ of \eqref{eq:phi-upper} into
separate coordinates, with zeros in the unused ones, and one product then
computes all the $H_A;H_A$ of a group. Pointwise disjunction merges the
coordinates and the groups and copies the result diagonally. Product values
need not be shared. The target here is the encoded Boolean query $\Phi_k$,
not $F_k$ on arbitrary $\bool^b$-valued inputs.
\end{proof}

\paragraph{Beyond semiring multiplication.}
For $S=[0,1]$ with join $\max$, any kernel that is monotone in each
argument, in a direction that does not depend on the value of the other
argument, has support number at most three: choose one color attaining the
appropriate extreme of $p_\lambda$, which serves the entire left response,
one color attaining the appropriate extreme of $q_\lambda$, which serves the
entire right response, and one color maximizing the paired response. The
two arguments may have different directions of monotonicity; what the
uniformity provides is that a single color serves the whole of
$\ell_{\Col}$, and a single color the whole of $r_{\Col}$. Consequently,
the Boolean queries require at least $N(\varphi)/3$ such aggregations, even
with different kernels at different gates and arbitrary uniform pointwise
numerical functions. This follows from the same preservation theorem and
uses no associative multiplication. One example is the comparison kernel,
which is $1$ if $p\le q$ and $0$ otherwise: it is nonincreasing in $p$ and
nondecreasing in $q$, regardless of the other argument.

Unrestricted kernels can behave differently. On the nonnegative integers
with join $\max$, the kernel given by $p\odot q=1$ if $p\mathbin{\&}q\ne0$
and $p\odot q=0$ otherwise, where $\&$ denotes bitwise AND, has unbounded
support number, since distinct one-bit masks give arbitrarily large
irredundant left responses. For $\Phi_k$, a free pointwise map encodes at
each entry $(u,v)$ the set of subsets $A\subseteq[k]$ with $(u,v)\in H_A$,
where $H_A$ is the intersection of \eqref{eq:phi-upper}, as the bit mask
$\sum_{A:(u,v)\in H_A}2^{\num A}$ with $\num A=\sum_{i\in A}2^{\,i-1}$.
Distinct subsets occupy distinct bit positions, so two masks have a nonzero
bitwise AND exactly when some $H_A$ contains both entries; a single
aggregation with this kernel therefore computes
$\bigcup_{A\subseteq[k]}H_A;H_A$, which is $\Phi_k$ by
\eqref{eq:phi-upper}. On the structures of Section~\ref{sec:phi}, the mask
is $2^{\num A}$ at an entry whose type carries the label $A$ and $0$ at an
unlabelled entry. Unlimited scalar precision is therefore not ruled out by
itself: what a finite support number limits is the way in which encoded
states can interact through a witness.

\end{document}